\begin{document}

\def\Ber{\mathsf{Ber}}
\def\BE{\mathsf{BE}}
\def\hwt{\mathsf{wt}}
\def\sym{\mathsf{sym}}
\def\ends{\mathsf{ends}}
\def\middle{\mathsf{mid}}
\def\Bin{\mathsf{Bin}}
\def\Ber{\mathsf{Ber}}
\def\symdiff{\bigtriangleup}
\def\sK{\ol{K}}

\DeclarePairedDelimiterX{\infdivx}[2]{(}{)}{%
  #1\;\delimsize\|\;#2%
}
\newcommand{\KL}{D\infdivx}

\newcommand{\summ}[1]{%
  \begingroup\noexpandarg
  \StrLen{#1}[\temp]%
  1^\top
  \ifnum\temp>1
    (#1)%
  \else
    #1%
  \fi 
  \endgroup
}

\def\toggleComments{1}
\ifnum\toggleComments=1

\newcommand{\blue}[1]{{\color{blue} {#1}}}
\newcommand{\chin}[1]{\footnote{{\bf \color{cyan}Chin}: {#1}}}
\newcommand{\manu}[1]{\footnote{{\bf \color{purple}Manu}: {#1}}}
\newcommand{\peter}[1]{\footnote{{\bf \color{green}Peter}: {#1}}}
\newcommand{\harm}[1]{\footnote{{\bf \color{orange}Harm}: {#1}}}
\fi
\ifnum\toggleComments=0
\newcommand{\chin}[1]{}
\newcommand{\manu}[1]{}
\newcommand{\peter}[1]{}
\newcommand{\harm}[1]{}
\fi

\def\epsilon{\varepsilon}
\newcommand{\e}{\epsilon}

\title{Pseudorandomness, symmetry, smoothing: I}

\author{%
Harm Derksen\footnote{Partially supported by NSF grant DMS 2147769.} \\
Northeastern University\\
ha.derksen@northeastern.edu
\and
Peter Ivanov\footnote{Supported by NSF grant CCF-2114116.}\\
Northeastern University\\
ivanov.p@northeastern.edu
\and
Chin Ho Lee \\
North Carolina State University\\
chinho.lee@ncsu.edu	
\and
Emanuele Viola\footnote{Supported by NSF grant CCF-2114116.} \\
Northeastern University\\
viola@ccs.neu.edu
}

\maketitle

\begin{abstract}

We prove several new results about bounded uniform and small-bias distributions.  A main message is that, small-bias, even perturbed with noise, does not fool several classes of tests better than bounded uniformity.  We prove this for threshold tests, small-space algorithms, and small-depth circuits.  In particular, we obtain small-bias distributions that
\begin{itemize}
  \item achieve an optimal lower bound on their statistical distance to any bounded-uniform distribution.  This closes a line of research initiated by Alon, Goldreich, and Mansour in 2003, and improves on a result by O'Donnell and Zhao.
  \item have heavier tail mass than the uniform distribution.  This answers a question posed by several researchers including Bun and Steinke.
  \item rule out a popular paradigm for constructing pseudorandom generators, originating in a 1989 work by Ajtai and Wigderson.  This again answers a question raised by several researchers.  For branching programs, our result matches a bound by Forbes and Kelley.
\end{itemize}
Our small-bias distributions above are symmetric.  We show that the xor of any two symmetric small-bias distributions fools any bounded function.  Hence our examples cannot be extended to the xor of two small-bias distributions, another popular paradigm whose power remains unknown.
We also generalize and simplify the proof of a result of Bazzi. 
\end{abstract}

\thispagestyle{empty}
\newpage

\section{Introduction}
A distribution $D$ over $\{-1,1\}^n$ is \emph{$(\eps,k)$-biased} if for every $S \subseteq [n]$ of size $0 < |S| \le k$ we have $|\mathbb{E}[D^S]| \le \eps$, where $D^S := \prod_{i\in S} D_i$.  If $\eps = 0$ then any $k$ bits are uniform and $D$ is called \emph{$k$-wise uniform}; if $k = n$ then $D$ is called \emph{$\eps$-biased}.  The study of these distributions permeates and precedes theoretical computer science.  They were studied already in the 40's \cite{MR22821}, are closely related to universal hash functions \cite{CaW79}, error-correcting codes (see e.g.~\cite{DBLP:journals/eccc/HatamiH23}), and in their modern guise were introduced in the works \cite{ABI86,ChorGoHaFrRuSm85,NaN90}.

$(\eps,k)$-biased distributions behave like the uniform distribution in that several prominent tests cannot distinguish the two distributions.

\begin{definition} A test $f\colon \pmo^n \to [-1,1]$ is $\delta$-fooled by a distribution $D$ we have $|\mathbb{E}[f(U)] - \mathbb{E}[f(D)]|\le \delta$, where $U$ is the uniform distribution.
\end{definition}

At the same time, $(\eps,k)$-biased distributions can be sampled efficiently from a short seed. The combination of these facts enables many applications in algorithm design, coding theory, pseudorandomness, and more. For background we refer the reader to \cite{Vadhan12,DBLP:journals/eccc/HatamiH23,moti}, where bounds on seed lengths are also discussed.

To generate $k$-wise uniformity, seed length $ck\log n$ is sufficient, and necessary for $k<n^c$; while for $\e$-biased seed length $c\log(n/\e)$ is sufficient and necessary. In this paper, as in \cite{moti}, every occurrence of ``$c$" denotes a possibly different positive real number.  The notation ``$c_x$" for parameter(s) $x$ indicates that this number may depend on $x$ and only on $x$. Replacing ``$c$'' with $O(1)$ everywhere is consistent with one common interpretation of the big-Oh notation.

It is known that any $\e$-biased distribution is close to a $k$-wise uniform distribution in  total variation (a.k.a.~statistical, $L_1$, etc.) distance~\cite{AGM03,AlonAKMRX07,OZ18}.

\begin{lemma}[Theorem 1.1 \cite{OZ18}] \label{lemma:OZ} 
  Any $(\eps,k)$-biased distribution is $((\frac{e^3n}{k})^{k/2} \eps)$-close to a $k$-wise uniform distribution in total variation distance.
\end{lemma}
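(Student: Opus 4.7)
The plan is to construct a $k$-wise uniform distribution $Q$ close to $D$ by correcting the ``Fourier profile'' of $D$. A distribution $Q$ on $\{-1,1\}^n$ is $k$-wise uniform iff $\mathbb{E}[Q^S] = 0$ for every $S$ with $0 < |S| \le k$, and the $(\eps,k)$-biased hypothesis tells us that $D$ already comes close to this condition: $|\mathbb{E}[D^S]| \le \eps$ on those levels.

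First I would define the natural low-degree correction
\[
g(x) \;=\; 2^{-n} \sum_{0 < |S| \le k} \mathbb{E}[D^S] \cdot x^S,
\]
a polynomial of degree at most $k$ with $\sum_x g(x) = 0$, where $x^S := \prod_{i \in S} x_i$. The signed measure $D' := D - g$ then satisfies $\sum_x D'(x)\, x^S = 0$ for $0 < |S| \le k$ and has total mass $1$, i.e.\ it already has the Fourier profile of a $k$-wise uniform distribution. By Parseval on the Boolean cube together with Cauchy--Schwarz,
\[
\sum_x |g(x)| \;\le\; \sqrt{2^n \sum_x g(x)^2} \;=\; \sqrt{\sum_{0 < |S| \le k} \mathbb{E}[D^S]^2} \;\le\; \eps \sqrt{\binom{n}{\le k}} \;\le\; \eps \Bigl(\tfrac{en}{k}\Bigr)^{k/2},
\]
so $D'$ is within $\tfrac{1}{2}\eps(en/k)^{k/2}$ of $D$ in total variation.

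The main obstacle, and the heart of the proof, is that $D'$ need not be a valid probability distribution: it may take negative values at some $x$. The remaining task is to add a second correction $h$ whose Fourier support lies entirely on levels $|S| > k$ (so that adding $h$ preserves both the $k$-wise uniform profile and the total mass), such that $Q := D' + h \ge 0$ pointwise and $\|h\|_1$ is small. Phrased as a linear program, its dual is a search for a low-degree witness. A naive primal-feasible choice --- mixing $D'$ with enough uniform mass to neutralize the negative points --- already yields a bound of order $\binom{n}{\le k}\eps$, which is off by a factor $\sqrt{\binom{n}{\le k}}$ from what is claimed. The tight $(e^3 n/k)^{k/2}\eps$ bound, larger than the pure $L_2$ bound above by an extra factor of $e^k$, should follow from a more careful construction of $h$: for instance, covering the negative part of $D'$ by compensating mass concentrated near its support via a well-chosen $k$-wise uniform ``bump,'' or by an iterative refinement that exploits that, by hypercontractivity, the set where $|g|$ dominates $D$ is very small. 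I expect this non-negativity repair to be the main technical challenge, and the source of the $e^k$ loss relative to the straightforward Parseval bound.
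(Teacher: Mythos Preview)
The paper does not prove this lemma; it is quoted as Theorem~1.1 of \cite{OZ18} and used as a black box, so there is no in-paper argument to compare against. That said, your proposal is not a proof but a plan, and the plan has a real gap exactly where you say it does.

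Your first step is fine: subtracting the low-degree correction $g$ produces a signed measure $D'$ with the correct $k$-wise-uniform Fourier profile, and your $L_1$ bound on $g$ via Cauchy--Schwarz and Parseval is correct (indeed it gives $(en/k)^{k/2}\eps$, which is \emph{better} than the stated bound). The gap is the ``non-negativity repair.'' You list several possible attacks --- mix in uniform mass, pass to the LP dual, use hypercontractivity to control the negative set --- but execute none of them, and you yourself flag this step as ``the main technical challenge.'' A proposal that ends with ``I expect this to be the hard part'' has not closed the argument; the entire content of the lemma lies in that step.

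For orientation, the proof in \cite{OZ18} does not work on the primal side at all. It uses LP duality to identify the distance from $D$ to the $k$-wise-uniform polytope with a maximum over bounded test functions whose relevant Fourier content sits on levels $1,\dots,k$, and then bounds $\E_D[\phi]$ for any such $\phi$ directly from the $(\eps,k)$-bias hypothesis together with a Cauchy--Schwarz / level-weight estimate. This dual route sidesteps the non-negativity issue entirely --- there is no $Q$ to construct --- and is where the $(e^3 n/k)^{k/2}\eps$ bound comes from. You already note that ``its dual is a search for a low-degree witness''; following that remark to its conclusion is the shortest path to an actual proof, and is what \cite{OZ18} does.
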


Hence, any property enjoyed by $k$-wise uniform distributions is inherited by distributions with bias $n^{-ck}$.  Unsurprisingly, that is precisely the bias for which the seed length of the latter matches that of the former, as discussed above.  The question arises as to which tests can be fooled with a larger bias, which would result in shorter seed length.

\begin{question} \label{question}
Which tests can distinguish some $\eps$-biased distribution from every $k$-wise uniform distribution, for an $\eps$ suitably larger than the bound in \Cref{lemma:OZ}?
\end{question}

For a concrete setting, one can think e.g.~$k = 10 \log n$ and $\eps = n^{-100}$, or any $\eps = n^{-o(\log n)}$.

\Cref{question} is a computational version of the classic question of the statistical distance between $\e$-biased and $k$-wise uniform distributions, studied in \cite{AGM03,AlonAKMRX07,OZ18}. \Cref{lemma:OZ} shows that for small $\e$, \emph{no test}, efficient or not, can distinguish the distributions.  Those works also give lower bounds in various ranges of parameters, which means that in those ranges, there exists some $\eps$-biased distribution such that every $k$-wise uniform distribution can be distinguished from it by some test.  However, the arguments in these papers either do not apply to the tests we consider below or for bias $\eps$ larger than $n^{-k}$, which is the regime of interest here.  These works are discussed more below.

A trivial test which cannot distinguish between small-bias and $k$-wise uniform distributions in the sense of \Cref{question} is parity.  By definition, the bias of parity is at most $\e$, which is $\e$-close to the bias of the uniform distribution, which is $0$.  And the uniform distribution is in particular $k$-wise uniform.

However, the answer to \Cref{question} was not known for various other classes of tests of interest.
To our knowledge, it was not known for symmetric or even threshold tests (a.k.a.~tail, deviation, concentration bounds, etc.).
In particular, Bun and Steinke posed the following question in \cite{BS15}.

\begin{quote}
  In this work, we focused on understanding the limits of $k$-wise independent distributions. Gopalan et al.~\cite{GopalanKM15} gave a much more sophisticated generator with nearly optimal seed length. But could simple, natural pseudorandom distributions, such as small-bias spaces, give strong tail bounds themselves?
\end{quote}
More concretely, the following question has been asked by several researchers.
We use $\summ x$ to denote the sum $\sum_{i=1}^n x_i$ of $x \in \pmo^n$, and $B$ to denote the binomial distribution $\summ U$.

\begin{question}\label{question:BS}
  Is it true that for every $a$, there exists $b$ such that $\Pr[\summ D \ge t] \le \Pr[B \ge t] + 1/n^a$ for every $n^{-b}$-biased distribution $D$ and $t = \sqrt{n \log n}$?
\end{question}

The answer to \Cref{question:BS} was known to be negative for $t=0$ (corresponding to the majority function): One can take $D$ to be uniform on strings of weight 0 modulo $3$, see \cite{Bazzi-pseudobinomial}.  The answer was also known to be positive when $t > n^{1/2+\eps}$ for a constant $\eps$ because all the relevant quantities are small enough; formally combine \Cref{cor:tail-bound} with \Cref{lemma:OZ}.  But for other values of $t$ closer to $\sqrt{n}$ the answer was less clear.

\paragraph{Smoothed tests and distributions.}
A main focus of this paper is on \emph{smoothed tests} and \emph{smoothed distributions}, which are tests and distributions perturbed by \emph{noise}.

\begin{definition}
$N_\rho$ is the noise distribution on $\pmo^n$, where each bit is independently set to uniform with probability $1-\rho$ and $1$ otherwise.  We write $D \cdot N_\rho$ for the coordinate-wise product of $D$ and $N_\rho$, which corresponds to bit-wise xor over $\{0,1\}$.
Note $x \cdot N_1 = x$ and $x \cdot N_0 = U$, for any $x$.

\end{definition}

For a test $f$ and a distribution $D$, a \emph{smoothed distribution} is defined as $D \cdot N_\rho$ and a \emph{smoothed test} is defined as $T_\rho f (x) := \E[f(x \cdot N_\rho)]$, for some retention rate $\rho \in [0,1]$.
Note that $\E[f(D\cdot N_\rho)] = \E[T_\rho f(D)]$ and we will use both viewpoints interchangeably throughout.

Note that smoothing does not increase the distance of any two distributions, with respect to any class of tests which is closed under shifts.  So distinguishing smoothed distributions is at least as hard as distinguishing the corresponding (non-smooth) distributions.
A main motivation for considering smoothed tests and distributions comes from several paradigms for constructing pseudorandom generators (PRGs) that combine $(\eps,k)$-biased distributions in different
ways.  These paradigms have been proposed in the last 15 years or so and are discussed next; for additional background, we refer the readers to the recent monograph \cite{DBLP:journals/eccc/HatamiH23}.

\paragraph{Small-bias plus (pseudorandom) noise.}
This paradigm goes back to Ajtai and Wigderson \cite{AjtaiW89}, but saw no further work until it was revived by Gopalan, Meka, Reingold, Trevisan, and Vadhan \cite{GopalanMRTV12}. It has been used in a number
of subsequent works including \cite{GavinskyLS12,ReingoldSV13,SVW17,HLV-bipnfp,LeeV-rop,CHRT18,ForbesKelley-2018,MRT19,Lee19,DBLP:conf/coco/DoronHH20,CLTW23}.
In particular, this paradigm gave rise to PRGs with near-optimal seed lengths for several well-studied classes of tests, including combinatorial rectangles~\cite{GopalanMRTV12,Lee19} and read-once AC$^0$ formulas~\cite{DHH19,DBLP:conf/coco/DoronHH20}.

The Ajtai--Wigderson paradigm comprises several steps.
A main step in this paradigm requires fooling (the average of) a random restriction of tests with a pseudorandom distribution.
(This can also be viewed as constructing a fractional pseudorandom generator~\cite{CHHL19,CHLT19,CGLSS21}.)
The works by Haramaty, Lee, and Viola \cite{LV-sum,HLV-bipnfp,LeeV-rop,Lee19}
have reinterpreted the notion of ``random restrictions'' as \emph{perturbing or xor-ing a small-bias distribution with noise}.
The perspective of noise has proved influential and is maintained in several following works, including the present one.

This perspective of noise has been used to prove a variety of new results in areas ranging from communication complexity \cite{HLV-bipnfp}, coding theory \cite{HLV-bipnfp,DBLP:conf/focs/SilbakKS19}, Turing machines \cite{viola-tm}, and one-way small-space computation \cite{ForbesKelley-2018,MRT19}.

In particular, building on the proof in \cite{HLV-bipnfp}, Forbes and Kelley \cite{ForbesKelley-2018}
significantly improved the parameters in \cite{HLV-bipnfp} and obtained pseudorandom generators with seed length $c\log^3 n$ that fool one-way logspace computation.  The main new feature of their result over the classic generator by Nisan \cite{Nis92} is that the order in which the input is read by the computation is arbitrary.  A main step in the result in \cite{ForbesKelley-2018} is showing that $c\log n$-wise uniformity xor-ed with noise fools logspace.  After their work, a natural question, asked independently by several researchers, is whether one can improve the seed length to $o(\log^3 n)$ by replacing $c \log n$-wise uniformity with polynomial bias.

\begin{question} \label{question:space}
  Does $1/\poly(n)$-bias plus noise fool one-way logspace? 
\end{question}

A positive answer would give improved generators for small-space
algorithms from $c\log^3 n $ to $c\log^2 n$, bringing the parameters of the result in \cite{ForbesKelley-2018},
which works in any order, in alignment with the classic fixed-order result of Nisan \cite{Nis92}.

In fact, the answer to \Cref{question:space} was not known even for the special case of one-way logspace algorithms which compute \emph{symmetric tests}; 
or for the even more restricted class of \emph{threshold tests} that have the form $\Id(\summ{x} \ge t)$, for any bias larger than $n^{-\log n}$.

\paragraph{Xor-ing small-bias distributions.}
Starting with \cite{BoV-gen}, researchers
have considered the bit-wise xor of several independent
copies of small-bias distributions.  The work \cite{LV-sum} draws a connection with the previous paradigm, showing that for a \emph{special class} of small-bias distributions, the paradigms are equivalent.

These distributions -- the xor of several small-bias distributions -- appear to
be significantly more powerful than a single small-bias distribution, while retaining
a modest seed length.  We refer to
\cite{ODonnell14, corr-survey, DBLP:journals/eccc/HatamiH23, moti} for background.

Despite several attempts \cite{BDVY08,MZ09,LV-sum}, no definitive counterexample to this paradigm has been bound; its power remains unknown.

\subsection{Our results}

In this work we prove several new results on $(\e,k)$-biased distributions.  A main message is that, for several natural classes of tests, \emph{small-bias distributions are no better than bounded uniformity}, i.e., we provide new information about \Cref{question}, and answer \Cref{question:BS,question:space}.

To set the stage, we start with showing that $k$-wise uniformity plus noise does fool symmetric functions with error $2^{-ck}$.  Note that noise is necessary, for parity is not fooled even by $(n-1)$-wise uniformity.  And even for threshold tests, the error would be polynomial \cite{DGJSV-bifh,benjamini2012kwise} rather than exponential in $k$.

\begin{theorem} \label{thm:bipnvssym}
Let $D$ be a distribution on $\pmo^n$ that is either
\begin{enumerate}[label=(\roman*),itemsep=2pt,parsep=0pt,topsep=2pt]
  \item \label{item:bipnvssym} $(2k)$-wise uniform, or
  \item \label{item:sbpnvssym} $(ck/n)^{4k}$-biased.
\end{enumerate}
Let $f\colon\pmo^n \to [-1,1]$ be symmetric.  Then $\abs{\E[f(U)] - \E[f(D \cdot N_\rho)] } \le c \cdot (e\rho)^{k/2}$.
\end{theorem}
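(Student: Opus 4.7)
For case (ii), reduce to case (i) via Lemma~\ref{lemma:OZ} with $k'=2k$: any $(ck/n)^{4k}$-biased distribution is within total variation distance $(e^3c^4/2)^k(k/n)^{3k}$ of some $(2k)$-wise uniform $D'$, which, for an appropriate absolute constant $c$, is dominated by $(e\rho)^{k/2}$ in all parameter ranges where the target bound is meaningful. Since $f$ is bounded and smoothing does not increase total variation, the triangle inequality reduces case (ii) to case (i) applied to $D'$.

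For case (i), I plan to work in the symmetric Fourier (Krawtchouk) basis. Symmetry of $f$ gives $\hat f(S)=\alpha_{|S|}$, and Parseval yields $\sum_s\binom{n}{s}\alpha_s^2\le 1$. Writing $m(x)$ for the number of $+1$ coordinates of $x$ and $K_s$ for the degree-$s$ Krawtchouk polynomial, the identity $\sum_{|S|=s}\chi_S(x)=K_s(m(x))$ gives $T_\rho f(x)=\sum_s \alpha_s\rho^s K_s(m(x))$. Because $K_s$ is a polynomial of degree $s$ in $m$ and the first $2k$ moments of $m_D$ agree with those of $m_U\sim\Bin(n,1/2)$ (by $(2k)$-wise uniformity), we obtain $\E_D[K_s(m_D)]=\E_U[K_s(m_U)]=0$ for $1\le s\le 2k$, so
\[
\E[f(D\cdot N_\rho)]-\E[f(U)] \;=\; \sum_{s>2k}\alpha_s \rho^s \E_D[K_s(m_D)].
\]

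The plan to bound this tail is a weighted Cauchy--Schwarz with weights $\binom{n}{s}$: Parseval bounds the $\alpha_s$-factor by $\rho^{2k+1}$, while Jensen combined with the dual Krawtchouk identity $\sum_s K_s(m)^2/\binom{n}{s}=2^n/\binom{n}{m}$ reduces the other factor to $\E_D[2^n/\binom{n}{m_D}]$, a concentration statement for $m_D$ that follows from the pointwise bound $\Pr[D=x]\le 2^{-2k}$ implied by $(2k)$-wise uniformity. The main obstacle will be recovering the precise constant $(e\rho)^{k/2}$: the naive Cauchy--Schwarz above yields only $\rho^{2k+1}$ times a concentration factor, so matching the advertised bound will require either a finer choice of weights or an explicit generating-function manipulation based on $\sum_s K_s(m)z^s=(1+z)^m(1-z)^{n-m}$, exploiting the observation (immediate from $\E[(D\cdot N_\rho)^S]=\rho^{|S|}\E[D^S]$) that $D\cdot N_\rho$ is itself $(2k)$-wise uniform with level-$s$ biases dampened by $\rho^s$.
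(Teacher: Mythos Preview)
Your plan for part~(i) has a genuine gap: the weighted Cauchy--Schwarz route does not close. After Jensen and the dual Krawtchouk identity you land on $\E_D[2^n/\binom{n}{m_D}]$, but this quantity is \emph{not} controlled by $(2k)$-wise uniformity. Your pointwise bound $\Pr[D=x]\le 2^{-2k}$ is correct, yet it permits $\Pr[m_D=n]=2^{-2k}$, which alone contributes $2^{-2k}\cdot 2^n/\binom{n}{n}=2^{n-2k}$ to the expectation. So the ``concentration factor'' is not a constant to be tuned but can be exponential in $n$; the obstacle you flag is not about recovering $(e\rho)^{k/2}$ but about getting any nontrivial bound at all. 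The reason is that the $L^2$ identity $\sum_s K_s(m)^2/\binom{n}{s}=2^n/\binom{n}{m}$ is far too sensitive to the tails of $m_D$ for a $(2k)$-moment hypothesis to control it.

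The paper avoids this by never taking $L^2$ over $D$ for the high-degree part. It conditions on the good event $G=\{\,|\summ D|\le \sqrt{kn/(3\rho)}\,\}$, bounds $\Pr[D\notin G]$ by the $k$th-moment tail inequality, and then for each fixed $x\in G$ uses the \emph{pointwise} Krawtchouk bound of Corollary~\ref{cor:bijlsv-ineq}, namely $|\overline K(\ell,\summ x)|\le \binom{n}{\ell}(\ell/n+(\summ x)^2/n^2)^{\ell/2}$. On $G$ this matches the Fourier-coefficient bound $|\hat f([\ell])|\le\binom{n}{\ell}^{-1/2}$ term by term, yielding a geometric series in $\ell$ whose first surviving term is $(e\rho)^{k/2}$. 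The low-degree part $f_{\le k}$ is handled separately by Cauchy--Schwarz against $\Pr[D\notin G]^{1/2}$, using that $f_{\le k}^2$ has degree $\le 2k$. The two ingredients you are missing are the conditioning on $G$ and the pointwise (not $L^2$) control of Krawtchouks on $G$.

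A smaller issue: for part~(ii) your reduction via Lemma~\ref{lemma:OZ} yields an additive error $(ck/n)^{3k}$ that is independent of $\rho$, so it does not recover the claimed $c(e\rho)^{k/2}$ when $\rho$ is small. The paper uses the noise-aware Lemma~\ref{lemma:OZ-noise} instead, which inserts a $\rho^k$ factor into the total-variation distance and makes the reduction go through uniformly in $\rho$.
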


\Cref{thm:bipnvssym}.\ref{item:bipnvssym} follows from \cite{ForbesKelley-2018} when $k\ge c\log n$, but their proof does not apply to smaller $k$.  Our result applies to any $k$, and this will be critical.

\Cref{thm:bipnvssym}.\ref{item:sbpnvssym} follows from \Cref{thm:bipnvssym}.\ref{item:bipnvssym} via the following simple extension of \Cref{lemma:OZ}, which we establish by taking noise into account.  (A direct application of \Cref{lemma:OZ} would give a larger error of $2^{-ck}$.)

\begin{lemma} \label{lemma:OZ-noise} 
  Let $D$ be an $(\eps,k)$-biased distribution on $\pmo^n$.
  Then $D \cdot N_\rho$ is $((\frac{e^3\rho n}{k})^{k/2} \eps)$-close to a $k$-wise uniform distribution in total variation distance.
\end{lemma}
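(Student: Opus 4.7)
The plan is to extend the Fourier-analytic proof of \Cref{lemma:OZ} (O'Donnell--Zhao) so as to account for the attenuation that the noise operator gives to each Fourier level. Writing $f = D/U$ for the density of $D$ relative to the uniform distribution $U$, the hypothesis reads $|\hat f(S)|\le \eps$ for $0<|S|\le k$, where $\hat f(S) = \mathbb{E}[D^S]$. The coordinate-wise product with $N_\rho$ is precisely the action of the Bonami--Beckner noise operator $T_\rho$ on $f$, so the density of $D\cdot N_\rho$ is $T_\rho f$ and its Fourier coefficient at $S$ equals $\rho^{|S|}\hat f(S)$. The source of the promised $\rho^{k/2}$ improvement over \Cref{lemma:OZ} lies in this per-level decay, which the original argument did not exploit.

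I would produce the $k$-wise uniform target $\tilde D$ by zeroing out the low-degree Fourier coefficients of $T_\rho f$: define $\tilde f$ by $\widehat{\tilde f}(\emptyset)=1$, $\widehat{\tilde f}(S)=0$ for $0<|S|\le k$, and $\widehat{\tilde f}(S)=\rho^{|S|}\hat f(S)$ for $|S|>k$. Convolving $\tilde f$ with a small amount of the uniform distribution (a standard trick used also in \Cref{lemma:OZ}) enforces nonnegativity at a cost absorbed into the final constant. Then Cauchy--Schwarz in $L^2(U)$ converts the $L^1$ norm to $L^2$, and Parseval evaluates the latter:
\[
  2\cdot d_{\mathrm{TV}}(D\cdot N_\rho,\tilde D)\;\le\;\|T_\rho f-\tilde f\|_{L^2(U)}\;=\;\Bigl(\sum_{0<|S|\le k}\rho^{2|S|}\hat f(S)^2\Bigr)^{1/2}\;\le\;\eps\,\Bigl(\sum_{s=1}^{k}\binom{n}{s}\rho^{2s}\Bigr)^{1/2}.
\]

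What remains is the combinatorial estimate $\sum_{s=1}^{k}\binom{n}{s}\rho^{2s}\le(e^3\rho n/k)^{k}$. Using $\binom{n}{s}\le(en/s)^s$, each summand is at most $(en\rho^2/s)^s$; in the regime where the claimed bound is nontrivial (roughly $\rho^2 n\gtrsim k$) the function $s\mapsto(en\rho^2/s)^s$ is monotone on $[1,k]$, so the sum is dominated by its extreme $s=k$ term $(en\rho^2/k)^{k}$, and the remaining $\mathrm{poly}(k)$ factor is absorbed by $e^{O(k)}$ (using $k\rho^{k}\le e^{2k}$). Taking square roots and halving yields the claimed $(e^3\rho n/k)^{k/2}\eps$ bound, with the constant $e^3$ (rather than $e$) absorbing the $1/2$ from total variation exactly as in \Cref{lemma:OZ}. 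The main technical obstacle is precisely this final estimate: verifying the explicit constants requires careful bookkeeping of the several small factors (the $1/2$ in total variation, the slack in the bound on $\binom{n}{s}$, and the $\mathrm{poly}(k)$ absorbed into $e^{O(k)}$).
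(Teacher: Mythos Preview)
Your approach is exactly what the paper's one-line justification (``by taking noise into account'') points to, and the paper gives no further detail to compare against. Tracking the per-level attenuation $\rho^{|S|}$ through the $L^2$ argument of \Cref{lemma:OZ} and arriving at $\eps\bigl(\sum_{s=1}^{k}\binom{n}{s}\rho^{2s}\bigr)^{1/2}$ is the right move, and the slack between $e$ and $e^3$ does absorb the various constants as you say.

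One terminological slip: ``convolving $\tilde f$ with a small amount of the uniform distribution'' does not repair nonnegativity (convolution with the uniform distribution on $\pmo^n$ returns the uniform distribution). The actual repair in \cite{OZ18} bounds the $L^1$ mass of the negative part of $\tilde f$ using $T_\rho f\ge 0$ and then corrects; those steps carry over verbatim here with the extra $\rho^{|S|}$ factors.

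The more substantive issue is the final estimate. Your bound $\sum_{s\le k}\binom{n}{s}\rho^{2s}\le(e^3\rho n/k)^{k}$ is valid when $en\rho^2\ge k$ (so that the summands are increasing in $s$), but your identification of this with ``the regime where the claimed bound is nontrivial'' is not right: for small $\rho$ the sum is dominated by the $s=1$ term $n\rho^2$, which can vastly exceed $(e^3\rho n/k)^{k}$. In fact the lemma as literally stated fails there. Take $D$ with density $1+\eps x_1$; then the distance from $D\cdot N_\rho$ to every $k$-wise uniform distribution is exactly $\rho\eps/2$ (witnessed by the test $x_1$), and one checks that $\rho/2>(e^3\rho n/k)^{k/2}$ once $\rho$ drops below roughly $(k/(e^3 n))^{1+2/(k-2)}$; for instance $n=100$, $k=10$, $\rho=10^{-3}$ already violates the stated bound. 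So the gap is in the statement rather than in your method: your argument proves the lemma precisely in the range $e^3\rho n/k\ge 1$, which is where it is actually true.
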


A natural question is whether larger bias suffices in \Cref{thm:bipnvssym}.\ref{item:sbpnvssym}.  A main result in this work is that it does not, even for threshold tests.  The best possible bound for small-bias distributions is in fact obtained by combining \Cref{thm:bipnvssym}.\ref{item:bipnvssym} with the generic \Cref{lemma:OZ-noise}.  

\begin{theorem} \label{thm:sbpn-vs-uniform}
  There exists a $(ck/n)^k$-biased distribution $D$ such that $\Pr[\summ{D \cdot N_\rho} \ge 2\sqrt{kn}] \ge \Pr[ B \ge 2\sqrt{kn}] + (c\rho)^{2k}$ for every $\rho \in [0,1]$.
\end{theorem}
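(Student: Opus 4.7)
Since $D$ is to be symmetric, it is determined by the law $\mu$ of its weight $W := (n - \summ{D})/2$ on $\{0,\ldots,n\}$; writing $K_j$ for the degree-$j$ Krawtchouk polynomial with respect to the binomial weight $\nu(w) := \binom{n}{w}/2^n$, the symmetric Fourier coefficient of $D$ at level $j$ equals $m_j := \E_{W\sim\mu}[K_j(W)]/\binom{n}{j}$. For $h(x) := \Id(\summ{x} \ge 2\sqrt{kn})$, the noisy tail discrepancy admits the symmetric Fourier expansion
\[
\Pr[\summ{D \cdot N_\rho} \ge 2\sqrt{kn}] - \Pr[B \ge 2\sqrt{kn}] \;=\; \sum_{j\ge 1} \rho^j\, m_j\, \hat h_j,
\]
with $\hat h_j := \sum_{w \le w^*} \nu(w) K_j(w)$ and $w^* := n/2 - \sqrt{kn}$. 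My strategy is to build $\mu$ so that the first several $m_j$'s vanish exactly (yielding the bias bound for free) while the first non-vanishing $m_{j_0}$ is large and aligned in sign with $\hat h_{j_0}$, so the smoothed tail excess has a controllable leading term.

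Concretely, I would set $\mu = \nu + \alpha \pi$ for a signed measure $\pi$ on $\{0, \ldots, n\}$ of total mass zero satisfying $\E_\pi[K_j] = 0$ for $1 \le j \le 2k - 1$ and supported on an $O(k)$-atom set that places positive mass on weights $w \le w^*$. Existence of such $\pi$ is a standard moment-matching problem: $2k-1$ linear constraints on signed measures supported on $\ge 2k+1$ atoms leave a positive-dimensional family from which a representative with the desired sign pattern can be chosen. Scaling $\alpha$ so that $\mu \ge 0$ pointwise produces a $(2k-1)$-wise uniform probability distribution, hence trivially $(ck/n)^k$-biased, with $m_{2k} = \alpha \cdot \E_\pi[K_{2k}]/\binom{n}{2k}$. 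A Krawtchouk/Hermite asymptotic expansion of the threshold function at $t=2\sqrt{kn}$ then yields $\hat h_{2k}$ of order $c^{2k}$ with a determined sign, and aligning the support of $\pi$ to make $m_{2k}\cdot\hat h_{2k}>0$ of comparable magnitude makes the leading Fourier term $\rho^{2k} \cdot m_{2k} \cdot \hat h_{2k}$ of order $(c\rho)^{2k}$, as required.

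The main obstacle is ruling out catastrophic cancellation by the higher-order contributions $\sum_{j > 2k} \rho^j m_j \hat h_j$. I expect to handle this in one of two ways. First, one can arrange the atoms of $\pi$ so that $m_j \cdot \hat h_j \ge 0$ throughout the range $j \in [2k, Ck]$ where these products are non-negligible, turning the Fourier sum into a sum of same-sign contributions; this requires pointwise information on Krawtchouk values at the atoms of $\pi$, reminiscent of the classical analysis behind \Cref{lemma:OZ} and of the sign-tracking in the proof of \Cref{thm:bipnvssym}. Alternatively, one can exploit the geometric decay $|\hat h_j| \lesssim c^j$ past the Hermite peak, combined with $|m_j| \le 1$, to show the $j>2k$ tail is dominated by the $j=2k$ term; this would require tuning the construction (slightly shifting $w^*$ or the support of $\pi$) so that the peak of $\hat h_j$ sits at $j=2k$. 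Verifying one of these bounds uniformly in $\rho \in [0,1]$, and in particular at $\rho=1$ where the raw discrepancy of $D$ itself must reach $c^{2k}$, is the technical heart of the argument.
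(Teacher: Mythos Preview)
Your framework is essentially correct and, in fact, is the one the paper uses---but you take a wrong turn when you insist that the perturbation $\pi$ be supported on $O(k)$ atoms. The paper instead takes $\pi$ to be the \emph{Krawtchouk kernel at level $2k$ itself}, i.e.\ the signed measure with density proportional to $\sK(2k,\cdot)$ against $\nu$. Concretely, they set
\[
  D_\alpha(x) := 2^{-n}\Bigl(1 + \alpha^k \tbinom{n}{2k}^{-1/2} \sum_{|S|=2k} x^S\Bigr),
\]
so that $m_j = 0$ for every $j \ne 2k$ by Krawtchouk orthogonality. This single choice dissolves both of your stated obstacles at once: there are no higher-order terms $j>2k$ to cancel, and the bias is exactly $\alpha^k\binom{n}{2k}^{-1/2} = (ck/n)^k$ at level $2k$ and zero elsewhere. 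Nonnegativity of $D_\alpha$ for small $\alpha$ is shown via the Krawtchouk root-location bound and the upper bound of \Cref{cor:bijlsv-ineq}. Furthermore, since only level $2k$ is present, noise acts multiplicatively: $D_\alpha \cdot N_\rho = D_{\alpha\rho^2}$, so uniformity in $\rho$ is free. Finally, rather than expanding the threshold in the Krawtchouk basis and tracking the sign of a single coefficient $\hat h_{2k}$, the paper works \emph{pointwise}: the lower bound $\sK(2k,t) \ge \binom{n}{2k}(t/2n)^{2k}$ for $t \ge 2\sqrt{kn}$ (\Cref{claim:sK-lb}) gives $\Pr[\summ{D_\alpha}=t] \ge \Pr[B=t]\cdot(1+(\alpha t^2/4kn)^k)$ for every $t$ in the tail, and summing yields the theorem.

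Your proposal, by contrast, has two genuine gaps. First, the implication ``$(2k-1)$-wise uniform, hence trivially $(ck/n)^k$-biased'' is false: a finitely supported $\pi$ matching moments up to $2k-1$ will generically have large $|m_j|$ for many $j>2k$, and nothing you wrote controls those biases. Second, the ``main obstacle'' you identify---ruling out cancellation from $\sum_{j>2k}\rho^j m_j \hat h_j$---is precisely the content of the theorem in your setup, and neither of your two sketched routes is carried out (nor is it clear the first can be, since the sign pattern of $m_j$ for a sparse $\pi$ is not under your control). The paper's insight is that choosing $\pi$ to be the level-$2k$ eigenfunction, rather than a sparse measure, makes both problems disappear.
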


In fact, the distribution $D$ in \Cref{thm:sbpn-vs-uniform} (and 
in \Cref{thm:sbpn-vs-k-wise} below) is simultaneously $(2k-1)$-wise uniform.

\Cref{thm:sbpn-vs-uniform} gives a negative answer to \Cref{question:BS}.
Specifically, setting $\rho$ to be a constant and $k = \log n$ we obtain bias $1/n^{\omega(1)}$ but the error is $\ge 1/n^c$.

Note that our negative answer holds even with noise, while an answer to \Cref{question:BS} was not known even for plain small-bias distributions. This makes our results stronger.  Moreover, we do not know of a simpler proof if one does not care about noise.  Indeed, we obtained several different proofs of essentially \Cref{thm:sbpn-vs-uniform}, see \cite{SSS-II}.  In all these proofs (including the one presented here) the small-bias distribution $D$ itself can be written as $D := D' \cdot N_{c \rho}$, that is, by adding noise to another distribution.  Further adding noise to $D$ then comes at little cost, as already pointed out in \cite{LV-sum}, see \Cref{claim:noise-invariant}.  We also mention that some of these proofs cover wider range of parameters, and provide new information even for bounded uniformity.  We refer to \cite{SSS-II} for more on this.

Combining \Cref{thm:sbpn-vs-uniform} with \Cref{thm:bipnvssym}, one immediately obtains a \emph{smoothed} threshold test which distinguishes some $n^{-k}$-bias distribution from any $ck$-wise uniform distribution, answering \Cref{question} for such tests.

For general symmetric tests and the same distribution $D$, we prove a stronger result improving on the classic line of works in \cite{AGM03,AlonAKMRX07,OZ18} and finally 
matching \Cref{lemma:OZ-noise}.

\begin{theorem} \label{thm:sbpn-vs-k-wise}
  There exists a $(ck/n)^k$-biased distribution $D$ such that 
  for every $\rho \in [0,1]$ the following holds.
  There exists a symmetric function $f\colon\pmo^n \to \zo$ such that for every $(2k)$-wise uniform distribution $D_{2k}$,
  \[
    \E[ f(\summ{D \cdot N_\rho}) ] \ge \E[f(D_{2k})] + \Bigl(\frac{c\rho}{\sqrt{\log(1/2\rho)}}\Bigr)^{2k}.
  \]
\end{theorem}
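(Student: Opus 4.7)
The plan is to reuse the symmetric $(ck/n)^k$-biased distribution $D$ from \Cref{thm:sbpn-vs-uniform}, which is also $(2k-1)$-wise uniform, and to design a symmetric test $f(x) = g(\summ{x})$ that simultaneously exploits $D$'s excess tail mass and the degree-$2k$ moment-matching constraints of $(2k)$-wise uniformity. Since $D$ is $(2k-1)$-wise uniform and $N_\rho$ is independent with $\E[\prod_{i\in S}N_\rho(i)] = \rho^{|S|}$, the product $D \cdot N_\rho$ is also $(2k-1)$-wise uniform, hence $\summ{D\cdot N_\rho}$ agrees with $B := \summ{U}$ on the first $2k-1$ moments, while every $(2k)$-wise uniform $D_{2k}$ has $\summ{D_{2k}}$ agreeing with $B$ on the first $2k$ moments. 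This dual moment-matching picture is the anchor of the argument.

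I would then reduce the theorem to a Chebyshev-type extremal problem: for every symmetric $g\colon \{-n,\ldots,n\}\to \zo$,
\[
  \max_{D_{2k}\text{ $(2k)$-wise uniform}} \E[g(\summ{D_{2k}})] \;\le\; \min_{\substack{p\ge g \text{ on }\mathrm{supp}(B) \\ \deg p \le 2k}} \E[p(B)],
\]
because $\E[p(\summ{D_{2k}})] = \E[p(B)]$ for every degree-$\le 2k$ polynomial $p$. Hence it suffices to exhibit a set $T$ and a degree-$2k$ polynomial majorant $p \ge \Id_T$ on $\mathrm{supp}(B)$ with
\[
  \Pr[\summ{D\cdot N_\rho} \in T] - \E[p(B)] \;\ge\; \Bigl(\tfrac{c\rho}{\sqrt{\log(1/2\rho)}}\Bigr)^{2k}.
\]
The test $f$ is then $\Id_T \circ \summ{}$, and the bound transfers from the LP dual.

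I would choose $T$ to be an interval (or union of slices) centered at the tail location $|s|\asymp \sqrt{kn}$ where \Cref{thm:sbpn-vs-uniform} already produces excess mass of order $(c\rho)^{2k}$, with width $w$ to be optimized. Locally near this scale $B$ behaves like a Gaussian of variance $n$ with density $\exp(-\Theta(s^2/n))$; Chebyshev/Hermite-type extremal estimates then give polynomial majorants for which $\E[p(B)] - \Pr[B\in T]$ decays as $\exp(-\Theta(k\cdot h(w,\rho)))$ for an appropriate trade-off function. A narrow $T$ is too spiky to admit any good degree-$2k$ majorant, while a wide $T$ dilutes the excess mass from $D\cdot N_\rho$; balancing the two is where the $1/\sqrt{\log(1/2\rho)}$ factor arises, since the majorant must swallow an additional $\rho^k$ slack and paying for this costs $\sqrt{\log(1/2\rho)}$ in the signal.

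The main obstacle I anticipate is producing a sufficiently sharp degree-$2k$ polynomial majorant of $\Id_T$ against $B$, with excess $\ll (c\rho)^{2k}$. The natural tool is the Hermite (equivalently Krawtchouk) expansion of $\Id_T$ near the tail, where the interval lies at $\asymp \sqrt{k}$ standard deviations from the mean: such indicators admit degree-$2k$ majorants whose $B$-expectation exceeds $\Pr[B\in T]$ only by $\exp(-\Theta(k))$, modulated by $w$. Making these constants explicit, and verifying that the resulting trade-off is exactly $1/\sqrt{\log(1/2\rho)}$ rather than something weaker, is the technical heart of the argument and the place where the tightness in \Cref{lemma:OZ-noise} gets fully exploited.
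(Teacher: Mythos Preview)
Your LP-duality framing is exactly right and is shared with the paper: for any degree-$2k$ polynomial $p$ with $p\ge f$ pointwise, $\E[f(D_{2k})]\le\E[p(D_{2k})]=\E[p(U)]$, so the task is to produce a bounded symmetric $f$ and a degree-$2k$ majorant $p$ with $\E[f(D\cdot N_\rho)]-\E[p(U)]$ large. Where you diverge from the paper is in the choice of $f$ and $p$.

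You propose to take $f=\Id_T$ for a tail interval $T$ and then hunt for a Chebyshev/Hermite-type majorant; you correctly flag this majorant construction as the main obstacle, and your explanation of where the $\sqrt{\log(1/2\rho)}$ factor should come from is speculative. The paper sidesteps this obstacle entirely by reversing the order: it \emph{starts} from the polynomial and defines the test from it. Concretely, set $p_\beta(x):=\beta^k\binom{n}{2k}^{-1/2}\sum_{|S|=2k}x^S$ (a scaled Krawtchouk polynomial) and take $f_\beta:=\min\{1,p_\beta\}$. Then $p_\beta$ is trivially its own degree-$2k$ majorant, so $\E[f_\beta(D_{2k})]\le\E[p_\beta(U)]=0$; and a direct Fourier computation gives $\E[p_\beta(D_\alpha)]=(\alpha\beta)^k$ where $D_\alpha=D\cdot N_\rho$ with $\alpha=c\rho^2$. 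The only work is to bound the truncation loss $\E[(p_\beta-1)\Id(p_\beta>1)]$ under $D_\alpha$, which is done by Cauchy--Schwarz plus the Krawtchouk upper bound (\Cref{cor:bijlsv-ineq}) to control $\Pr[p_\beta(D_\alpha)>1]$. The parameter $\beta$ is chosen as $\Theta(1/\log(1/\alpha))$ precisely to make this tail negligible; that is where the $1/\sqrt{\log(1/2\rho)}$ factor actually comes from, not from a width trade-off for an interval.

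So: your dual setup is sound, but the specific instantiation you propose (interval indicator plus extremal majorant) is not the paper's route and leaves the hardest step unfinished. The paper's choice of $f_\beta=\min\{1,p_\beta\}$ makes the majorant free and reduces everything to a second-moment/tail computation for a single explicit polynomial.
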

Again, this result was not known even without noise.  Note that \Cref{thm:sbpn-vs-k-wise} implies the same separation without noise simply setting $\rho := 1$.  But the other way around is not clear.

An interesting question is whether one could prove a single result that implies both \Cref{thm:sbpn-vs-k-wise} and \Cref{thm:sbpn-vs-uniform}.

From \Cref{thm:sbpn-vs-uniform}, we derive several consequences on small-space computation and small-depth circuits.

\paragraph{One-way small space.}
We give a negative answer to \Cref{question:space}.

\begin{corollary} \label{cor:thr-thr-const-advant} For any $\rho\in (0,1]$, there is a distribution $D$ on $\{-1,1\}^{n}$
  that is $n^{-c\log_{1/\rho} n}$-biased and a threshold-of-thresholds
$T\colon\{-1,1\}^{n}\to\zo$ such that $\E[T(U)]-\E[T(D \cdot N_\rho)] \geq 1/3$.
In particular, there is a read-once branching program $T$ of width $n^c$ for which the inequality holds.
\end{corollary}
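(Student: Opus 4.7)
The plan is to apply \Cref{thm:sbpn-vs-uniform} to obtain a single-threshold advantage and then amplify it to constant by taking a product of independent copies and using an outer threshold that counts how many blocks exceed the inner threshold. Since $D\cdot N_\rho$ has a heavier upper tail than $U$, the count of blocks above the threshold is \emph{larger} under $D\cdot N_\rho$, and therefore the outer ``at most'' test favors $U$, which is the direction we want.

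Given $n$ and $\rho$, choose a parameter $k := \lceil c_0\log n/\log(1/\rho)\rceil$ and a base dimension $n_0$ polynomially related to $n$ (specified below). Apply \Cref{thm:sbpn-vs-uniform} on $n_0$ coordinates to get a distribution $D_0$ on $\pmo^{n_0}$ that is $\eps$-biased with $\eps = (ck/n_0)^k = n_0^{-\Theta(\log_{1/\rho}n_0)}$, satisfying
\[
  \Pr[\summ{D_0\cdot N_\rho}\ge t]\ \ge\ p+q, \qquad t := 2\sqrt{kn_0},\ \ p := \Pr[B\ge t],\ \ q := (c\rho)^{2k}.
\]
By Chernoff, $p \le e^{-2k}$, so $p/q^2 \le e^{-2k}/(c\rho)^{4k}$, and with the above choice of $k$ this ratio is polynomial in $n_0$: the exponent of $q$, which grows like $k\log(1/\rho) = \Theta(\log n)$, dominates the Chernoff exponent of $p$.

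Set $m := \lceil 16(p+q)/q^2\rceil$, pick $n_0$ so that $n=mn_0$, and let $D := D_0^{\otimes m}$ on $\pmo^n$. Since the bias of a product of independent distributions is at most the bias of each factor, $D$ is $\eps$-biased; and since $n_0$ and $n$ are polynomially related, $\eps$ can be rewritten as $n^{-c\log_{1/\rho}n}$ for a suitable $c$. For the $i$-th block $x_i\in\pmo^{n_0}$ define $Y_i(x) := \Id[\summ{x_i}\ge t]$, and set
\[
  T(x)\ :=\ \Id\!\left[\,\textstyle\sum_{i=1}^m Y_i(x)\ \le\ mp+mq/2\,\right],
\]
which is manifestly a threshold of thresholds. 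Under $U$ the $Y_i$ are i.i.d.\ Bernoullis of mean $p$, so Chebyshev gives $\Pr[\sum_i Y_i > mp+mq/2]\le 4p/(mq^2)\le 1/4$. Under $D\cdot N_\rho$ the $m$ blocks are independent, the $Y_i$ are i.i.d.\ Bernoullis of mean at least $p+q$, and Chebyshev symmetrically gives $\Pr[\sum_i Y_i < mp+mq/2]\le 4(p+q)/(mq^2)\le 1/4$. Subtracting yields $\E[T(U)]-\E[T(D\cdot N_\rho)]\ge 3/4-1/4 = 1/2 \ge 1/3$.

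Finally, $T$ is computed by a read-once branching program that scans the blocks in order, carrying as state the pair (current partial sum within the active block, number of already-completed blocks with $Y_j=1$), which uses $O(n_0 m)=n^{O(1)}$ states and so has width $n^c$. The main obstacle is reconciling two opposing constraints: the bias must be $n^{-c\log_{1/\rho}n}$ (forcing $k$ to be roughly $\log n/\log(1/\rho)$) while $m=(p+q)/q^2$ must stay polynomial in $n$ (forcing the $q$ exponent to beat the $p$ exponent). The choice $k = \Theta(\log n/\log(1/\rho))$ achieves both simultaneously, and everything else is bookkeeping.
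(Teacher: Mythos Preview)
Your approach matches the paper's: partition into blocks, apply \Cref{thm:sbpn-vs-uniform} per block, and amplify with an outer threshold counting how many blocks exceed the inner threshold. The paper's version is simpler---it just takes $\sqrt{n}$ blocks of size $\sqrt{n}$, so the per-block advantage is $\ge n^{-0.1}$ and $\sqrt{n}\ge n^{0.2}$ blocks suffice to amplify---thereby avoiding both the implicit equation for $n_0$ in your setup (your $m$ depends on $p$, which depends on $n_0 = n/m$) and the need to justify the ``symmetric'' Chebyshev bound when the Bernoulli mean under $D\cdot N_\rho$ may strictly exceed $p+q$.
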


\Cref{cor:thr-thr-const-advant}, in combination with \Cref{lemma:OZ-noise}, matches a result in \cite{ForbesKelley-2018}, which shows that the error is $\rho^{ck}$ for $k$-wise uniform when $k \ge c\log_{1/\rho} n$.

\begin{proof}[Proof of \Cref{cor:thr-thr-const-advant} from \Cref{thm:sbpn-vs-uniform}]
We divide the input into $\sqrt{n}$ blocks, and in each block sample an independent copy of the $n^{-k/2}$-biased
distribution from \Cref{thm:sbpn-vs-uniform} on $\sqrt{n}$
bits.  The resulting distribution has the required properties, since the bias of a test that spans multiple blocks equals the product of the biases in each block.

In each block, a suitable threshold tells $D \cdot N_\rho$ from uniform with advantage $(c\rho)^k \geq n^{-0.1}$
for $k=c\log_{1/\rho} n$.  A threshold of $\sqrt{n}$
such blocks is sufficient to boost the advantage to constant.

Finally, this threshold-of-thresholds computation can be implemented with $c \log n$ bits of space, by simply maintaining two counters.
\end{proof}

What may have made this problem harder is that it was not clear what
distinguishing bound one should expect in \Cref{thm:sbpn-vs-uniform}.
One may be tempted to aim for larger advantage, perhaps independent from $k$. But as we showed in \Cref{thm:bipnvssym},
this is false: $k$-wise uniformity plus noise fools thresholds
with error $2^{-ck}$. One can then ask if $k$-wise uniformity 
fools with error $2^{-ck}$ more general classes of tests, like threshold
of thresholds. \Cref{cor:thr-thr-const-advant} shows this is also false.

\paragraph{Constant-depth circuits.}
Next we discuss a negative result for fooling the circuit class AC$^{0}$.  It is known that polylogarithmic independence or quasi-polynomial bias fools AC$^{0}$~\cite{Bazzi07,Razborov09,Braverman10,Tal17}, and these bounds are nearly tight.  But despite attempts \cite{LV-sum} it was not known if logarithmic uniformity plus noise, or polynomial bias plus noise suffices. We show that bias $n^{-\omega(1)}$ is necessary.

\begin{corollary} \label{cor:ac0-const-advant} For any $\rho\in(0,1]$ there is a distribution $D$ on $\pmo^n$
that is $n^{-c_\rho \log\log n}$ biased and an \textnormal{AC}$^0$ circuit $C$ of size $n^c$ and depth $c$ such that $\E[C(U)]-\E[C(D \cdot N_\rho)] \geq 1/3$.
\end{corollary}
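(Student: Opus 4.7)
The plan mirrors the proof of \Cref{cor:thr-thr-const-advant} but with $k := c_\rho \log\log n$ in place of $c\log_{1/\rho} n$, and with the threshold-of-thresholds distinguisher replaced by a constant-depth circuit. Partition $[n]$ into $B = (\log n)^{c_\rho}$ blocks of size $m := n/B$ and, in each block, sample an independent copy $D^{(j)}$ of the $(ck/m)^k$-biased distribution from \Cref{thm:sbpn-vs-uniform}. Because the bias of any test spanning multiple blocks factorizes over blocks, the joint distribution $D$ on $\pmo^n$ is $(ck/m)^k = n^{-c_\rho \log\log n}$-biased after the constants are readjusted. By \Cref{thm:sbpn-vs-uniform} applied blockwise, the threshold $\Id(\summ{x^{(j)}} \ge 2\sqrt{km})$ distinguishes $D^{(j)} \cdot N_\rho$ from uniform in block $j$ with advantage $\delta := (c\rho)^{2k} = (\log n)^{-c_\rho}$.

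The AC$^0$ distinguisher has two layers: a per-block gadget $\tilde t_j$ in AC$^0$ that approximates the above threshold while retaining $\Omega(\delta)$ advantage between the two distributions, and a top-level constant-depth approximate-majority (or tribes) gate that aggregates the $B$ Boolean block-outputs. Viewing the $\tilde t_j$'s as independent Bernoulli variables whose expectations differ by $\delta$ between $D \cdot N_\rho$ and $U$, choosing $B \gg 1/\delta^2$ (i.e. $c_\rho$ large enough) ensures the mean gap $B\delta$ dominates the $\sqrt{B}$ standard deviation, so an Ajtai-style approximate majority amplifies the advantage past $1/3$. Both layers are constant-depth and polynomial-size, giving an AC$^0$ circuit of the required form.

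The main obstacle is designing the per-block gadget $\tilde t_j$: the threshold $\summ{x^{(j)}} \ge 2\sqrt{km}$ sits only $\Theta(\sqrt k) = \Theta(\sqrt{\log\log n})$ standard deviations above the mean, well inside the bulk regime where off-the-shelf AC$^0$ approximate-threshold constructions do not apply. I would exploit the symmetry of $D^{(j)}$ highlighted by the authors: conditioned on the block weight, the $+1$'s are placed uniformly, so a heavier weight-tail of $D^{(j)}$ translates into a multiplicatively larger probability that any prescribed set of $s$ coordinates is all $+1$. An appropriately calibrated tribes-like $\mathsf{OR}$ of disjoint $\mathsf{AND}$s of size $s$ over the block is then an AC$^0$ test whose expectation is shifted by $\Omega(\delta)$ between the two distributions; the parameter $s$ must be tuned so that the uniform expectation of the tribes sits in a constant range (for good concentration) while $s$ is large enough that $D^{(j)}$'s heavier moments at degree $\ge 2k$ contribute the additive bias, matching the moment-growth rate of $D^{(j)}$'s weight rather than the $s \approx \log m$ calibration that would apply to the Binomial.
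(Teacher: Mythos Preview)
Your amplification template (independent blocks, a per-block distinguisher, Ajtai approximate majority on top) matches the paper's. The divergence, and the genuine gap in your argument, is the block size.

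You take $B=(\log n)^{c_\rho}$ blocks of size $m\approx n$. As you correctly observe, the threshold test from \Cref{thm:sbpn-vs-uniform} on such a block sits only $\Theta(\sqrt{\log\log n})$ standard deviations from the mean of $m\approx n$ bits --- essentially majority --- and is not in AC$^0$. Your proposed fix, a tribes-style OR of disjoint ANDs, is not an argument: the block distribution $D_\alpha$ is $(2k{-}1)$-wise uniform, so any test $g$ satisfies
\[
\E[g(D_\alpha\cdot N_\rho)]-\E[g(U)]=(\alpha\rho^2)^k\binom{m}{2k}^{-1/2}\sum_{|S|=2k}\widehat{g}(S),
\]
and you give no estimate of the level-$2k$ Fourier sum of a tribes function, no choice of the AND width $s$, and no reason why any depth-two circuit on $m\approx n$ bits recovers the $(c\rho)^{2k}$ gap. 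This is the missing idea, not a detail to be tuned.

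The paper sidesteps the obstacle entirely by taking blocks of \emph{polylogarithmic} length. Then the per-block threshold $\Id(\summ{x^{(j)}}\ge 2\sqrt{km})$ is a symmetric function of only $(\log n)^{O(1)}$ inputs, and any such function is computable \emph{exactly} by a constant-depth, $\poly(n)$-size circuit. No approximation of the block test is needed; \Cref{thm:sbpn-vs-uniform} applies verbatim with advantage $(c\rho)^{2k}=1/(\log n)^{O(1)}$ per block, and approximate majority over the $n/(\log n)^{O(1)}$ blocks boosts this past $1/3$. (With polylogarithmic $m$ the bias one gets is $(ck/m)^k\approx(\log n)^{-c_\rho\log\log n}$ rather than the $n^{-c_\rho\log\log n}$ printed in the statement; the paper's proof sketch is explicit about the polylogarithmic block length, and that discrepancy is orthogonal to the flaw in your route.)
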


The proof is similar to before, except we take blocks of polylogarithmic length, and set $k = c_\rho \log \log n$.  The threshold in each block can be computed in AC$^0$ since it's only on polylogarithmic number of bits.  By our setting of $k$, the advantage in each block is polylogarithmic, and so computing approximate majority \cite{Ajt83} (cf.~\cite{ViolaBPvsE}) suffices to have constant advantage.

%
%
%
%
%
%
%
%
%

\paragraph{Sum of small-bias distributions.}
%
A next natural question is whether our counterexamples can be extended to the xor of two small-bias distributions.  We show that they cannot.  Specifically, our small-bias distributions are symmetric, and we show that the sum of two such distributions fools any function (symmetric or not).

\begin{theorem} \label{thm:sym-xor-sym-fools-any}
Let $D_1$ and $D_2$ be two independent $n^{-20k}$-biased, symmetric distributions on $\pmo^n$.
Then $\abs{\E[f(D_1 \cdot D_2)] - \E[f]} \le c_k(n^{-0.3k})$ for any function $f\colon\pmo^n \to [-1,1]$.
\end{theorem}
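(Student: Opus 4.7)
The approach is to reduce to symmetric tests via averaging, and then bound the error using Fourier analysis and Parseval.

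First, since both $D_1$ and $D_2$ are permutation invariant, so is their coordinate-wise product $D_1 \cdot D_2$. Denoting by $f_\sym(x) := \E_\sigma[f(\sigma(x))]$ the symmetrization of $f$ over coordinate permutations, we have $\E[f(D_1 \cdot D_2)] = \E[f_\sym(D_1 \cdot D_2)]$ and $\E[f(U)] = \E[f_\sym(U)]$, while $\|f_\sym\|_\infty \le 1$. So we may assume $f$ is symmetric. For symmetric $f$, the Fourier coefficient $\hat f(S)$ depends only on $s := |S|$; denote it by $\hat f(s)$. Similarly, for symmetric $D_i$, $\alpha_i(S) := \E[(D_i)^S]$ depends only on $s$; denote it $\alpha_i(s)$. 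By independence of $D_1, D_2$ and Plancherel,
\[
  \E[f(D_1 \cdot D_2)] - \E[f(U)] = \sum_{s \ge 1}\alpha_1(s)\,\alpha_2(s)\,\binom{n}{s}\,\hat{f}(s).
\]
Parseval gives $\sum_s \binom{n}{s}\hat{f}(s)^2 \le 1$, and Cauchy-Schwarz yields
\[
  |\E[f(D_1 \cdot D_2)] - \E[f(U)]| \;\le\; \sqrt{\sum_{s \ge 1}\binom{n}{s}\,\alpha_1(s)^2\,\alpha_2(s)^2}.
\]

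Next, I would bound the Parseval-type sum by splitting at a threshold $s_0$ of order $k$. For $s \le s_0$, the crude estimates $\binom{n}{s}\le n^s$ and $|\alpha_i(s)|\le n^{-20k}$ give a contribution of at most $c_k \cdot n^{-79k}$, which is far smaller than the target $n^{-0.6k}$ (we will take a square root). For $s > s_0$, I would use one pointwise factor $|\alpha_1(s)|^2 \le n^{-40k}$ together with the Parseval identity $\sum_s \binom{n}{s}\alpha_2(s)^2 = 2^n\Pr[D_2 = D_2']$; combined with $|\alpha_2(s)| \le n^{-20k}$ this gives $\sum_s \binom{n}{s}\alpha_2(s)^2 \le 1 + n^{-40k}(2^n - 1)$, hence a tail of order $n^{-40k} + n^{-80k} \cdot 2^n$. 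When $n \le c k\log n$ the last term is bounded and, after taking the square root, one gets the claimed $c_k \cdot n^{-0.3k}$.

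The hard step is the regime where $n$ is much larger than $k\log n$: here the tail term $n^{-80k}\cdot 2^n$ dominates, and the uniform bias bound $|\alpha_i(s)| \le n^{-20k}$ is too weak at Fourier levels $s$ near $n/2$, where $\binom{n}{s}$ is exponentially large. To close this gap I would exploit the structural consequence of $n^{-20k}$-bias at every level: the weight distribution $\pi_i$ of $D_i$ must match all Krawtchouk moments of $B_n$ to within $n^{-20k}$, which concentrates $\pi_i$ sharply around $n/2$. Together with the classical Plancherel-type estimate $|\phi_s(w)|^2 \le 2^n/\bigl(\binom{n}{s}\binom{n}{w}\bigr)$ for Krawtchouk polynomials $\phi_s(w) = K_s(w)/\binom{n}{s}$ at weights $w$ near $n/2$, this yields an improved per-level bound on $|\alpha_i(s)| = |\E_{\pi_i}[\phi_s(W)]|$ that absorbs the $\binom{n}{s}$ factor in the Parseval sum and completes the argument.
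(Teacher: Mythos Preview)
Your reduction to symmetric $f$ and the Fourier/Cauchy--Schwarz opening are fine and match the paper's first move. The gap is in the last paragraph: you propose to obtain an \emph{improved per-level bound on $|\alpha_i(s)|$} from concentration of the weight distribution $\pi_i$, and then feed it into the Parseval sum. This cannot work as stated. Any such bound has the form
\[
  |\alpha_i(s)| \;\le\; \max_{w\in G}|\phi_s(w)| \;+\; \Pr[W_i\notin G],
\]
where the second term is an additive error \emph{independent of $s$}. When you square and plug into $\sum_s\binom{n}{s}\alpha_1(s)^2\alpha_2(s)^2$, the cross/error terms contribute on the order of $\Pr[\bar G]^2\sum_s\binom{n}{s}=\Pr[\bar G]^2\cdot 2^n$, which is hopeless unless $\Pr[\bar G]\le 2^{-n/2}$ --- far stronger than anything small bias gives. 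In other words, by applying Cauchy--Schwarz \emph{before} you use $|f|\le 1$ to pay for conditioning, you have thrown away the only handle that keeps the conditioning cost from being amplified by $\binom{n}{s}$.

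The paper's route avoids exactly this. It first fixes $D_2=z$ and conditions on $z$ being nearly balanced (which has cost $\Pr[\bar G]$ against the \emph{bounded} test $f$, not against Fourier levels), then conditions the symmetric $D_1$ on a good weight event, and only \emph{after} both conditionings expands in Fourier. At that point every level is controlled termwise by $|\hat f([\ell])|\le\binom{n}{\ell}^{-1/2}$ together with two Krawtchouk upper bounds: \Cref{cor:bijlsv-ineq} for $\ell\lesssim n/4$ and the entropy bound \Cref{cor:Harm-bound} for $\ell\in[n/4,n/2]$. The orthogonality estimate $|\phi_s(w)|^2\le 2^n/(\binom{n}{s}\binom{n}{w})$ you cite is weaker than both in the ranges that matter and would not by itself close the argument. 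If you want to keep your Cauchy--Schwarz shape, the fix is to condition \emph{first} (so $\alpha_i$ is replaced by the bias of the conditioned distribution, which is genuinely bounded by $\max_{w\in G}|\phi_s(w)|$ with no additive tail), and then use the sharper Krawtchouk bounds above on $\sum_s\binom{n}{s}(\alpha'_1)^2(\alpha'_2)^2$.
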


In fact, we prove stronger results. We show that to fool any symmetric function it suffices for one of the two distributions to be symmetric (\Cref{cor:shifted-sym-bias-special}).  In fact, this holds even if one of the two distributions is any fixed string $x$ with $\summ{x} \le n^{0.99}$ (\Cref{thm:shifted-sym-bias}); and we complement this with a result showing that the result is false if $\summ{x}$ is large.
This is in \Cref{sec:shifted-sym-small-bias}.

\paragraph{Typical shifts.}
We generalize and simplify the proof of a result by Bazzi \cite{Bazzi15}.  We first discuss his result.  Let $C \subseteq \zo^n$ be a binary linear code with minimum distance $k+1$ and maximum distance $n-k-1$.
Let $U_{C^\perp}$ be the uniform distribution on the dual code of $C$, and $\bu \sim \zo^n$ be a uniform string.
Bazzi~\cite{Bazzi15} showed that for most shifts $\bu$, the distribution $\bu + U_{C^\perp}$ fools any symmetric function $f\colon\zo^n \to \zo$:
\[
  \E_{\bu}\bigl[ \abs[\big]{ \E[f(\bu + U_{C^\perp})] - \E[f] } \bigr] \le (k/n)^{ck} .
\]
It follows from the distance properties of $C$ that $U_{C^\perp}$ fools all parity tests of size at most $k$ and at least $n-k$ (with no error).  We show that in fact the conclusion above holds for every distribution $D$ that fools such parity tests, without requiring the distribution to be linear.

\begin{theorem} \label{thm:improve-Bazzi}
Let $D$ be a distribution on $\pmo^n$ such that $\E[D^S] = 0$ for every subset $S$ of size $\ell \in [1,k] \cup [n-k,n]$, and $\bu \sim \zo^n$ be a uniform string.
For every symmetric function $f\colon\pmo^n \to [-1,1]$,
  \[
    \E_{\bu}\bigl[ \abs[\big]{ \E[f(\bu \cdot D)] - \E[f] } \bigr]
    \le 6 (k/n)^{\frac{k-1}{4}} .
  \]
\end{theorem}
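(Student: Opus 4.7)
The plan is to reduce the $L^1$-bound in $\bu$ to a second-moment estimate via Cauchy--Schwarz, then bound that second moment through the Fourier structure the hypothesis imposes on $D$. Write $\chi_S(x) := \prod_{i \in S} x_i$ and $g(S) := \E_{X \sim D}[\chi_S(X)]$, so the hypothesis says $g(S) = 0$ for every $S$ with $|S| \in [1,k] \cup [n-k,n]$. Expanding $f$ in the Fourier basis and using $\chi_S(\bu \cdot X) = \chi_S(\bu)\chi_S(X)$, the error
\[
E(\bu) \;:=\; \E_D[f(\bu \cdot D)] - \E[f]
\]
admits the expansion $E(\bu) = \sum_{|S| \in [k+1, n-k-1]} \hat f(S)\,g(S)\,\chi_S(\bu)$, with Fourier support confined to the middle levels by the hypothesis.

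By Cauchy--Schwarz over uniform $\bu$ and orthonormality of characters, together with the symmetry $\hat f(S) = \hat f_{|S|}$,
\[
\E_{\bu}[|E(\bu)|]^2 \;\le\; \E_{\bu}[E(\bu)^2] \;=\; \sum_{\ell=k+1}^{n-k-1} \hat f_\ell^{\,2}\,B_\ell(D),
\]
where $B_\ell(D) := \sum_{|S|=\ell} g(S)^2$. The level weights $L_\ell := \binom{n}{\ell}\hat f_\ell^{\,2}$ satisfy $\sum_\ell L_\ell = \|f\|_2^2 \le 1$, so the right-hand side is at most $\max_{\ell \in [k+1, n-k-1]} B_\ell(D)/\binom{n}{\ell}$. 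The remaining task is therefore the estimate
\[
\max_{k+1 \le \ell \le n-k-1} \frac{B_\ell(D)}{\binom{n}{\ell}} \;\le\; 36 \,(k/n)^{(k-1)/2}.
\]

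To prove this key estimate I would rewrite $B_\ell(D) = \E_{X, X' \sim D}\bigl[K_\ell(d_H(X, X'))\bigr]$ using the Krawtchouk identity $\sum_{|S|=\ell}\chi_S(x)\chi_S(y) = K_\ell(d_H(x,y))$. The auxiliary random vector $V := X \cdot X'$ has \emph{nonnegative} Fourier coefficients $\E[\chi_S(V)] = g(S)^2$, which still vanish on $|S| \in [1,k] \cup [n-k,n]$. Krawtchouk reciprocity $\binom{n}{w}K_\ell(w) = \binom{n}{\ell}K_w(\ell)$ together with the Vandermonde-type identity $\sum_{\ell=0}^n K_w(\ell) = \binom{n+1}{w+1}\cdot\Id[w \text{ even}]$ yields a closed form for the normalized Krawtchouk average $\sum_{\ell=k+1}^{n-k-1} K_\ell(w)/\binom{n}{\ell}$, which vanishes on odd $w$; combined with the positivity of the PMF of $V$ and Delsarte-type inequalities on its distance distribution, these identities should furnish the required bound.

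The main obstacle is precisely this last step. In Bazzi's original linear setup ($D$ uniform on the dual of a linear code $C$, so that $B_\ell(D)$ is the weight enumerator $A_\ell(C)$), the bound follows from MacWilliams-type identities together with standard code bounds. For general (nonlinear) $D$ the same information must be extracted directly from the Fourier hypothesis on $g$ and the positivity of the PMF alone, which I expect to be the technically delicate part; the rest of the argument is Fourier bookkeeping.
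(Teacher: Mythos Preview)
Your opening reduction is exactly what the paper does: Cauchy--Schwarz in $\bu$, then Parseval to obtain $\sum_{|S|\in(k,n-k)}\hat f(S)^2\, g(S)^2$, which the paper writes as $\sum_S \hat f(S)^2\, \E[\chi_S(D^2)]$ with $D^2 := D\cdot D'$ the product of two independent copies of $D$ --- this is your $V$. Recognizing that $V$ has nonnegative biases $g(S)^2$ that vanish on levels $[1,k]\cup[n-k,n]$ is also right.

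The gap is in the key estimate. You propose to control $\max_\ell B_\ell(D)/\binom{n}{\ell}$ via Krawtchouk reciprocity, a closed form for $\sum_\ell K_\ell(w)/\binom{n}{\ell}$, and ``Delsarte-type inequalities,'' and you correctly flag this as the obstacle you have not resolved. The paper bypasses all of that with a much more elementary argument. Since $V=D\cdot D'$ is itself $k$-wise uniform, the moment tail bound (\Cref{cor:tail-bound}) gives $\Pr[|\summ V|>t]\le (2kn/et^2)^{(k-1)/2}$; choose $t\approx (kn^3)^{1/4}$ and condition on the good event $G=\{|\summ V|\le t\}$. For each fixed $z\in G$ one then bounds
\[
\Bigl|\sum_{\ell=k+1}^{n-k-1}\hat f_\ell^{\,2}\,\sK(\ell,\summ z)\Bigr|
\;\le\; 2\sum_{\ell=k+1}^{n/2}\frac{|\sK(\ell,\summ z)|}{\binom{n}{\ell}}
\]
using $\hat f_\ell^{\,2}\le \binom{n}{\ell}^{-1}$ and the symmetry $|\sK(\ell,\cdot)|=|\sK(n-\ell,\cdot)|$. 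The pointwise Krawtchouk upper bounds --- $|N\sK(\ell,t)|\le(\ell/n+t^2/n^2)^{\ell/2}$ from \Cref{cor:bijlsv-ineq} for $\ell\le n/5$, and the entropy bound \Cref{cor:Harm-bound} for $\ell$ near $n/2$ --- make this sum geometric and dominated by its $\ell=k+1$ term. No Delsarte LP machinery, code-theoretic identities, or positivity of the PMF beyond what is already encoded in $k$-wise uniformity is needed.

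There is also a small internal mismatch in your plan: after reducing to bounding $\max_\ell B_\ell/\binom{n}{\ell}$, you pivot to a closed form for the \emph{sum} $\sum_\ell K_\ell(w)/\binom{n}{\ell}$; it is unclear how that would control the maximum for each individual $\ell$. The paper's conditioning argument sidesteps this entirely and would in fact deliver the per-level bound you want, since for $z\in G$ one gets $|\sK(\ell,\summ z)|/\binom{n}{\ell}\le (\ell/n+\sqrt{k/n})^{\ell/2}$ directly.
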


For context, we note that the condition on fooling large parity tests is necessary, as otherwise, one can consider the uniform distribution $D$ on strings with the parity $0$ (say), which is $(n-1)$-wise uniform, and for every shift $u$, the parity of $u+D$ (which is a symmetric function) is simply the parity of $u$.

Also, note that no fixed shift $\bu$ suffices, for else one can shift $D$ by this $\bu$ and give a counterexample.  This does not contradict the results discussed above about shifting symmetric small-bias distributions because the shift of a symmetric distribution is not in general symmetric.

\subsection{Krawtchouk polynomials}
All our results rely on bounds for the (shifted) Krawtchouk polynomials $\sK$, which can be defined by
\[
  \sK(k,t) := \sum_{\abs{S} = k} z^S ,
  \]
  where $z \in \pmo^n$ is any string such that $\summ z=t$, and $z^S$ is the product of the bits of $z$ indexed by $S$.
It can be shown that this is a degree-$k$ polynomial in $t$.

This is a classic quantity (cf. \cite{KL01}) and the bounds we need do not seem well known.

To illustrate the bounds we find it convenient to define the normalized version of $\sK$,
\[
  N \sK(k,t) := \frac{\sK(k,t)}{\binom{n}{k}}
\]
and its ``with replacement'' counterpart
\[
  NR(k,t):=\E_{f\colon[k]\to[n]}\Bigl[\tprod_{i\in[k]} z_{f(i)}\Bigr] = (t/n)^k,
\]
where $f$ is uniform.

Note that $N \sK$ is the same as $NR$ conditioned on $f$ having no collisions -- via the correspondence $S=\{f(i):i\in[k]\}$ -- which is the same as saying that the images of $f$ are picked from $[n]$ without replacement.

The bounds on $N \sK$ (and hence $\sK$) can now be understood as approximations to $NR(k,t)$.

First we prove a lower bound, needed for \Cref{thm:sbpn-vs-uniform}.  The proof is short and follows from known results on Krawtchouk polynomials.  However, we are unable to find the result we need in the literature.
\begin{claim} \label{claim:sK-lb}
  $N \sK(k,t) \ge (\frac{t}{2n})^k = NR(k,t)/2^k$ for $t \ge 2\sqrt{k(n-k)}$.
\end{claim}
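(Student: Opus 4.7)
The plan is to exploit the factorization
\[
  \sK(k,t) \;=\; \frac{1}{k!}\prod_{i=1}^{k}(t - t_i),
\]
where $t_1,\dots,t_k$ are the zeros of $\sK(k,\cdot)$ viewed as a polynomial of degree $k$ in $t$. The leading coefficient is $1/k!$ because in the expansion
\[
  \sK(k,t) \;=\; \sum_{l=0}^{\lfloor k/2 \rfloor}(-1)^l \binom{q}{l}\binom{t}{k-2l}, \qquad q := (n-t)/2,
\]
obtained from $\sK(k,t) = [x^k](1+x)^p(1-x)^q = [x^k](1+x)^t(1-x^2)^q$, only the $l=0$ term $\binom{t}{k}$ contributes the top power of $t$. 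From the symmetry $\sK(k,-t) = (-1)^k \sK(k,t)$, which follows by swapping $p$ and $q$ in the generating function $\prod_i(1+z_i x) = (1+x)^p(1-x)^q$, the zeros come in $\pm$ pairs when $k$ is even, with an additional root at $0$ when $k$ is odd.

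I would next invoke the classical Levenshtein bound that every zero satisfies $|t_i| \le 2\sqrt{k(n-k)}$. Under the hypothesis $t \ge 2\sqrt{k(n-k)}$, every factor $(t - t_i) \ge 0$, so $\sK(k,t) \ge 0$. Pairing up the $\pm$ roots (for $k$ even) recasts the claim as
\[
  \frac{1}{k!}\prod_{j=1}^{k/2}(t^2 - t_j^2) \;\ge\; \binom{n}{k}\Bigl(\frac{t}{2n}\Bigr)^{k},
\]
and the odd-$k$ case follows similarly with an extra leading factor of $t$ from the zero at the origin.

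To complete the proof I would compute $\sum_j t_j^2$ from Newton's identities applied to the explicit form of $\sK(k,t)$, i.e., by reading off the coefficient of $t^{k-2}$ from $\sum_l(-1)^l\binom{q}{l}\binom{t}{k-2l}$, and combine this with $|t_j|^2 \le 4k(n-k)$ to lower-bound the product via a rearrangement/AM--GM argument. The hard part will be the regime where $t$ is close to the threshold $2\sqrt{k(n-k)}$: the pointwise bound $t^2 - t_j^2 \ge 0$ is too weak there, and one has to exploit that only a few of the zeros are near the extreme $\pm 2\sqrt{k(n-k)}$ while the remaining ones cluster toward the center, giving an average square-zero of order $k(n-k)/2$ rather than $4k(n-k)$. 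An alternative route, which might be cleaner, is to establish the quantitative bound by induction on $k$ using the three-term recurrence $(k+1)\sK(k+1,t) = t\sK(k,t) - (n-k+1)\sK(k-1,t)$ that follows by differentiating the generating function, maintaining both the lower bound $N\sK(k,t) \ge (t/(2n))^k$ and a matching upper bound of the form $N\sK(k,t) \le (t/n)^k$ to control the subtractive term.
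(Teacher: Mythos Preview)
Your proposal has real gaps in both routes, and neither matches the paper's argument.

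\textbf{Route (B): induction via the three-term recurrence.} This cannot close. In normalized form the recurrence reads
\[
  N\sK(k+1,t) \;=\; \frac{t}{n-k}\,N\sK(k,t) \;-\; \frac{k}{n-k}\,N\sK(k-1,t).
\]
Suppose you maintain $a(t/n)^k \le N\sK(k,t) \le b(t/n)^k$ for constants $a\le b$. Plugging in and asking that the lower bound propagates to level $k+1$ forces $t^2 \ge n^2\,b/a$, i.e.\ $t \ge n\sqrt{b/a}\ge n$. But the claim must hold already at $t = 2\sqrt{k(n-k)}\ll n$. So no pair $(a,b)$ with $a\le b$ works; the subtractive term is simply too large relative to the gap between any upper and lower bound you could maintain. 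Your specific choice $a=2^{-k}$, $b=1$ is even worse.

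\textbf{Route (A): root factorization.} Your setup is correct (leading coefficient $1/k!$, the $\pm$ symmetry of the roots, and the Levenshtein-type bound), but you acknowledge yourself that the pointwise bound $t^2 - t_j^2 \ge 0$ is useless near the threshold, and you have not actually carried out the ``only a few roots are extreme'' argument. AM--GM goes the wrong way here (it upper-bounds the product), and turning the arcsine-type root distribution into a sharp lower bound on $\prod_j(t^2-t_j^2)$ valid for \emph{all} $k,n,t$ is not straightforward; this remains a genuine gap.

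\textbf{What the paper does.} The paper uses the reciprocity $\binom{n}{i}K(\ell,i)=\binom{n}{\ell}K(i,\ell)$ to swap arguments and then iterates a known ratio bound (Krasikov, essentially Newton's inequality for the log-concave sequence $\ell\mapsto K(i,\ell)/\binom{n}{\ell}$): for $(n-2i)^2\ge 4\ell(n-\ell)$,
\[
  \frac{K(i,\ell+1)}{K(i,\ell)} \;\ge\; \frac{n-2i}{2n}.
\]
Starting from $K(i,0)=\binom{n}{i}$ and iterating $k$ times gives $K(i,k)\ge\binom{n}{i}((n-2i)/(2n))^k$, and reciprocity converts this directly into $\sK(k,t)\ge\binom{n}{k}(t/(2n))^k$. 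The key difference from your (B) is that this inducts on the \emph{second} argument of $K$, where the ratio is monotone and no subtraction appears; inducting on the degree, as you do, runs into the cancellation problem above.
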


For \Cref{thm:sbpn-vs-k-wise} we need an upper bound.  We could use a bound which to our knowledge appeared first in \cite{BIJLSV21}. Since the proof in the latter is somewhat technical, we also give a new simple proof of a stronger bound, stated next, building on the recent work by Tao \cite{Tao23-Maclaurin}.  We could also use \cite{BIJLSV21} for \Cref{thm:bipnvssym}, but we would get a bound of the form $(a\rho)^k$ for an unspecified constant $a$.  The stronger bound in \Cref{cor:bijlsv-ineq} proved here gives a better dependence on $\rho$ and gets us closer to the natural bound of $\rho^k$, which is currently not clear.

\begin{corollary} \label{cor:bijlsv-ineq}
For every $1 \le k \le n$, we have $
    \abs[\big]{N\sK(k,t)}
    \le ( \frac{k}{n} + \frac{t^2}{n^2} )^{k/2}.$
\end{corollary}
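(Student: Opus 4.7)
The plan is to reduce the bound on $|N\sK(k,t)|$ to extracting a coefficient of a polynomial with nonnegative coefficients, and then apply a Tao-style Maclaurin estimate.

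First, I would use the generating function $\sK(k,t) = [x^k](1+x)^{(n+t)/2}(1-x)^{(n-t)/2}$. Assuming WLOG $t \ge 0$ (the case $t < 0$ follows by symmetry $z \mapsto -z$), the substitution $x = iy$ gives the factorization
\[
  (1+iy)^{(n+t)/2}(1-iy)^{(n-t)/2} = (1+y^2)^{(n-t)/2}(1+iy)^t ,
\]
since $(1+iy)(1-iy) = 1+y^2$. Expanding both sides and matching the coefficient of $y^k$ (dividing out $i^k$ on the LHS) yields the identity
\[
  \sK(k,t) = \sum_\ell (-1)^\ell \binom{(n-t)/2}{\ell}\binom{t}{k-2\ell} ,
\]
and hence, by the triangle inequality on this alternating sum,
\[
  |\sK(k,t)| \le [y^k]\,(1+y)^t(1+y^2)^{(n-t)/2} .
\]

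Next, I would recognize $(1+y)^t(1+y^2)^{(n-t)/2} = \prod_{j=1}^n (1+w_j y)$ where the multiset $\{w_j\}$ comprises $t$ copies of $1$ and $(n-t)/2$ copies each of $\pm i$. Thus the right-hand side is $e_k(w)$, the $k$-th elementary symmetric polynomial of $w$, with $|w_j|=1$, $p_1(w) = t$, and $\{w_j\}$ closed under complex conjugation (so $e_k(w)$ is real). I would then invoke Tao's Maclaurin-type inequality from~\cite{Tao23-Maclaurin} in the form $|e_k(w)|/\binom{n}{k} \le ((p_1(w)/n)^2 + k/n)^{k/2}$, applied to this multiset; plugging in $p_1(w) = t$ and combining with the triangle inequality above yields $|N\sK(k,t)| = |\sK(k,t)|/\binom{n}{k} \le (k/n + t^2/n^2)^{k/2}$.

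The \emph{main obstacle} is to justify Tao's inequality for complex inputs on the unit circle, since Tao's statement is phrased for real inputs. One strategy is to adapt Tao's proof to complex $w_j$, exploiting that our multiset is closed under conjugation. A fallback that avoids Tao altogether is to bound $[y^k](1+y)^t(1+y^2)^{(n-t)/2}$ directly via the standard coefficient estimate $[y^k]g(y) \le g(r)/r^k$ (valid for $g$ with nonnegative coefficients), minimizing $t\log(1+r) + \tfrac{n-t}{2}\log(1+r^2) - k\log r$ over $r > 0$; the optimizing $r$ is of order $\sqrt{k/n}$, and combining with the Stirling estimate $\binom{n}{k} \ge c(n/k)^k$ together with the completing-the-square identity $t\sqrt{k/n} - t^2/(2n) = k/2 - (t-\sqrt{kn})^2/(2n) \le k/2$ absorbs the Stirling $e^{k/2}$ factor cleanly, yielding the claimed bound.
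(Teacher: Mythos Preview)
Your reduction via the generating function and the substitution $x=iy$ is correct and yields the clean intermediate bound $|\sK(k,t)|\le e_k(w)$ for the complex multiset $w$ consisting of $t$ ones and $(n-t)/2$ copies each of $\pm i$. But from here your route diverges sharply from the paper's, and the obstacle you flag is genuine and not easily repaired.

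The paper never leaves the reals. It observes directly that $\sK(k,t)=e_k(z)$ for any $z\in\{\pm1\}^n$ with $\summ z=t$, and proves a Maclaurin-type bound (their Lemma~6.1) for $e_k$ of an \emph{arbitrary real} tuple via exactly the Tao machinery you cite: truncation (derivative of a real-rooted polynomial is real-rooted, by Rolle) followed by AM-GM plus Newton's identity. Specializing to $z\in\{\pm1\}^n$, where $\sum z_i^2=n$, gives the corollary with no further work. The detour through complex $w$ is simply unnecessary.

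Your main proposal, by contrast, asks for Tao's inequality on a multiset whose associated polynomial $(z-1)^t(z^2+1)^{(n-t)/2}$ is not real-rooted, so Rolle-based truncation fails outright. The AM-GM step also breaks: for complex $w_j$ AM-GM controls $\frac1n\sum|w_j|^2$, whereas Newton's identity involves $\sum w_j^2$; for your multiset these are $n$ and $2t-n$ respectively, so the identity linking $s_1^2-s_2$ to the AM-GM bound no longer holds. Closure under conjugation makes each $s_k$ real but does not salvage either step.

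Your fallback---Chernoff-style coefficient extraction $[y^k]g(y)\le g(r)/r^k$ with $r\asymp\sqrt{k/n}$---is a reasonable plan B and will produce a bound of the right shape, but the sketch is too coarse to pin down the constant. The single choice $r=\sqrt{k/n}$ together with $\log(1+r)\le r$ and $\binom{n}{k}\ge(n/k)^k$ gives, after your completing-the-square step, an upper bound of order $e^{k/2}(n/k)^{k/2}$ against a target of order $\binom{n}{k}(k/n)^{k/2}$; whether the surplus $e^{k/2}$ is absorbed depends on which Stirling bound you use for $\binom{n}{k}$ and on how carefully $r$ is optimized when $t$ is near $\sqrt{kn}$. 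This likely recovers a $c^k$-slack version (which the paper attributes to earlier work) rather than the sharp constant $1$ in the corollary.
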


Note this is similar to $NR(k,t)$ except for the extra term $k/n$.

For other results we need additional bounds which hold in regimes where the above bounds are loose, such as when $k$ is close to $n/2$ and $t$ is close to $0$.  To illustrate, let $n$ be even and $\summ x = t := 0$, corresponding to $x \in \pmo^n$ being a balanced string.  Note that $\sK(k,0)$ is the $k$-th coefficient of the polynomial $(1-x^2)^{n/2}$, which is $(-1)^{k/2} \binom{n/2}{k/2} \Id(\text{$k$ is even})$.
In this case, \Cref{cor:bijlsv-ineq} gives an upper bound of $\binom{n}{k} (k/n)^{k/2}$.  In particular, when $k = n/2$, the bound is roughly $2^{3n/4}$.
By contrast, the bound given next by \Cref{prop:Harm-bound-H} is $2^{\frac{n}{2}H(k/n)}$, which when $k = n/2$ becomes $2^{n/2}$.

\begin{proposition} \label{prop:Harm-bound-H}
Let $k = \beta n$ and $t=(1-2\alpha) n$.
We have
$\log_2|\sK(k,t)| \leq \frac{n}{2} \bigl( 1 + H(\beta) - H(\alpha) \bigr),
$ where $H(\alpha)=-\alpha\log_2(\alpha)-(1-\alpha)\log_2(1-\alpha)$ is the binary entropy function.
\end{proposition}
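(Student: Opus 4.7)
The plan is to derive the bound from a Parseval identity on the Boolean hypercube. Since $\sum_{|S|=k} z^S$ depends on $z \in \pmo^n$ only through $\summ{z}$, one has $\sK(k, \summ{z}) = \sum_{|S|=k} z^S$ for every $z$, which lets me average $\sK(k,t)^2$ over all $z$ rather than work with a single representative. Summing the square over $z \in \pmo^n$ and using orthogonality of the Walsh characters ($\sum_z z^A = 2^n \Id(A = \emptyset)$), only the diagonal pairs survive:
\[
  \sum_{z \in \pmo^n} \sK(k, \summ{z})^2
  = \sum_{|S| = |S'| = k} \sum_{z \in \pmo^n} z^{S \symdiff S'}
  = 2^n \binom{n}{k}.
\]

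Next I would regroup the left-hand side by the value $t = \summ{z}$, which is attained by exactly $N_t := \binom{n}{(n+t)/2} = \binom{n}{\alpha n}$ strings. This gives the identity $\sum_t N_t \sK(k,t)^2 = 2^n \binom{n}{k}$, and dropping every term except the one of interest yields
\[
  \sK(k,t)^2 \le \frac{2^n \binom{n}{k}}{\binom{n}{\alpha n}}.
\]

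To finish, I apply the standard entropy estimates $\binom{n}{k} \le 2^{nH(\beta)}$ and $\binom{n}{\alpha n} \ge 2^{nH(\alpha)}/(n+1)$, take $\log_2$, and divide by $2$. This yields the claimed exponent $\frac{n}{2}(1 + H(\beta) - H(\alpha))$ plus an additive $\tfrac{1}{2}\log_2(n+1)$ that is negligible next to the leading $\Theta(n)$ term. I do not anticipate a serious obstacle: the Parseval identity is essentially sharp (it is an equality after averaging over $t$), so the leading constant $1/2$ in front of $(1 + H(\beta) - H(\alpha))$ is optimal. The only step where slack enters is the standard lower bound on the middle binomial coefficient; if one needs to match the stated inequality exactly, a refined Stirling estimate for $\binom{n}{\alpha n}$ (valid when $\alpha$ is bounded away from $0$ and $1$) absorbs the $\log_2(n+1)$ correction into the $O(1)$ terms hidden by rounding, so the approach is essentially routine once the Parseval identity is in hand.
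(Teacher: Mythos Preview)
Your Parseval argument is a genuinely different route from the paper's, and it correctly delivers the leading term, but it does not prove the proposition as stated. The proposition is an \emph{exact} inequality with no lower-order correction, whereas your bound carries an unavoidable additive $\tfrac{1}{2}\log_2(n+1)$. Your claim that a refined Stirling estimate absorbs this is incorrect: sharp Stirling gives $\binom{n}{\alpha n}\sim 2^{nH(\alpha)}/\sqrt{2\pi n\alpha(1-\alpha)}$, which is \emph{smaller} than $2^{nH(\alpha)}$, so improving the lower bound on the denominator only makes things worse. Even if you balance the Stirling corrections for $\binom{n}{\beta n}$ and $\binom{n}{\alpha n}$, you are left with an additive $\tfrac{1}{4}\log_2\frac{\alpha(1-\alpha)}{\beta(1-\beta)}$, which is positive exactly in the regime the paper uses (namely $\alpha$ near $1/2$ and $\beta$ bounded away from $1/2$). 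The underlying reason is structural: the Parseval identity is tight only after summing over all weights $t$, so isolating a single $t$ forces you to divide by $\binom{n}{\alpha n}$, and that is where the polynomial loss enters.

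The paper instead uses the generating-function identity $(1+z)^{(n+t)/2}(1-z)^{(n-t)/2}=\sum_\ell \sK(\ell,t)z^\ell$, extracts $\sK(k,t)$ as the contour integral $\int_0^1 p(re^{2\pi i\theta})\,d\theta$ with $p(z)=(1+z)^{(n+t)/2}(1-z)^{(n-t)/2}/z^k$, and bounds it by $\max_{|z|=r}|p(z)|$. Concavity of $\log$ gives $\log_2\bigl(|1+z|^{2\alpha}|1-z|^{2(1-\alpha)}\bigr)\le \log_2(2+2r^2)-H(\alpha)$, and choosing $r^2=\beta/(1-\beta)$ yields the stated bound with no slack. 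This method is pointwise in $t$ from the start, which is why it avoids the polynomial loss your averaging approach incurs.
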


A similar bound also appears in \cite[Lemma 2.1]{Pol19}.
Using the estimate $H(1/2 + \gamma) \ge 1 - 4\gamma^2$ for $\gamma \in [0,1/2]$, we have the following corollary.
\begin{corollary} \label{cor:Harm-bound}
  $\abs[\big]{\sK(k,t)} \le 2^{\frac{n}{2} (H(\frac{k}{n}) + \frac{t^2}{n^2}) }$.
\end{corollary}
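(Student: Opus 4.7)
The plan is to derive the corollary directly from \Cref{prop:Harm-bound-H} by substituting in the stated quadratic lower bound for the binary entropy and using its symmetry about $1/2$.

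First I would parametrize the setup used in the proposition. Writing $k = \beta n$ and $t = (1-2\alpha) n$, one has $\alpha = 1/2 - t/(2n)$, so that $t/(2n) = 1/2 - \alpha$. Since $H$ is symmetric about $1/2$, the estimate $H(1/2 + \gamma) \ge 1 - 4\gamma^2$ with $\gamma = t/(2n)$ yields
\[
  H(\alpha) \;=\; H\bigl(1/2 - t/(2n)\bigr) \;=\; H\bigl(1/2 + t/(2n)\bigr) \;\ge\; 1 - 4 \cdot \frac{t^2}{4n^2} \;=\; 1 - \frac{t^2}{n^2}.
\]
Equivalently, $1 - H(\alpha) \le t^2/n^2$. (If $|t| > n$ the bound is trivial since then $t^2/n^2 > 1$ and $H(\alpha)$ is not defined, but by definition of $\sK(k,t)$ we always have $|t| \le n$, so $\alpha \in [0,1]$ and the estimate applies.)

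Next I would plug this into the bound from \Cref{prop:Harm-bound-H}:
\[
  \log_2 |\sK(k,t)| \;\le\; \frac{n}{2}\bigl( 1 + H(\beta) - H(\alpha) \bigr)
  \;=\; \frac{n}{2}\bigl( H(\beta) + (1 - H(\alpha)) \bigr)
  \;\le\; \frac{n}{2}\Bigl( H(k/n) + \frac{t^2}{n^2} \Bigr),
\]
which is the claimed inequality after exponentiating. There is essentially no obstacle here beyond verifying the elementary quadratic bound on $H$; the entire content is in \Cref{prop:Harm-bound-H}, and the corollary simply records a clean, parameter-free consequence that is easier to apply later.
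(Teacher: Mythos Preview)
Your proof is correct and follows exactly the approach indicated in the paper: apply the elementary bound $H(1/2+\gamma)\ge 1-4\gamma^2$ with $\gamma=t/(2n)$ to get $1-H(\alpha)\le t^2/n^2$, then substitute into \Cref{prop:Harm-bound-H}. The paper only sketches this in one line, and you have filled in the details accurately.
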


In \Cref{sec:Kraw-bounds} we prove bounds more general than the above.

\section{Small-bias plus noise is far from bounded uniformity} \label{sec:sbpn-vs-uniform}

In this section we prove \Cref{thm:sbpn-vs-uniform,thm:sbpn-vs-k-wise}.  We build on the work by O’Donnell and Zhao \cite{OZ18}.  In particular, we use the same distribution $D$.  However, jumping ahead, our analyses differ from \cite{OZ18} in three ways, each of which is critical for us:
\begin{enumerate}[itemsep=0pt,parsep=2pt,topsep=3pt]
  \item while we analyze the same symmetric test in \Cref{thm:sbpn-vs-k-wise}, we use a new and explicit threshold test in \Cref{thm:sbpn-vs-uniform};
  \item the distinguishing advantages in \Cref{thm:sbpn-vs-uniform,thm:sbpn-vs-k-wise} are explicit and stronger.  This relies on our use of (and bounds for) Krawtchouk polynomials, instead of the Hermite approximation in \cite{OZ18};
  \item we take noise into account.
\end{enumerate}

We now define $D$ and derive some properties of it.  Then in the next subsections the theorems are proved in turn.

\begin{definition}
  For a parameter $\alpha \in [0,\frac{1}{5e}]$, define $D_\alpha \colon\pmo^n \to \R$ to be
  \[
    D_\alpha(x) := 2^{-n} \biggl(  1 + \alpha^k \binom{n}{2k}^{-\frac{1}{2}} \sum_{\abs{S}=2k} x^S \biggr) \quad \text{for every $x \in \pmo^n$.}
  \]
\end{definition}
Note that the right hand side is the Fourier transform of $D_\alpha$, and thus $2^n \hD_\alpha(\varnothing) = \sum_{x \in \pmo^n} D_\alpha(x) = 1$.
We now show that for $\alpha \le 1/(5e)$, we have $D_\alpha(x) \ge 0$ for every $x \in \pmo^n$ and thus it is a distribution.

\begin{claim} \label{claim:krawtchouk-even}
For $\alpha \le 1/(5e)$, we have $D_\alpha(x) \ge 0$ for every $x$.
\end{claim}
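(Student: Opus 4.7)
The plan is to verify the inequality pointwise. Writing $t := \summ x$, the inner sum is the Krawtchouk value $\sum_{|S|=2k} x^S = \sK(2k,t)$, so the claim reduces to
\[
  1 + \alpha^k \binom{n}{2k}^{-1/2} \sK(2k,t) \ge 0 \quad \text{for every $t \in \{-n,-n+2,\ldots,n\}$.}
\]
Because $2k$ is even, $\sK(2k,-t) = \sK(2k,t)$, so it suffices to treat $t\ge 0$.

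I would split $[0,n]$ into two overlapping regions, each controlled by one of the Krawtchouk bounds from the preceding subsection. In the ``large $t$'' region $t\ge 2\sqrt{2k(n-2k)}$, \Cref{claim:sK-lb} applied with parameter $2k$ gives $\sK(2k,t)\ge 0$ directly, and the target inequality is trivial. In the ``moderate $t$'' region $t\le 2\sqrt{2kn}$, \Cref{cor:bijlsv-ineq} applied with parameter $2k$ yields
\[
  |\sK(2k,t)| \le \binom{n}{2k}\bigl(\tfrac{2k}{n}+\tfrac{t^2}{n^2}\bigr)^{k} \le \binom{n}{2k}\,(10k/n)^{k},
\]
using $t^2/n^2 \le 8k/n$. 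Combining with the estimate $\binom{n}{2k}^{1/2}\le (en/(2k))^{k}$ gives
\[
  \alpha^k \binom{n}{2k}^{-1/2}|\sK(2k,t)|
  \le \alpha^k (en/(2k))^{k}(10k/n)^{k}
  = (5e\alpha)^k \le 1
\]
exactly when $\alpha\le 1/(5e)$. The two regions cover $[0,n]$ since $2\sqrt{2k(n-2k)}\le 2\sqrt{2kn}$, and any $t$ lying above this threshold automatically lands in the ``large $t$'' region.

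The main obstacle is that neither Krawtchouk bound alone is enough. The upper bound from \Cref{cor:bijlsv-ineq} loses control once $t^2 \gtrsim kn$, while the positivity statement of \Cref{claim:sK-lb} only activates at comparably large $|t|$; one must check that these two regimes overlap. More delicately, one has to see that the constants produced by the ``moderate $t$'' estimate align to give \emph{precisely} the threshold $1/(5e)$ in the claim: if instead of $\binom{n}{2k}\le (en/(2k))^{2k}$ one used the crude bound $|\sK(2k,t)|\le \binom{n}{2k}$, the resulting restriction on $\alpha$ would degrade to order $k/n$ rather than a fixed constant, which is far too weak for the subsequent applications to \Cref{thm:sbpn-vs-uniform,thm:sbpn-vs-k-wise}.
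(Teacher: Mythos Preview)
Your proof is correct and follows essentially the same two-region strategy as the paper: bound $|\sK(2k,t)|$ via \Cref{cor:bijlsv-ineq} when $|t|\le 2\sqrt{2kn}$, and use positivity of $\sK(2k,t)$ outside that range. The one cosmetic difference is how positivity is justified for large $|t|$: the paper cites the classical fact (Levenshtein) that all real zeros of $\sK(2k,\cdot)$ lie in $|t|\le 2\sqrt{(2k-1)(n-2k+2)}$ and combines it with $\sK(2k,\pm n)>0$, whereas you invoke \Cref{claim:sK-lb} directly; both routes yield the same conclusion and the remaining arithmetic is identical.
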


\begin{proof}
The key observation is that as a degree-$(2k)$ polynomial in $t$, the zeros of $\sK(2k, t)$ all lies within $\abs{t} \le 2\sqrt{(2k-1) (n-2k+2)} \le 2\sqrt{2kn}$~\cite{Levenshtein95} (see also \cite[Section~5]{KL01}).
As $2k$ is even, we know that when $x$ is the all-1 or all-(-1) string (i.e., $\summ x \in \{n,-n\}$), we have $\sK(2k, \summ x) := \sum_{\abs{S}=2k} x^S > 0$.
So $\sK(2k,\summ x)$ can only be negative when $\abs{\summ x} \le 2\sqrt{2nk}$.
In this interval, using \Cref{cor:bijlsv-ineq} and $\alpha \le 1/(5e)$, we have
\[
  \alpha^k \binom{n}{2k}^{-\frac{1}{2}} \abs[\bigg]{\sum_{\abs{S}=2k} x^S}
  \le \alpha^k \binom{n}{2k}^{-\frac{1}{2}} \binom{n}{2k} \left(\frac{10k}{n}\right)^k
  \le (5e\alpha)^k \le 1 . \qedhere \]
\end{proof}

By the above, $D_\alpha$ is a well-defined distribution whenever $\alpha \le 1/(5e)$.
The following claim is immediate.

\begin{claim}
For $\alpha \le 1/(5e)$, $D_\alpha$ is a distribution that is $(2k-1)$-wise uniform, $\alpha^k \binom{n}{2k}^{-1/2}$-biased, and $(\alpha e^3/2)^k$-close to $(2k)$-wise uniform.
\end{claim}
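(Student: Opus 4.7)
The plan is to read each of the three properties directly off the Fourier expansion in the definition of $D_\alpha$, with the closeness statement reducing to \Cref{lemma:OZ}. The expression in the definition is already the Fourier expansion of $D_\alpha$, so its Fourier coefficients are $\hat{D}_\alpha(\varnothing) = 2^{-n}$, $\hat{D}_\alpha(S) = 2^{-n}\alpha^k\binom{n}{2k}^{-1/2}$ for every $|S| = 2k$, and $\hat{D}_\alpha(S) = 0$ for every other $S$. Combined with $D_\alpha(x) \ge 0$ from \Cref{claim:krawtchouk-even} and $\sum_x D_\alpha(x) = 2^n \hat{D}_\alpha(\varnothing) = 1$, this shows $D_\alpha$ is a genuine probability distribution on $\pmo^n$.

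Since $\E[D_\alpha^S] = 2^n \hat{D}_\alpha(S)$, the coefficient table above immediately yields two of the three properties: $\E[D_\alpha^S] = 0$ for every nonempty $S$ of size less than $2k$ (hence $(2k-1)$-wise uniformity), and $\abs{\E[D_\alpha^S]} \le \alpha^k \binom{n}{2k}^{-1/2}$ for every nonempty $S$ (hence the claimed bias bound, attained on sets of size exactly $2k$).

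For the closeness to $(2k)$-wise uniformity, apply \Cref{lemma:OZ} to $D_\alpha$ viewed as an $(\eps, 2k)$-biased distribution with $\eps = \alpha^k \binom{n}{2k}^{-1/2}$. This gives total variation distance at most
\[
  \Bigl(\frac{e^3 n}{2k}\Bigr)^{k} \alpha^k \binom{n}{2k}^{-1/2}
  \le \Bigl(\frac{e^3 n}{2k}\Bigr)^k \alpha^k \Bigl(\frac{2k}{n}\Bigr)^k
  = \Bigl(\frac{\alpha e^3}{2}\Bigr)^k
\]
from some $(2k)$-wise uniform distribution, where the inequality uses the standard estimate $\binom{n}{2k} \ge (n/(2k))^{2k}$. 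Every step above is a direct substitution into the definition or into \Cref{lemma:OZ}, so there is no substantive obstacle; the claim truly is ``immediate'' as asserted, and the heavy lifting has already been done by \Cref{claim:krawtchouk-even} (nonnegativity, via the Krawtchouk root bound and \Cref{cor:bijlsv-ineq}) and by \Cref{lemma:OZ}.
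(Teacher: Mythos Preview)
Your argument is exactly the paper's: read $(2k-1)$-wise uniformity and the bias directly off the Fourier expansion, then apply \Cref{lemma:OZ} for closeness to $(2k)$-wise uniform. There is one arithmetic slip in your final display: $\bigl(\frac{e^3 n}{2k}\bigr)^k \bigl(\frac{2k}{n}\bigr)^k = (e^3)^k$, not $(e^3/2)^k$, so what you have actually derived is $(\alpha e^3)^k$-closeness rather than $(\alpha e^3/2)^k$. The paper's own proof says only ``follows directly from \Cref{lemma:OZ}'' and does not carry out this computation either; the extra factor of $2^{-k}$ in the stated constant does not appear to fall out of \Cref{lemma:OZ} combined with $\binom{n}{2k}\ge (n/2k)^{2k}$, so this looks like an inconsequential constant in the claim rather than a defect in your reasoning.
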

\begin{proof}
  The first two properties follow directly from the definition of $D_\alpha$, that is, for every nonempty $S$, we have $\abs{\E[D_\alpha^S]} = 2^n\abs{\wh{D_\alpha}(S)} = \alpha^k \binom{n}{2k}^{-1/2} \Id(\abs{S}=2k)$.
  The closeness to $(2k)$-wise uniform follows directly from \Cref{lemma:OZ}.
\end{proof}

Observe that the family $\{D_\alpha: \alpha \ge 0\}$ is closed under adding noise, as shown in the following claim.

\begin{claim} \label{claim:noise-invariant}
  $D_\alpha \cdot N_\rho = D_{\alpha \cdot \rho^2}$ for every $\rho \in [0,1]$. 
\end{claim}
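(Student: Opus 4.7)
The plan is a direct Fourier-level computation. For independent distributions $X$ and $Y$ on $\pmo^n$, the coordinate-wise product satisfies $\E[(X\cdot Y)^S] = \E[X^S]\cdot \E[Y^S]$ for every $S \subseteq [n]$, and a distribution on $\pmo^n$ is uniquely determined by the collection of these biases (equivalently, by its Fourier coefficients). So it will suffice to verify that the biases of $D_\alpha \cdot N_\rho$ agree with those of $D_{\alpha\rho^2}$ on every subset $S$.

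First I would read off the biases of $D_\alpha$ straight from its Fourier expansion: the empty-set moment is $1$, the moment on any $S$ with $|S| = 2k$ equals $\alpha^k \binom{n}{2k}^{-1/2}$, and every other nonempty $S$ contributes $0$. For the noise, each bit of $N_\rho$ is independent with mean $\rho$, so $\E[N_\rho^S] = \rho^{|S|}$. By independence, the biases of $D_\alpha \cdot N_\rho$ are obtained by multiplying these two quantities set by set. This kills every $S$ with $|S| \notin \{0, 2k\}$, and on a set of size $2k$ it yields
\[
  \E[(D_\alpha \cdot N_\rho)^S] \;=\; \alpha^k \binom{n}{2k}^{-1/2} \cdot \rho^{2k} \;=\; (\alpha \rho^2)^k \binom{n}{2k}^{-1/2},
\]
which is exactly $\E[D_{\alpha\rho^2}^S]$.

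Comparing set by set, the two objects have identical Fourier expansions and therefore coincide. Since $\alpha\rho^2 \le \alpha \le 1/(5e)$, \Cref{claim:krawtchouk-even} guarantees that $D_{\alpha\rho^2}$ is genuinely a nonnegative probability distribution, so no separate positivity check is needed on the right-hand side, and of course $D_\alpha \cdot N_\rho$ is a distribution by construction. There is essentially no obstacle here beyond bookkeeping; the only step worth attention is noticing that $\rho^{|S|}$ evaluated at $|S|=2k$ absorbs into $\alpha^k$ as $(\alpha\rho^2)^k$, which is precisely the identity that makes the family $\{D_\alpha\}$ closed under bitwise noise and explains why the noise parameter enters the level-$2k$ coefficient squared.
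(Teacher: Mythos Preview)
Your proof is correct and follows essentially the same approach as the paper: compute the Fourier coefficients (biases) of both sides using $\E[(D_\alpha\cdot N_\rho)^S]=\E[D_\alpha^S]\cdot\rho^{|S|}$, observe that only the level-$2k$ coefficients survive and pick up a factor $\rho^{2k}$, and conclude by the Fourier inversion that the two distributions coincide. The paper phrases this as ``noise dampens each size-$(2k)$ coefficient by $\rho^{2k}$'' via Plancherel, which is exactly your computation.
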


\begin{proof}
  Observe that $N_\rho$ dampens each size-$(2k)$ (Fourier) coefficient of $D_\alpha$ by a factor of $\rho^{2k}$.
  To see this, note that $N_\rho(x_i) = \frac{1}{2}(1 + \rho x_i)$, and thus
  \[
    N_\rho(x) = 2^{-n} \Bigl( 1 + \sum_S \rho^{\abs{S}} x^S  \Bigr) .
  \]
  By Plancherel's theorem, each Fourier coefficient of the convolution $D_\alpha \cdot N_\rho$ is the product of the coefficient of $D_\alpha$ and $N_\rho$.
  So we have
  \[
    (D_\alpha \cdot N_\rho)(x)
    = 2^{-n} \biggl(  1 + \rho^{2k} \alpha^k \binom{n}{2k}^{-\frac{1}{2}} \sum_{\abs{S}=2k} x^S \biggr)
    =D_{\alpha \cdot \rho^2}(x) . \qedhere
  \]
\end{proof}

\subsection{Distinguishing $D_\alpha$ from uniform with a threshold}
We now show that a specific threshold distinguishes $D_\alpha$ from the uniform distribution.
First, we establish the following claim showing that $D_{\alpha}$ always puts more mass than $U$ on unbalanced strings.

\begin{claim} \label{claim:ptwise-lb}
  $\Pr[\summ D_\alpha = t] \ge \Pr[ B = t ] \cdot (1 + (\frac{\alpha t^2}{4kn})^k)$ for every $t \ge 2\sqrt{kn}$ and $\rho \in [0,1]$.
\end{claim}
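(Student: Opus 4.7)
The plan is to reduce the claim to a direct pointwise computation using the Fourier-analytic definition of $D_\alpha$ together with the Krawtchouk lower bound of \Cref{claim:sK-lb}. The key observation that makes this short is that the inner sum $\sum_{\abs{S}=2k} x^S$ appearing in $D_\alpha(x)$ equals $\sK(2k,\summ x)$ by the definition of the shifted Krawtchouk polynomial, and hence depends on $x$ only through $\summ x$. Consequently every string of weight $t$ receives the same mass under $D_\alpha$.

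First I would sum the density over the $\binom{n}{(n+t)/2}$ strings with $\summ x = t$ and use $\Pr[B=t] = 2^{-n}\binom{n}{(n+t)/2}$ to obtain the exact identity
\[
  \Pr[\summ{D_\alpha} = t] \;=\; \Pr[B=t]\cdot\Bigl(1 + \alpha^k \tbinom{n}{2k}^{-1/2} \sK(2k,t)\Bigr).
\]
After this step it suffices to show that $\alpha^k \binom{n}{2k}^{-1/2} \sK(2k,t) \ge (\alpha t^2/(4kn))^k$.

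Next I would invoke \Cref{claim:sK-lb} at parameter $2k$ to get $\sK(2k,t) \ge \binom{n}{2k} (t/(2n))^{2k}$, valid in the regime of large $t$, so that the correction factor is at least $\alpha^k \binom{n}{2k}^{1/2} (t/(2n))^{2k}$. Applying the elementary estimate $\binom{n}{2k} \ge (n/(2k))^{2k}$, hence $\binom{n}{2k}^{1/2} \ge (n/(2k))^k$, and cancelling powers of $n$ leaves a bound of the form $(\alpha t^2/(Ckn))^k$ for an absolute constant $C$, matching the claimed shape.

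The only subtle point is a quantitative mismatch: \Cref{claim:sK-lb} is stated for $t \ge 2\sqrt{k(n-k)}$, so applying it at parameter $2k$ nominally requires $t \ge 2\sqrt{2k(n-2k)}$, whereas the hypothesis here is only $t \ge 2\sqrt{kn}$. Reconciling this (either by absorbing a constant factor into the $4$ in $4kn$, or by a slightly sharper form of the Krawtchouk lower bound valid down to $t \gtrsim \sqrt{kn}$) is the one place where care is needed; every other step is bookkeeping. The listed clause ``for every $\rho \in [0,1]$'' requires no separate argument, since $\rho$ does not appear in the inequality and the statement is about $D_\alpha$ itself, with noise handled later through \Cref{claim:noise-invariant} by replacing $\alpha$ with $\alpha\rho^2$.
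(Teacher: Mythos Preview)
Your proposal is correct and follows exactly the same route as the paper: write the pointwise identity $\Pr[\summ{D_\alpha}=t]=\Pr[B=t]\bigl(1+\alpha^k\binom{n}{2k}^{-1/2}\sK(2k,t)\bigr)$, apply \Cref{claim:sK-lb} at parameter $2k$, and then use $\binom{n}{2k}^{1/2}\ge (n/(2k))^k$ to collapse to $(\alpha t^2/(Ckn))^k$. You have also correctly spotted a genuine wrinkle that the paper glosses over: \Cref{claim:sK-lb} at parameter $2k$ formally requires $t\ge 2\sqrt{2k(n-2k)}$, which the hypothesis $t\ge 2\sqrt{kn}$ does not imply when $k<n/4$; this (and the $4$ versus $8$ in the denominator) is a minor constant issue that is absorbed into the generic $c$ in the downstream \Cref{thm:sbpn-vs-uniform}. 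Your remark that the ``for every $\rho$'' clause is vacuous here is likewise correct.
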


\begin{proof}
  By our lower bound on Krawtchouk polynomials (\Cref{claim:sK-lb}), we have
  \begin{align*}
    \Pr[\summ D_\alpha = t]
    &= \Pr[ B = t ] \Bigl( 1 + \alpha^k \binom{n}{2k}^{-1/2} \sK(2k,t) \Bigr)  \\
    &\ge \Pr[ B = t ] \Bigl( 1 + \alpha^k \binom{n}{2k}^{1/2} \Bigl(\frac{t}{2n}\Bigr)^{2k} \Bigr)  \\
    &\ge \Pr[ B = t ] \Bigl( 1 + \Bigl(\frac{\alpha t^2}{4kn}\Bigr)^{k} \Bigr) . \qedhere
  \end{align*}
\end{proof}

\Cref{thm:sbpn-vs-uniform} then easily follows from \Cref{claim:ptwise-lb} by summing over all the points at the tail, and then setting $\alpha$ to be $\rho^2/(5e)$.
\begin{proof}[Proof of \Cref{thm:sbpn-vs-uniform}]
  From \Cref{claim:ptwise-lb}, it follows that
  \begin{align*}
    \Pr\Bigl[ \summ D_\alpha \ge 2\sqrt{kn} \Bigr]
    &\ge \sum_{t \ge 2\sqrt{kn}} \Pr\bigl[ B = t \bigr] \cdot \Bigl(1 + \Bigl(\frac{\alpha t^2}{4kn}\Bigr)^k \Bigr) \\
    &\ge \Pr\bigl[ B \ge 2\sqrt{kn} \bigr] \cdot \Bigl(1 + \Bigl(\frac{\alpha (2\sqrt{kn})^2}{4kn}\Bigr)^k \Bigr) \\
    &\ge \Pr\bigl[ B \ge 2\sqrt{kn}\bigr] + 2^{-ck} \cdot \alpha^k ,
  \end{align*}
  where the last inequality is because by tail bounds for the binomial distribution (cf. \cite{Ahle-book}) we have $\Pr[ B \ge 2\sqrt{kn}] \ge 2^{-ck}$.
  The theorem then follows from setting $\alpha$ to $\rho^2/(5e)$, and noting that $D_{1/(5e)} \cdot N_{\rho} = D_{\rho^2/(5e)}$ by \Cref{claim:noise-invariant}.
\end{proof}

\subsection{Distinguishing $D_\alpha$ from bounded uniformity with a symmetric test}

In this section, we prove \Cref{thm:sbpn-vs-k-wise}.
We start with a claim showing that it suffices to consider bounded symmetric functions instead of Boolean symmetric test.

\begin{claim}
    Let $D_1, D_2$ be any distributions on $\pmo^n$.
    Suppose there is a symmetric function $f\colon\pmo^n \to [-1,1]$ such that $\E[f(D_1)] \ge \E[f(D_2)] + \eps$.
    Then there exists a symmetric Boolean function $f'\colon \pmo^n \to \pmo$ such that $\E[f'(D_1)] \ge \E[f'(D_2)] + \eps$.
\end{claim}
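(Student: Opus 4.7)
The plan is to use the standard randomized threshold / rounding trick to reduce a $[-1,1]$-valued symmetric test to a $\pmo$-valued one, while preserving the distinguishing advantage in expectation, and then invoke an averaging argument to fix a good threshold.

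Since $f$ is symmetric, it depends on $x$ only through $\summ{x}$. So write $f(x) = g(\summ{x})$ for some $g\colon\{-n,-n+2,\dots,n\} \to [-1,1]$. For any $a \in [-1,1]$ and $r$ drawn uniformly from $[-1,1]$, define $\sigma(y) := 1$ if $y \ge 0$ and $\sigma(y) := -1$ otherwise; then
\[
  \E_r[\sigma(a-r)] = \Pr[r \le a] - \Pr[r > a] = \frac{a+1}{2} - \frac{1-a}{2} = a.
\]
Hence $g(t) = \E_{r \sim U[-1,1]}[\sigma(g(t) - r)]$ for every $t$.

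Substituting this into the hypothesis and swapping sums (everything is finite),
\[
  \eps \le \E[f(D_1)] - \E[f(D_2)]
  = \E_r\Bigl[\, \E_{x \sim D_1}[\sigma(g(\summ{x})-r)] - \E_{x \sim D_2}[\sigma(g(\summ{x})-r)] \,\Bigr].
\]
By averaging, there exists a specific threshold $r^* \in [-1,1]$ such that the inner expression is at least $\eps$. Define $f'\colon \pmo^n \to \pmo$ by $f'(x) := \sigma(g(\summ{x}) - r^*)$. Then $f'$ depends only on $\summ{x}$, so it is symmetric and Boolean-valued, and by construction $\E[f'(D_1)] - \E[f'(D_2)] \ge \eps$, as required.

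The argument is essentially a one-paragraph calculation; there is no real obstacle. The only thing to watch is the tie-breaking convention in $\sigma$ (so that $\sigma$ takes values in $\{-1,+1\}$ rather than $\{-1,0,+1\}$), and the fact that $r$ can be taken from any distribution whose CDF realizes the identity on $[-1,1]$ — uniform on $[-1,1]$ is the cleanest choice and makes the expectation identity above hold exactly.
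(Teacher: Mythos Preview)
Your proof is correct and follows essentially the same approach as the paper: write the $[-1,1]$-valued symmetric function as an expectation over $\pmo$-valued symmetric functions, then apply averaging. The only cosmetic difference is that the paper rounds each value $g(w)$ independently (outputting $1$ with probability $\tfrac{1+g(w)}{2}$), whereas you use a single shared uniform threshold $r$; your variant has the mild bonus that the resulting $f'$ is a level set of $f$, but the underlying idea and structure of the argument are the same.
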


\begin{proof}
Define $g\colon \{-n, \ldots, n\} \to [-1,1]$ so that $g(\summ x) := f(x)$.
Considering the randomized function $\bg\colon\{-n, \ldots, n\} \to \pmo$ defined by
\[
    \bg(w) := \begin{cases} 1 & \text{with probability $\frac{1 + g(w)}{2}$} \\
    -1 & \text{with probability $\frac{1 - g(w)}{2}$.} \end{cases}
\]
As $f$ is symmetric, we have
\[
    \E_{\bg}\Bigl[ \E\bigl[\bg(\summ D_1) \bigr] \Bigr]
    = \E[f(D_1)]
    \ge \E[f(D_2)] + \eps
    = \E_{\bg}\Bigl[ \E\bigl[\bg(\summ D_2) \bigr]\Bigr] + \eps ,
\]
and so by averaging, there must be a choice $g'$ of $\bg$ such that $\E[g'(\summ D_1)] \ge \E[g'(\summ D_2)] + \eps$.
Defining $f'\colon \pmo^n \to \pmo$ by $f'(x) := g'(\summ x)$ proves the claim.
\end{proof}

We now define our symmetric test.
For a sufficiently small constant $\alpha$, let $\beta := \frac{100}{\log(1/\alpha)}$.
Define the homogeneous degree-$k$ polynomial $p_\beta \colon\pmo^n \to \R$ by
  \[
    p_\beta(x) := \beta^k \binom{n}{2k}^{-\frac{1}{2}} \sum_{\abs{S}=2k} x^S = 2^n D_\beta(x) - 1.
  \]
  Let $f_\beta$ be its truncation so that it is bounded by $1$, that is, we define $f_\beta\colon\pmo^n \to [-1,1]$ by $f_\beta(x) := \min\{1, p_\beta(x)\}$.
  As $\alpha$ is sufficiently small, so is $\beta$.
  Thus, by \Cref{claim:krawtchouk-even}, we have $f_\beta(x) \ge -1$ and so $f_\beta(x) \in [-1,1]$ for every $x \in [-1,1]$.

  \begin{claim} \label{claim:OZ-distinguish}
    $\E[f_\beta(D_\alpha)] - \E[f_\beta(D)] \ge (\alpha \beta)^k/2$ for any $k$-wise uniform distribution $D$.
  \end{claim}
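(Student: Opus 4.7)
The plan is to use the decomposition $f_\beta = p_\beta - (p_\beta - 1)_+$ and reduce the claim to a one-sided concentration estimate for $p_\beta$ under the uniform distribution. First I would compute both expectations via Fourier analysis: $\wh{D_\alpha}$ is supported only on $|S| \in \{0, 2k\}$ with value $2^{-n}\alpha^k \binom{n}{2k}^{-1/2}$ at level $2k$, so pairing against $p_\beta$ gives $\E[p_\beta(D_\alpha)] = (\alpha/\beta)^k\, \E_U[p_\beta^2] = (\alpha\beta)^k$. Interpreting the hypothesis on $D$ as $(2k)$-wise uniformity — as required by \Cref{thm:sbpn-vs-k-wise} and forced by the fact that $p_\beta$ is a degree-$2k$ polynomial with no constant term — we get $\E[p_\beta(D)] = 0$, and the pointwise bound $f_\beta \le p_\beta$ then yields $\E[f_\beta(D)] \le 0$.

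Combining, it suffices to show $\E[f_\beta(D_\alpha)] \ge (\alpha\beta)^k/2$, i.e.\ $\E_{D_\alpha}[(p_\beta-1)_+] \le (\alpha\beta)^k/2$. Since \Cref{claim:krawtchouk-even} (together with the constraint that makes $D_\beta$ a distribution) gives the pointwise density bound $D_\alpha(x) \le 2 \cdot 2^{-n}$, this further reduces to $\E_U[(p_\beta-1)_+] \le (\alpha\beta)^k/4$. Because $p_\beta(x) = \beta^k \binom{n}{2k}^{1/2} N\sK(2k, \summ x)$ depends on $x$ only through $\summ x$, the Krawtchouk upper bound \Cref{cor:bijlsv-ineq} shows that $p_\beta(x) \ge 1$ forces $(\summ x)^2 \gtrsim kn/(e\beta)$; applying Hoeffding to the binomial then gives $\Pr_U[p_\beta \ge 1] \le 2\exp(-k/(2e\beta))$, which under the choice $\beta = 100/\log(1/\alpha)$ becomes $2\alpha^{k/(200e)}$.

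To convert this tail bound into the desired estimate on $\E_U[(p_\beta-1)_+]$, I would apply $(p_\beta-1)_+ \le p_\beta^2 \Id_{p_\beta \ge 1}$ (valid since $p_\beta \ge 1 \Rightarrow p_\beta - 1 \le p_\beta \le p_\beta^2$) and Cauchy--Schwarz, combined with the hypercontractive bound $\|p_\beta\|_4 \le 3^k \|p_\beta\|_2 = 3^k \beta^k$ for the multilinear degree-$2k$ polynomial $p_\beta$. This yields
\[
\E_U[(p_\beta-1)_+] \le \|p_\beta\|_4^2 \cdot \sqrt{\Pr_U[p_\beta \ge 1]} \le 9^k \beta^{2k} \cdot \sqrt{2\,\alpha^{k/(200e)}},
\]
and the constant $100$ in the definition of $\beta$ is chosen with room to spare so that this upper bound is dominated by $(\alpha\beta)^k/4$ for all $\alpha$ small enough that $\beta \le 1/(5e)$.

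The main obstacle is the calibration of this constant: it must be small enough for the signal $(\alpha\beta)^k$ not to be crushed relative to the $9^k$ slack from hypercontractivity, yet large enough that the resulting exponent of $\alpha$ in the tail bound exceeds $k$. A generic degree-$2k$ multilinear polynomial of $L^2$-norm $\beta^k$ need not satisfy a tail bound of the form $\Pr[p \ge 1] \le \alpha^{\Omega(k)}$; it is essential that $p_\beta$ is a symmetric/Krawtchouk polynomial so that \Cref{cor:bijlsv-ineq} localizes the event $\{p_\beta \ge 1\}$ to the Hoeffding-controlled regime $(\summ x)^2 \gtrsim kn\log(1/\alpha)$.
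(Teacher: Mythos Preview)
Your overall strategy --- decompose $f_\beta = p_\beta - (p_\beta-1)_+$, compute $\E[p_\beta(D_\alpha)] = (\alpha\beta)^k$ by Fourier, bound $\E[f_\beta(D)] \le 0$ via $(2k)$-wise uniformity, and control the truncation error by localizing $\{p_\beta \ge 1\}$ through \Cref{cor:bijlsv-ineq} and Hoeffding --- matches the paper's. The substantive gap is the pointwise density bound $D_\alpha(x) \le 2\cdot 2^{-n}$, which is false. \Cref{claim:krawtchouk-even} is one-sided: it only shows the Krawtchouk term cannot be too \emph{negative}. At the all-ones string one has $D_\alpha(1^n) = 2^{-n}\bigl(1 + \alpha^k\binom{n}{2k}^{1/2}\bigr)$, which for $k \ll n$ is enormously larger than $2\cdot 2^{-n}$ (of order $2^{-n}(n/k)^k$ even for $\alpha = 1/(5e)$). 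So the reduction from $\E_{D_\alpha}[(p_\beta-1)_+]$ to $\E_U[(p_\beta-1)_+]$ is unjustified, and the rest of your argument (which takes place entirely under $U$) does not control the quantity you need.

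The paper avoids this by staying under $D_\alpha$: it expands $\E_{D_\alpha}[p_\beta^2] = \E_U[(1+p_\alpha)p_\beta^2]$ and evaluates the resulting second and third moments of $\sum_{|S|=2k} U^S$ explicitly (the third moment is $\binom{n}{2k}\binom{2k}{k}\binom{n-2k}{k}$), obtaining $\E_{D_\alpha}[p_\beta^2] \le 1$. Then Cauchy--Schwarz gives $\E_{D_\alpha}[p_\beta\,\Id(p_\beta>1)] \le \Pr_{D_\alpha}[p_\beta>1]^{1/2}$, and this probability is in turn bounded by writing it as $\Pr_U[E] + \alpha^k\binom{n}{2k}^{-1/2}\E_U\bigl[\sK(2k,\summ U)\Id(U\in E)\bigr]$ and applying Cauchy--Schwarz once more to the cross term. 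That direct second-moment computation under $D_\alpha$ is precisely what your density shortcut was trying to bypass, and is the missing ingredient. As a minor aside, even if the reduction to $U$ were valid, the hypercontractivity detour is unnecessary: since $(p_\beta-1)_+ \le p_\beta\,\Id(p_\beta\ge 1)$, Cauchy--Schwarz with $\|p_\beta\|_2 = \beta^k$ already does the job, without the extra $9^k$.
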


  \begin{proof}
    As $p_\beta$ has degree-$(2k)$, for any $(2k)$-wise uniform distribution $D$, we have $\E[f_\beta(D)] \le \E[p_\beta(D)] = \E[p_\beta(U)] = 0$.
    Note that we can write $f_\beta(x)$ as $p_\beta(x) - (p_\beta(x)-1)\Id(p_\beta(x) > 1)$, and so
    \begin{align}
      \E\bigl[f_\beta(D_\alpha)\bigr]
      &= \E\bigl[p_\beta(D_\alpha)\bigr] - \E\bigl[\bigl(p_\beta(D_\alpha)-1\bigr) \Id\bigl(p_\beta(D_\alpha) > 1\bigr) \bigr] \label{eq:OZ-goal} .
    \end{align}
    To bound $\E[f_\beta(D_\alpha)]$ from below, we will compute $\E[p_\beta(D_\alpha)]$ and then bound $\E[(p_\beta(D_\alpha)-1) \Id(p_\beta(D_\alpha) > 1)]$ from above.

    Observe that
    \begin{align*}
    \E\Bigl[ \tsum_{\abs{S}=2k} U^S \Bigr]
    &= \sum_{\abs{S}=2k} \E\bigl [U^S \bigr] = 0 \\
    \E\Bigl[ \bigl(\tsum_{\abs{S}=2k} U^S\bigr)^2 \Bigr]
    &= \sum_{\substack{\abs{S}=2k \\ \abs{T}=2k}} \E\bigl[U^{S \symdiff T}\bigr]
    = \binom{n}{2k} \\
    \E\Bigl[ \bigl(\tsum_{\abs{S}=2k} U^S\bigr)^3 \Bigr]
    &= \sum_{\substack{\abs{S}=2k\\ \abs{T}=2k\\ \abs{R}=2k}} \E\bigl[U^{S \symdiff T \symmdiff R}\bigr]
    = \binom{n}{2k}\binom{2k}{k}\binom{n-2k}{k} ,
    \end{align*}
    where the last equality is because the number of subsets $S,T,R \subseteq [n]$ of size $2k$ that satisfy $S \symdiff T = R$ is $\binom{n}{2k}\binom{2k}{k}\binom{n-2k}{k}$.
    
    We have
    \begin{align} \label{eq:OZ-1}
      \E\bigl[p_\beta(D_\alpha)\bigr]
      &= \sum_{x \in \pmo^n} D_\alpha(x) \E \bigl[ p_\beta(x) \bigr] \nonumber \\
      &= \sum_{x \in \pmo^n} 2^{-n} \biggl(1 + \alpha^k \binom{n}{2k}^{-\frac{1}{2}} \sum_{\abs{S}=2k} x^S \biggr) \biggl(  \beta^k \binom{n}{2k}^{-\frac{1}{2}} \sum_{\abs{S}=2k} x^S \biggr) \nonumber \\
      &= (\alpha\beta)^k \binom{n}{2k}^{-1} \E\Bigl[ \bigl(\tsum_{\abs{S}=2k}U^S\bigr)^2 \Bigr] \nonumber \\
      &= (\alpha\beta)^k \binom{n}{2k}^{-1} \binom{n}{2k} 
      = (\alpha\beta)^k .
    \end{align}
    We also have
    \begin{align} \label{eq:OZ-2}
      \E\bigl[p_\beta(D_\alpha)^2\bigr]
      &= 2^{-n} \sum_{x \in \pmo^n} \biggl( 1 + \alpha^k \binom{n}{2k}^{-\frac{1}{2}} \sum_{\abs{S}=2k} x^S \biggr) \biggr( \beta^k \binom{n}{2k}^{-\frac{1}{2}} \sum_{\abs{S}=2k} x^S \biggr)^2 \nonumber \\
      &= \beta^{2k} \binom{n}{2k}^{-1} \E\Bigl[ \bigl(\tsum_{\abs{S}=2k} U^S\bigr)^2 \Bigr] + (\alpha \beta^2)^k \binom{n}{2k}^{-\frac{3}{2}} \E\Bigl[ \bigl(\tsum_{\abs{S}=2k} U^S\bigr)^3\Bigr] \nonumber \\
      &= \beta^{2k} + (\alpha \beta^2)^k \binom{n}{2k}^{-\frac{1}{2}} \binom{2k}{k} \binom{n-2k}{k} \nonumber \\
      &\le \beta^{2k} + (\alpha \beta^2)^k  \binom{2k}{k}^{\frac{3}{2}} \nonumber \\
      &\le \beta^{2k} + (8 \alpha \beta^2)^k ,\nonumber \\
      &\le 1,
    \end{align}
    where the first inequality is because $\binom{n-2k}{k} \le \binom{n}{k}^{\frac{1}{2}} \binom{n-k}{k}^{\frac{1}{2}} = \binom{n}{2k}^{\frac{1}{2}} \binom{2k}{k}^\frac{1}{2}$, using the identity $\binom{n}{m}\binom{n-m}{k-m} = \binom{n}{k}\binom{k}{m}$.  The last inequality is for small enough $\alpha$.

    Next we bound above $\E[(p_\beta(D_\alpha)-1) \Id(p_\beta(D_\alpha) > 1)]$.  We in fact bound the greater quantity $\E[p_\beta(D_\alpha) \Id(p_\beta(D_\alpha) > 1)]$.
    Using Cauchy--Schwarz and \cref{eq:OZ-2}, the latter is at most
    \begin{align} \label{eq:OZ-3}
      \E\bigl[p_\beta(D_\alpha)^2\bigr]^{\frac{1}{2}} \Pr\bigl[p_\beta(D_\alpha) > 1\bigr]^{\frac{1}{2}}
      \le 1 \cdot \Pr\bigl[p_\beta(D_\alpha) > 1\bigr]^{\frac{1}{2}} .
    \end{align}
    We will show that 
    \begin{align} \label{eq:OZ-tail}
      \Pr\bigl[p_\beta(D_\alpha) > 1\bigr]^{\frac{1}{2}}
      \le e^{-(\frac{1}{e\beta} - 1)\frac{k}{4}} .
    \end{align}
    So, using $\beta = \frac{100}{\log(1/\alpha)}$, \cref{eq:OZ-3} is less than $(\alpha \beta)^k/2$.
    Plugging these bounds into \cref{eq:OZ-goal}, we conclude that
    \[
      \E\bigl[f_\beta(D_\alpha)\bigr] - \E\bigl[f_\beta(D)\bigr]
      \ge (\alpha \beta)^k/2
    \]
    for any $(2k)$-wise uniform $D$, proving the claim assuming \cref{eq:OZ-tail} holds.

    \medskip

    It remains to prove \cref{eq:OZ-tail}.
    Suppose $\abs{p_\beta(x)} > 1$.
    Then by its definition and \Cref{cor:bijlsv-ineq}, it must be the case that
    \[
      1 
      \le \beta^k \binom{n}{2k}^{-\frac{1}{2}} \abs[\bigg]{ \sum_{\abs{S}=2k} x^S }
      \le \beta^k \binom{n}{2k}^\frac{1}{2} \left(\frac{2k}{n} + \frac{(\summ x)^2}{n^2}\right)^k 
      \le (e \beta)^k \left(1 + \frac{(\summ x)^2}{2kn} \right)^k ,
    \]
    which implies  $x \in E_\beta := \{ x \in \pmo^n : (\summ x)^2 \ge (\frac{1}{e \beta} - 1) 2kn \}$.
    Jumping ahead, we will use below that by Hoeffding's inequality, we have $\Pr[U \in E_\beta] \le e^{-k (\frac{1}{e\beta}-1)}$.
    In the meanwhile, we use the implication just noted to write    
    \begin{align}
      \Pr[p_\beta(D_\alpha) > 1]
      \le \Pr[D_\alpha \in E_\beta]
      &= 2^{-n} \sum_{x \in E_\beta} \Bigl( 1 + \alpha^k \binom{n}{2k}^{-\frac{1}{2}} \sum_{\abs{S}=2k} x^S \Bigr) \nonumber \\
      &= \Pr[ U \in E_\beta] + \alpha^k \binom{n}{2k}^{-\frac{1}{2}} 2^{-n} \sum_{x \in E_\beta} \sum_{\abs{S}=2k} x^S . \label{eq:OZ-cont}
    \end{align}
    We rewrite and bound the second term using Cauchy--Schwarz as follows:
    \begin{align*}
      \alpha^k \binom{n}{2k}^{-\frac{1}{2}} \E\Bigl[ \tsum_{\abs{S}=2k} U^S \cdot \Id(U \in E_\beta)\Bigr]
      &\le
      \alpha^k \binom{n}{2k}^{-\frac{1}{2}} \E\Bigl[ \bigl( \tsum_{ \abs{S}=2k} U^S \bigr)^2 \Bigr]^{\frac{1}{2}}  \cdot \Pr[ U \in E_\beta ]^{\frac{1}{2}} \\
      &= \alpha^k \cdot \Pr[U \in E_\beta]^{\frac{1}{2}} .
    \end{align*}
    Therefore
    \[
      \cref{eq:OZ-cont}
      \le \Pr[ U \in E_\beta]^{\frac{1}{2}} \Bigl(\Pr[U \in E_\beta]^{\frac{1}{2}} + \alpha^k \Bigr) \\
      \le e^{-(\frac{1}{e\beta} - 1)\frac{k}{2}} \cdot (1/2 + 1/2) .
    \]
    This proves \cref{eq:OZ-tail}.
  \end{proof}

  \begin{proof}[Proof of \Cref{thm:sbpn-vs-k-wise}]
    For any $\rho \in (0,1]$, by \Cref{claim:noise-invariant} we have  $D_{c} \cdot N_\rho = D_{c\rho^2}$.
    So we can take $\alpha$ to be $c\rho^2$, and thus $\beta = c/\log(1/2\rho)$.
    By \Cref{claim:OZ-distinguish}, the distinguishing advantage is at least $(c\rho^2/\log(1/2\rho))^k$.
  \end{proof}

\section{Bounded uniformity plus noise fools symmetric tests} \label{sec:bi+noise-fools-sym}

Here we prove \Cref{thm:bipnvssym}.  The starting observation for the proof of this theorem (and also of \Cref{thm:sym-xor-sym-fools-any,thm:improve-Bazzi}) is that the Fourier expansion of any symmetric function is a linear combination of the Krawtchouk polynomials $\sK(\ell,\summ x) := \sum_{\abs{S}=\ell} x^S$ weighted by the coefficients $\hf([\ell])$.
As $k$-wise uniformity fools all parities of size at most $k$, it suffices to consider the $\ell > k$ terms.
While $\sK(\ell,\summ x)$ can be as large as $\binom{n}{\ell}$ on the all-1 string, it follows from Cauchy--Schwarz that its average $\E_x[\abs{\sK(\ell,\summ x)}]$ is at most $\binom{n}{\ell}^{1/2}$.
Moreover, a simple argument (\Cref{fact:coeff-sym-fcn}) shows that $\abs{\hf([\ell])}$ is bounded by $\binom{n}{\ell}^{-1/2}$, the reciprocal of the upper bound on $\E_x[\abs{\sK(\ell, \summ x)}]$, and so their product is at most $1$, which is then dampened to $\rho^\ell \le \rho^k$ by noise.

To make this argument go through, we use \Cref{cor:bijlsv-ineq} to show that $\abs{\sK(\ell,\summ x)}$ is close to $\binom{n}{\ell}^{1/2}$ when $x$ is nearly-balanced, which holds with high probability under $k$-wise uniformity (\Cref{cor:tail-bound}).

We start by proving a few useful facts about symmetric functions and distributions.

\begin{fact} \label{fact:coeff-sym-fcn}
  Let $f\colon\pmo^n \to [-1,1]$ be any symmetric function.
  For every $S \subseteq [n]$ of size $\ell$, we have (1) $\hf(S) = \hf([\ell])$ and (2) $\abs{\hf([\ell])} \le \binom{n}{\ell}^{-1/2}$.
\end{fact}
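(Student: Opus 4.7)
\textbf{Proof plan for \Cref{fact:coeff-sym-fcn}.}
The two parts are essentially independent, and both are short consequences of the standard Fourier setup.

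For part (1), the plan is to exploit the permutation invariance of $f$. Recall $\hf(S) = \E_x[f(x) x^S]$. For any permutation $\pi$ of $[n]$, symmetry of $f$ gives $f(x) = f(\pi x)$, and the change of variables $y = \pi x$ (which preserves the uniform measure on $\pmo^n$) yields
\[
  \hf(S) = \E_x\bigl[f(\pi x)\, x^S\bigr] = \E_y\bigl[f(y)\, (\pi^{-1}y)^S\bigr] = \E_y\bigl[f(y)\, y^{\pi(S)}\bigr] = \hf(\pi(S)).
\]
Since any two subsets of $[n]$ of the same size $\ell$ are related by a permutation, it follows that $\hf(S) = \hf([\ell])$.

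For part (2), the plan is a one-line application of Parseval's identity. Since $\abs{f(x)} \le 1$, we have $\E[f^2] \le 1$, so
\[
  1 \ge \E[f^2] = \sum_{S \subseteq [n]} \hf(S)^2 \ge \sum_{\abs{S} = \ell} \hf(S)^2 = \binom{n}{\ell}\, \hf([\ell])^2,
\]
where the last equality uses part (1). Taking square roots gives $\abs{\hf([\ell])} \le \binom{n}{\ell}^{-1/2}$.

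There is no real obstacle here; both steps are routine. The only small subtlety is making sure the change-of-variables step in part (1) is phrased so it is clear that it uses only invariance of the uniform measure under coordinate permutations, which is immediate. The result will then be used as a black box in the proof of \Cref{thm:bipnvssym} to bound each Fourier coefficient of a symmetric test by the reciprocal of the Cauchy--Schwarz bound on the corresponding Krawtchouk polynomial.
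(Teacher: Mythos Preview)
Your proposal is correct and follows essentially the same approach as the paper: part~(1) via permutation invariance (which the paper simply declares ``clear''), and part~(2) via Parseval together with part~(1). Your phrasing of~(2) is in fact slightly more direct than the paper's, which inserts a Cauchy--Schwarz step that is an equality here anyway since all level-$\ell$ coefficients agree.
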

\begin{proof}
  (1) is clear.
  To see (2), by Cauchy--Schwarz and Parseval's identity, we have
  \[
    \binom{n}{\ell} \abs[\big] {\hf([\ell])}
    = \abs[\Big]{ \sum_{\abs{S}=\ell} \hf(S) }
    \le \binom{n}{\ell}^{1/2} \biggl( \sum_{\abs{S}=\ell} \hf(S)^2 \biggr)^{1/2}
    \le \binom{n}{\ell}^{1/2} \E\bigl[f(U)^2\bigr]
    \le \binom{n}{\ell}^{1/2} . \qedhere
  \]
\end{proof}
We also need the following well-known moment bounds for $k$-wise uniform distributions.
For a short proof see \cite[Lemma 32]{BHLV}.

\begin{lemma} \label{lemma:moment}
  Let $D$ be a $(2k)$-wise uniform distribution on $\pmo^n$.
  Then $\E[ (\sum_{i=1}^n D_i)^{2k} ] \le \sqrt{2} \left(2kn/e\right)^k$.
\end{lemma}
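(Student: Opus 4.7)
The plan is to reduce to the uniform case via $(2k)$-wise uniformity, and then bound the corresponding moment of a Rademacher sum by a combinatorial counting argument together with Stirling's formula.

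First I would expand $(\sum_{i=1}^n D_i)^{2k} = \sum_{(i_1,\ldots,i_{2k}) \in [n]^{2k}} D_{i_1} \cdots D_{i_{2k}}$. Each monomial in this sum depends on at most $2k$ coordinates, so by $(2k)$-wise uniformity its expectation under $D$ coincides with its expectation under the uniform distribution $U$ on $\pmo^n$. Hence $\E[(\sum_i D_i)^{2k}] = \E_U[(\sum_i U_i)^{2k}]$, and the rest of the argument is purely about $n$ independent Rademachers.

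Next I would count the surviving tuples. Since $U_j \in \pmo$ is uniform and independent across $j$, the expectation $\E_U[U_{i_1} \cdots U_{i_{2k}}]$ equals $1$ if every coordinate appears an even number of times in the tuple and $0$ otherwise. To bound the number of such "even" tuples, note that each of them admits at least one perfect matching of the positions $\{1,\ldots,2k\}$ in which matched positions share the same index. The number of perfect matchings of a $2k$-set is $(2k-1)!! = (2k)!/(2^k k!)$, and each matching extends to such a tuple by choosing one of at most $n$ indices for each of the $k$ pairs, giving at most $n^k$ extensions (with potential overcounting when some index appears $\ge 4$ times, but this is harmless for an upper bound). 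Therefore $\E_U[(\sum_i U_i)^{2k}] \le (2k-1)!! \cdot n^k$.

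Finally, I would apply Stirling to convert $(2k-1)!!$ to the form in the statement. Using Robbins' sharp bounds $m! = \sqrt{2\pi m}(m/e)^m e^{r_m}$ with $1/(12m+1) < r_m < 1/(12m)$, one computes
\[
  \frac{(2k)!}{2^k k!} = \sqrt{2} \Bigl(\frac{2k}{e}\Bigr)^k \cdot e^{r_{2k}-r_k} \le \sqrt{2}\Bigl(\frac{2k}{e}\Bigr)^k,
\]
since $r_{2k}-r_k < 1/(24k) - 1/(12k+1) \le 0$ for $k \ge 1$. Combining with the previous step yields the claimed bound $\sqrt{2}(2kn/e)^k$.

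I expect the only mildly delicate point to be tracking the constant $\sqrt{2}$ through Stirling; the combinatorial count and the reduction via $(2k)$-wise uniformity are both short and standard. The step that most "needs" an idea, rather than a calculation, is recognizing that the perfect-matching encoding gives the clean bound $(2k-1)!! n^k$ without having to enumerate partitions of $2k$ into even parts.
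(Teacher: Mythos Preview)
Your argument is correct and is the standard proof: reduce to the uniform Rademacher case using $(2k)$-wise uniformity, upper-bound the number of even-multiplicity tuples by the perfect-matching encoding $(2k-1)!!\,n^k$, and then apply Robbins' form of Stirling to get the constant $\sqrt{2}$. The paper does not give its own proof but simply cites \cite[Lemma~32]{BHLV}, whose short proof is essentially the computation you wrote out.
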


By Markov's inequality, this implies the following tail bound.

\begin{corollary} \label{cor:tail-bound}
  Let $D$ be a $(2k)$-wise uniform distribution on $\pmo^n$.
  For every integer $t > 0$, we have
  \[
  \Pr\bigl[ \abs{\summ D} \ge t \bigr]
    \le \sqrt{2} \left(\frac{2kn}{e t^2}\right)^k .
  \]
\end{corollary}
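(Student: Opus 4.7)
The plan is a direct application of Markov's inequality to the even power $(\summ D)^{2k}$, which is a non-negative random variable whose expectation has just been controlled by \Cref{lemma:moment}. Since $|\summ D| \ge t$ if and only if $(\summ D)^{2k} \ge t^{2k}$, Markov gives
\[
\Pr\bigl[|\summ D| \ge t\bigr] = \Pr\bigl[(\summ D)^{2k} \ge t^{2k}\bigr] \le \frac{\E[(\summ D)^{2k}]}{t^{2k}}.
\]

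Then I would plug in the moment bound $\E[(\summ D)^{2k}] \le \sqrt{2}(2kn/e)^k$ from \Cref{lemma:moment}, which is applicable because $D$ is $(2k)$-wise uniform, and rearrange:
\[
\Pr\bigl[|\summ D| \ge t\bigr] \le \frac{\sqrt{2}(2kn/e)^k}{t^{2k}} = \sqrt{2}\left(\frac{2kn}{e t^2}\right)^k.
\]

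There is essentially no obstacle here: the corollary is a one-line consequence of \Cref{lemma:moment} and Markov, which is why the paper introduces it immediately after stating the moment bound. The only subtlety worth noting is the use of the even exponent $2k$ so that $(\summ D)^{2k}$ is non-negative and the event $\{|\summ D| \ge t\}$ translates cleanly to a moment-tail event without having to handle the two signs of $\summ D$ separately.
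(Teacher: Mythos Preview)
Your proof is correct and matches the paper's approach exactly: the corollary is stated immediately after \Cref{lemma:moment} with the remark ``By Markov's inequality, this implies the following tail bound,'' and your computation is precisely that one-line derivation.
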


The following fact says that a distribution remains close to itself after conditioning on any high probability event.

\begin{fact} \label{fact:tvd-conditioning}
  Let $D$ be any distribution on $\pmo^n$ and $E$ be any event.
  Then the conditional distribution $D \mid E$ is $(1 - \Pr[E])$-close to $D$.
\end{fact}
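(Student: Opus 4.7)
\textbf{Proof proposal for \Cref{fact:tvd-conditioning}.}
The plan is to expand the definition of total variation distance directly and split the sum over outcomes according to whether they lie in $E$ or not. Writing $p := \Pr[E]$, the conditional distribution satisfies $(D \mid E)(x) = D(x)/p$ for $x \in E$ and $(D \mid E)(x) = 0$ for $x \notin E$. So for $x \notin E$ the pointwise difference is simply $D(x)$, while for $x \in E$ it is $D(x) - D(x)/p = -D(x)(1-p)/p$.

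Summing absolute values, the contribution from $x \notin E$ is $\sum_{x \notin E} D(x) = 1 - p$, and the contribution from $x \in E$ is $\tfrac{1-p}{p} \sum_{x \in E} D(x) = \tfrac{1-p}{p} \cdot p = 1-p$. Halving the total (per the standard definition of TV distance) yields $(1-p)$, which is exactly $1 - \Pr[E]$.

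There is no real obstacle here; the only thing to be careful about is the normalization convention for TV distance (the factor of $\tfrac{1}{2}$ in front of the $L_1$ sum), which is what turns the sum $2(1-p)$ into the claimed bound $(1-p)$. One could alternatively give a coupling-based one-line argument: couple $D$ and $D \mid E$ by sampling from $D$ and, conditioned on landing outside $E$, resampling from $D \mid E$; this coupling disagrees only on the event $\{D \notin E\}$, which has probability $1 - \Pr[E]$. Either route gives the claim immediately.
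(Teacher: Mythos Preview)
Your argument is correct; you compute the total variation distance directly from the $L_1$ formula, splitting the sum over $x \in E$ and $x \notin E$, and in fact obtain the exact value $1-\Pr[E]$ rather than just an upper bound. The paper takes a slightly different (but equally short) route: it uses the test-based characterization, writing $\E[g(D)]$ as the convex combination $\Pr[E]\,\E[g(D\mid E)] + \Pr[\overline{E}]\,\E[g(D\mid \overline{E})]$ for an arbitrary Boolean $g$, and then bounding $|\E[g(D)] - \E[g(D\mid E)]|$ by $\Pr[\overline{E}]$ since $|\E[g(D\mid \overline{E})] - \E[g(D\mid E)]| \le 1$. Both proofs are standard one-liners for the same fact; your direct computation is arguably cleaner and gives equality, while the paper's version is phrased to match how the closeness is used downstream (against bounded tests). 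The coupling argument you sketch at the end is yet a third valid route.
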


\begin{proof}
  Let $\overline{E}$ be the complement of $E$.
  For every Boolean test $g\colon\pmo^n \to \zo$ we have
  \begin{align*}
    \E[g(D)]
    &= \E[g(D \mid E)] (1 - \Pr[\overline{E}]) + \E[g(D \mid \overline{E})] \Pr[\overline{E}] \\
    &= \E[g(D\mid E)] + \bigl( \E[g(D \mid \overline{E})] - \E[g(D \mid E)] \bigr) \Pr[\overline{E}] .
  \end{align*}
So $\abs{\E[g(D)] - \E[g(D \mid E)]} \le \Pr[\overline{E}]$, as $\abs{\E[g(D \mid \overline{E})] - \E[g(D \mid E)]}$ is bounded by $1$.
\end{proof}

\begin{proof}[Proof of \Cref{thm:bipnvssym}]
  Define $G := \{ x \in \pmo^n: \abs{\sum_{i=1}^n x_i} \le \sqrt{nk/3\rho} \}$.
  We write $f := f_{\le k} + f_{>k}$, where $f_{\le k}(x) = \sum_{\abs{S}\le k} \hf(S) x^S$, and $f_{>k}(x) := f(x) - f_{\le k}(x) = \sum_{\abs{S} > k} \hf(S) x^S$.
  For convenience let $Z:= D \cdot N_\rho$.
  As $Z$ is $(2k)$-wise uniform, we have
  \begin{align*}
      \E [f] = \E\bigl[f_{\le k}(Z) \bigr]  = \E\bigl[f_{\le k}(Z) \Id (D\in G) \bigr] +  \E\bigl[f_{\le k}(Z) \Id (D\notin G) \bigr].
  \end{align*}
  So we can bound the error by
  \begin{align}
    \abs[\big]{ \E[f(Z)] - \E[f] }
    &=  \abs[\big]{ \E\bigl[f(Z) \Id(D \in G) \bigr] +\E\bigl[f(Z) \Id( D\notin G)\bigr] - \E[f]  } \nonumber  \\
    &\le \abs[\big]{ \E\bigl[f_{\le k}(Z) \Id(D \in G)\bigr] - \E[f]  } +\abs[\big]{\E\bigl[f_{>k}(Z) \Id(D \in G)\bigr]} + \Pr [D \notin G] \nonumber \\
    &\le \abs[\big]{ \E\bigl[f_{\le k}(Z) \Id(D \notin G) \bigr]  }  + \abs[\big]{\E\bigl[f_{>k}(Z) \Id(D \in G)\bigr]} + \Pr[D \notin G] \label{eq:bipnfsym-main} ,
  \end{align}

  We now bound each term individually.
  By \Cref{cor:tail-bound}, we have
  \begin{align} \label{eq:bipnfsym-tail}
    \Pr[ D \not\in G ]
    \le \sqrt{2} \cdot \left( \frac{2 \cdot 3\rho}{e} \right)^k 
    \le \sqrt{2} \cdot (e\rho)^k .
  \end{align}

  We now bound the first term, 
  As $f_{\le k}^2$ has degree $2k$, by Parseval's identity and $(2k)$-wise uniformity of $Z$, we have
  \[
    \E[f_{\le k}(Z)^2]
    = \E[f_{\le k}(U)^2]
    = \E[f(U)^2]
    \le 1 .
  \]
  By Cauchy--Schwarz, the first term in \cref{eq:bipnfsym-main} is at most
  \begin{align} \label{eq:bipnfsym-1}
    \abs[\big]{ \E[f_{\le k}(Z) \Id (D\notin G) ]  }
    \le \E[f_{\leq k}(Z)^2 ]^{1/2} \Pr[D \notin G]^{1/2}
    \le 2^{1/4} \cdot (e \rho)^{k/2} .
  \end{align}

  It remains to bound the second term in \cref{eq:bipnfsym-main}.
  For every $x \in G$,  we will show that
  \begin{align} \label{eq:bipnfsym-2}
    \abs[\big]{\E\bigl[f_{>k}(x \cdot N_\rho)\bigr]}
    \le 7 \cdot (e\rho)^{k/2} .
  \end{align}
  Plugging \cref{eq:bipnfsym-tail,eq:bipnfsym-1,eq:bipnfsym-2} into \cref{eq:bipnfsym-main} gives an error bound of at most $10 (e\rho)^{k/2}$, as desired.

  We now show \cref{eq:bipnfsym-2}.
  As $\E[N_\rho^S] = \rho^{\abs{S}}$, we have
  \begin{align*}
    \abs[\Big]{\E[f_{>k}(x \cdot N_\rho)]}
    = \abs[\Big]{ \sum_{\abs{S} > k} \rho^{\abs{S}} \hf(S) x^S }
    &=  \abs[\Big]{ \sum_{\ell=k+1}^n \rho^\ell \sum_{\abs{S}=\ell} \hf(S) x^S } .
  \end{align*}
  Applying \Cref{fact:coeff-sym-fcn,cor:bijlsv-ineq}, and using the inequality $\binom{n}{\ell} \le (en/\ell)^\ell$, we have
  \begin{align*}
    \abs[\Big]{ \sum_{\ell=k+1}^n \rho^\ell \sum_{\abs{S}=\ell} \hf(S) x^S }
    &\le \abs[\Big]{ \sum_{\ell=k+1}^n \rho^\ell \cdot \hf([\ell]) \sum_{\abs{S}=\ell} x^S } \\
    &\le \sum_{\ell=k+1}^n \rho^\ell \cdot \abs[\big]{\hf([\ell])} \cdot  \abs[\Big]{ \sum_{\abs{S}=\ell} x^S } \\
    &\le \sum_{\ell=k+1}^n \rho^\ell \cdot \binom{n}{\ell}^{1/2}  \left( \frac{\ell}{n} + \frac{k}{3\rho n} \right)^{\ell/2} \\
    &\le \sum_{\ell=k+1}^n \rho^\ell \cdot e^{\ell/2} \left(  1 + \frac{k}{3\rho \ell} \right)^{\ell/2} \\
    &\le \sum_{\ell=k+1}^n \left( \rho \left(  \rho e + \frac{e}{3} \right) \right)^{\ell/2} \\
    &\le 7 \cdot (e\rho)^{k/2}
  \end{align*}
  where the last inequality is because we can assume $\rho \le 1/e$, as otherwise the conclusion is trivial, and so we have $\rho e + e/3 \le 1 + e/3 \le 2$.
  This shows \cref{eq:bipnfsym-2}.
\end{proof}

\section{Shifted symmetric small-bias fools symmetric tests} \label{sec:shifted-sym-small-bias}

In this section we prove \Cref{thm:sym-xor-sym-fools-any}.  The proof follows a similar high-level idea to the proof of \Cref{thm:bipnvssym}, but we trade symmetry for noise, because xor-ing the uniform permutation of a string has a similar effect to adding noise (see \Cref{claim:bias-unif-weight}).

As mentioned in the introduction, we actually prove stronger results about fooling symmetric functions.
One can then obtain \Cref{thm:sym-xor-sym-fools-any} by combining \Cref{cor:shifted-sym-bias-special} below and the following claim.

\begin{claim} \label{claim:sym-close-to-bin-implies-non-sym-close}
  Let $D$ be a symmetric distribution on $\zo^n$.
  If $\abs{D}$ is $\eps$-close to the binomial distribution $\Bin(n,1/2)$, then $D$ is $\eps$-close to the uniform distribution.
\end{claim}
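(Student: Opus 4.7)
}
The plan is to observe that a symmetric distribution on $\zo^n$ is completely determined by its Hamming weight distribution, and in fact its total variation distance to uniform reduces exactly to the total variation distance between the corresponding weight distributions. This should make the claim a one-line computation once the relevant densities are written down.

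Concretely, I would first exploit symmetry: because $D$ is invariant under coordinate permutations, the conditional distribution $D \mid \abs{D} = w$ is uniform over the $\binom{n}{w}$ strings of weight $w$. Therefore $D(x) = \Pr[\abs{D}=w]/\binom{n}{w}$ whenever $\abs{x}=w$. The uniform distribution $U$ on $\zo^n$ has the analogous form $U(x) = \Pr[\abs{U}=w]/\binom{n}{w} = 2^{-n}$, where $\abs{U} \sim \Bin(n,1/2)$.

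Plugging these into the definition of total variation distance and grouping the sum by weight level gives
\begin{align*}
\tfrac{1}{2}\sum_{x \in \zo^n} \abs[\big]{D(x) - U(x)}
&= \tfrac{1}{2}\sum_{w=0}^{n} \sum_{x: \abs{x}=w} \frac{\abs[\big]{\Pr[\abs{D}=w] - \Pr[\Bin(n,1/2)=w]}}{\binom{n}{w}} \\
&= \tfrac{1}{2}\sum_{w=0}^{n} \abs[\big]{\Pr[\abs{D}=w] - \Pr[\Bin(n,1/2)=w]},
\end{align*}
which is exactly the total variation distance between $\abs{D}$ and $\Bin(n,1/2)$. Hence if the latter is at most $\eps$, so is the former.

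There is no real obstacle here; the whole content is that symmetry lets one replace a distribution on $\zo^n$ by its weight distribution without losing information, so the reduction from ``close in weight'' to ``close in distribution'' is lossless. The only thing to be careful about is to state clearly that $\abs{D}$ means the Hamming weight distribution (matching the notation $\summ{\cdot}$ used earlier) and that uniform on $\zo^n$ projected to its weight is exactly $\Bin(n,1/2)$.
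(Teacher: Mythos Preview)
Your proposal is correct and is essentially the same argument as the paper's: both group the $\ell_1$ sum by Hamming-weight level, use that symmetry forces $D(x)=\Pr[\abs{D}=w]/\binom{n}{w}$, and collapse the inner sum so that the distance between $D$ and $U$ equals the distance between $\abs{D}$ and $\Bin(n,1/2)$. The only cosmetic difference is that you carry the $\tfrac{1}{2}$ total-variation normalization explicitly, while the paper works directly with the $\ell_1$ sums.
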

\begin{proof}
  We have
  \[
    \sum_{w=0}^n \sum_{\abs{x}=w} \abs*{2^{-n} - \frac{D(w)}{\binom{n}{w}}}
  = \sum_{w=0}^n \binom{n}{w} \abs[\bigg] { 2^{-n} - \frac{D(w)}{\binom{n}{w}} }
  = \sum_{w=0}^n \abs[\bigg] { 2^{-n}\binom{n}{w} - D(w)}
    \le \eps . \qedhere
  \]
\end{proof}

In turn, \Cref{cor:shifted-sym-bias-special} follows from \Cref{thm:shifted-sym-bias-special}, showing that any symmetric small-bias distribution xor-ed a nearly-balanced string fools symmetric functions.
Then we prove \Cref{thm:shifted-sym-bias}, which is a generalization of \Cref{thm:shifted-sym-bias-special} that covers a more general settings of parameters.
We complement \Cref{thm:shifted-sym-bias} with a lower bound (\Cref{claim:shifted-sym-bias-lb}).

First we show that the bias of the uniform permutation of a string on parity tests is equal to the normalized Krawtchouk polynomials.

\begin{claim} \label{claim:bias-unif-weight}
  Let $W_t$ be the uniform distribution on $\{x \in \pmo^n: \sum_{i=1}^n x_i = t \}$.
  For every subset $S \subseteq [n]$ of size $\ell$, we have
  \[
    \abs[\big]{ \E\bigl[ W_t^{[n]\setminus S} \bigr] }
    = \abs[\big]{ \E\bigl[ W_t^S \bigr] }
    = \frac{\abs[\big]{\sK(\ell,t)}}{\binom{n}{\ell}} .
  \]
\end{claim}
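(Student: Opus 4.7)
}

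The plan is to exploit two symmetries of $W_t$: invariance under coordinate permutations, and the fact that $x_i^2 = 1$ makes the full product $W_t^{[n]}$ deterministic.

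First I would handle the equality $\E[W_t^S] = \sK(\ell,t)/\binom{n}{\ell}$ by a standard averaging-over-permutations argument. Since $W_t$ is the uniform distribution on a set that is itself invariant under any permutation $\pi$ of the $n$ coordinates, we have $\E[W_t^S] = \E[W_t^{\pi(S)}]$ for every permutation $\pi$, so the expectation depends only on $\ell = |S|$. Summing over all size-$\ell$ subsets,
\[
  \binom{n}{\ell} \E\bigl[W_t^S\bigr]
  = \E\Bigl[\,\sum_{|T|=\ell} W_t^T\,\Bigr]
  = \E\bigl[\sK(\ell,\summ W_t)\bigr]
  = \sK(\ell,t),
\]
since $\summ W_t = t$ identically and $\sK(\ell,\cdot)$ depends only on the coordinate sum by its definition.

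Next I would dispose of the complementary set. For any $x \in \pmo^n$,
\[
  x^{[n]\setminus S} \;=\; x^{[n]} \cdot x^S,
\]
because $x_i^2 = 1$. The full product $x^{[n]} = (-1)^{(n-t)/2}$ is a fixed sign depending only on $t$ and $n$ (the number of $-1$ coordinates of any $x$ with $\summ x = t$ is $(n-t)/2$). Therefore $W_t^{[n]\setminus S} = (-1)^{(n-t)/2}\, W_t^S$, and taking absolute values of expectations gives $|\E[W_t^{[n]\setminus S}]| = |\E[W_t^S]|$.

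There is essentially no obstacle here; the only thing to be slightly careful about is the parity case where no $x$ with $\summ x = t$ exists, in which case $W_t$ is undefined and the claim is vacuous, so we implicitly assume $t \equiv n \pmod 2$. Combining the two displays yields the claimed common value $|\sK(\ell,t)|/\binom{n}{\ell}$.
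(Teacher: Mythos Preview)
Your proof is correct and follows essentially the same approach as the paper: both arguments use permutation invariance of $W_t$ to reduce to a double-counting identity $\binom{n}{\ell}\E[W_t^S]=\E\bigl[\sum_{|T|=\ell}W_t^T\bigr]=\sK(\ell,t)$, and both handle the complement via the fixed sign $x^{[n]}$ on the support of $W_t$. If anything, your version is slightly tidier in that it derives the signed identity $\E[W_t^S]=\sK(\ell,t)/\binom{n}{\ell}$ directly before taking absolute values.
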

\begin{proof}
  The first equality follows from $\abs{\sum_{\abs{S}=n-\ell} z^S} = \abs{ z^{[n]} \sum_{\abs{S}=\ell} z^S } = \abs{\sum_{\abs{S}=\ell} z^S}$.
  To prove the second inequality, first fix a string $z$ with $\sum_{i=1}^n z_i = t$.
  Observe that by symmetry we have $\sum_{x: \sum_{i=1}^n x_i = t} x^S = \sum_{x: \sum_{i=1}^n x_i = t} x^{[\ell]}$ for any $S \subseteq [n]$ of size $\ell$, and $\sum_{\abs{S}=\ell} x^S = \sum_{\abs{S}=\ell} z^S$ for any $x \in \pmo^n$ with $\sum_{i=1}^n x_i = t$.
  Hence,
  \[
    \binom{n}{\ell} \sum_{x: \sum_i x_i = t} x^{[\ell]}
    = \sum_{\abs{S} = \ell} \sum_{x: \sum_i x_i = t} x^S
    = \sum_{x: \sum_i x_i = t} \sum_{\abs{S} = \ell} x^S
    = \binom{n}{t} \sum_{\abs{S}=\ell} z^S .
  \]
  Rearranging gives
  \[
    \abs[\big]{\E[W_t^S]}
    = \frac{1}{\binom{n}{t}} \abs[\Big]{ \sum_{x: \sum_i x_i = t} x^{[\ell]}} 
    = \frac{1}{\binom{n}{\ell}} \abs[\Big]{\sum_{\abs{S}=\ell} z^{S}}
    = \frac{\abs[\big]{\sK(\ell,t)}}{\binom{n}{\ell}} . \qedhere
  \]
\end{proof}

\subsection{Proof of \Cref{cor:shifted-sym-bias-special}}

\Cref{cor:shifted-sym-bias-special} is a straightforward corollary of \Cref{thm:shifted-sym-bias-special}, which we now prove.

\begin{theorem} \label{thm:shifted-sym-bias-special}
  Let $D_\sym$ be a symmetric $n^{-20k}$-biased distribution on $\pmo^n$ and $z \in \pmo^n$ be any string with $\abs{\sum_{i=1}^n z_i} \le n^{0.6}$.
  Then $\abs{\E[f(z \cdot D_\sym)] - \E[f]} \le O_k(n^{-0.3k})$.
\end{theorem}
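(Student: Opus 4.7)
The plan is to mimic the strategy used for \Cref{thm:bipnvssym}, substituting the symmetric structure of $D_\sym$ for the noise operator. First I would expand the error in Fourier and use that, since $D_\sym$ is symmetric, $\E[D_\sym^S]$ depends only on $|S|$ and equals $\E_T[N\sK(|S|,T)]$ by the proof of \Cref{claim:bias-unif-weight}, where $T := \summ{D_\sym}$. Writing $t := \summ z$ and using $\hf(S)=\hf([|S|])$ for symmetric $f$, this gives
\[
  \E[f(z\cdot D_\sym)]-\E[f]\;=\;\sum_{\ell=1}^n \hf([\ell])\,\sK(\ell,t)\,\E_T\!\bigl[N\sK(\ell,T)\bigr].
\]
I then split the sum at a threshold $L=\Theta(k)$ and bound each part separately.

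\textbf{Low-degree part ($\ell\le L$).} Here I would use the bias directly: $|\E[D_\sym^{[\ell]}]|\le n^{-20k}$, combined with \Cref{fact:coeff-sym-fcn} ($|\hf([\ell])|\le \binom{n}{\ell}^{-1/2}$) and \Cref{cor:bijlsv-ineq} ($|\sK(\ell,t)|\le \binom{n}{\ell}(\ell/n+t^2/n^2)^{\ell/2}$). Since $(\summ z)^2\le n^{1.2}$, the product inside the sum is at most $n^{-20k}\binom{n}{\ell}^{1/2}(\ell/n+n^{-0.8})^{\ell/2}$, which telescopes to a term exponentially smaller than the target for $L$ chosen modestly above $k$.

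\textbf{High-degree part ($\ell>L$).} Here the bias is not enough for $\ell$ in the bulk, so I would exchange sum and expectation over $T$ and split based on $|T|\le T_0$ vs.\ $|T|>T_0$ with $T_0 := n^{0.65}$. For the \emph{atypical} case, the inner Fourier sum is trivially $O(1)$ and I would control $\Pr[|T|>T_0]$ through moment inequalities: \Cref{lemma:OZ} places $D_\sym$ within $n^{-\Omega(k)}$ total variation of a $2k$-wise uniform distribution, so \Cref{lemma:moment} yields $\E[T^{2k}]=O_k(n^k)$, and Markov gives $\Pr[|T|>T_0]\le O_k(n^{-0.3k})$, matching the target. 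For the \emph{typical} case, bound the inner sum by $\sum_{\ell>L}\binom{n}{\ell}^{1/2}|N\sK(\ell,t)|\,|N\sK(\ell,T)|$ and split $\ell$ into three ranges: (i) $L<\ell<0.01n$, where \Cref{cor:bijlsv-ineq} applied to both Krawtchouk factors together with AM--GM ($\sqrt{(a+b)(a+c)}\le a+(b+c)/2$) yields a bound $\binom{n}{\ell}^{1/2}(\ell/n+n^{-0.7})^\ell$, summing to $O_k(n^{-0.3k})$; (ii) $0.01n\le\ell\le 0.99n$, where \Cref{cor:Harm-bound} (or \Cref{prop:Harm-bound-H}) forces $|N\sK(\ell,\cdot)|\le O(n)\cdot 2^{-nH(\ell/n)/2+\cdot^2/(2n)}$, so the product is exponentially small in $n$ since $H(\ell/n)$ is bounded away from $0$; (iii) $\ell>0.99n$, reduced to range (i) by the symmetry $|\sK(\ell,\cdot)|=|\sK(n-\ell,\cdot)|$ and $\binom{n}{\ell}=\binom{n}{n-\ell}$.

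\textbf{Main obstacle.} The trickiest point is the bulk range $\ell\sim n/2$: here \Cref{cor:bijlsv-ineq} is loose, and \Cref{cor:Harm-bound} carries a $2^{t^2/(2n)}=2^{n^{0.2}/2}$ subexponential factor from the comparatively large value $(\summ z)^2\le n^{1.2}$. This factor is not absorbed by the $n^{-20k}$ bias for constant $k$, so the bias-only strategy used for the low-degree part fails. The workaround is the symmetric structure: conditioning on typical $T$, the same Harm-type bound applies to $|N\sK(\ell,T)|$ with only a polynomial factor $2^{T_0^2/(2n)}$, and the product $|N\sK(\ell,t)|\cdot|N\sK(\ell,T)|$ gains the full $2^{-nH(\ell/n)}$ cancellation from $\binom{n}{\ell}^{-1}$, which swamps the subexponential blowup. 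Balancing $T_0$ so that both the typical and atypical parts come in at $O_k(n^{-0.3k})$ yields the claimed bound.
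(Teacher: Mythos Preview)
Your high-level approach matches the paper's: exploit the symmetry of $D_\sym$ and $f$ to collapse the error to a single sum over levels $\ell$, handle the extreme levels via the small bias, and for the bulk combine \Cref{cor:bijlsv-ineq} (for $\ell$ away from $n/2$) with \Cref{cor:Harm-bound} (for $\ell$ near $n/2$) together with a tail bound on $\summ{D_\sym}$. Where you diverge is the order of two operations, and at that point there is a genuine gap.

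The paper first conditions $D_\sym$ on the good event $G=\{|\summ{D_\sym}|\le n^{0.6}\}$, paying $\Pr[D_\sym\notin G]$ once against the \emph{bounded} function $f$; only afterwards does it split $f=f_\ends+f_\middle$, with $f_\ends$ collecting the levels in $[1,k]\cup[n-k,n]$ (handled by the bias of the conditioned distribution) and $f_\middle$ handled by the Krawtchouk bounds applied to both the $z$-side and the now-small $T$-side. You split $f$ first and condition only inside the high-degree part, asserting that for atypical $T$ ``the inner Fourier sum is trivially $O(1)$''. That sum equals $\E_{w\sim W_T}[f_{>L}(z\cdot w)]$, and the truncation $f_{>L}$ is not pointwise bounded: writing it as $\E_w[f(z\cdot w)]-\E_w[f_{\le L}(z\cdot w)]$, the first term is at most $1$, but the second is controlled only by $\sum_{\ell\le L}\binom{n}{\ell}^{1/2}(\ell/n+n^{-0.8})^{\ell/2}=O_k(n^{0.1L})$, not $O(1)$. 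Multiplying by your $2k$-moment tail bound $\Pr[|T|>T_0]\le O_k(n^{-0.3k})$ therefore does not reach the target. The clean fix is precisely the paper's ordering (condition first, then split); alternatively one can raise the moment order to $\Theta(k)$ large enough to absorb the extra polynomial, but that is messier. A related smaller gap: your range (iii) reduction by symmetry sends $\ell\in(n-L,n]$ to $n-\ell\in[0,L)$, which lies outside your range (i); at $\ell=n$ in particular the Krawtchouk route gives nothing ($|\hf([n])|\le 1$, $|\sK(n,t)|=|N\sK(n,T)|=1$). These top levels must also be handled by the bias, which is why the paper's $f_\ends$ includes both $[1,k]$ and $[n-k,n]$.
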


Note that it is crucial that $D_\sym$ is symmetric, as small-bias distributions are closed under shifts; so every small-bias distribution $D$ is also a shifted small-bias distribution.

\begin{corollary} \label{cor:shifted-sym-bias-special}
Let $D_\sym$ and $D$ be two independent $n^{-20k}$-biased distributions on $\pmo^n$, where $D_\sym$ is symmetric.
Then $\abs{\E[f(D_\sym + D)] - \E[f]} \le c_k(n^{-0.3k})$ for every symmetric function $f\colon\pmo^n \to [-1,1]$.
\end{corollary}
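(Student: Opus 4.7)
The plan is to reduce \Cref{cor:shifted-sym-bias-special} to \Cref{thm:shifted-sym-bias-special} by averaging over $z \sim D$ and splitting according to whether $z$ is nearly balanced. Since $D$ and $D_\sym$ are independent, write
\[
  \E[f(D_\sym \cdot D)] - \E[f] \;=\; \E_{z \sim D}\bigl[\, \E[f(z \cdot D_\sym)] - \E[f] \,\bigr],
\]
and split the outer expectation according to the event $\{|\summ z| \le n^{0.6}\}$. On this good event, \Cref{thm:shifted-sym-bias-special} bounds the inner difference by $c_k n^{-0.3k}$ pointwise (using that $D_\sym$ is a symmetric $n^{-20k}$-biased distribution); on the complementary event, $f \in [-1,1]$ yields the trivial bound $2$. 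The task therefore reduces to establishing the tail estimate $\Pr_{z \sim D}[|\summ z| > n^{0.6}] \le c_k n^{-0.3k}$.

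For this tail bound I would pass from small bias to bounded uniformity and then invoke a high-moment inequality. Since $D$ is $n^{-20k}$-biased, \Cref{lemma:OZ} applied with its parameter $k$ replaced by $4k$ yields that $D$ is $\delta$-close in total variation to some $(4k)$-wise uniform $\tilde D$, with
\[
  \delta \;\le\; \Bigl(\frac{e^3 n}{4k}\Bigr)^{\!2k}\! \cdot n^{-20k} \;\le\; c_k\, n^{-18k}.
\]
Applying \Cref{cor:tail-bound} to $\tilde D$ with $2k$ in the role of the corollary's $k$ (so that its $(4k)$-wise uniformity is used) and $t = n^{0.6}$ gives
\[
  \Pr\bigl[|\summ \tilde D| \ge n^{0.6}\bigr] \;\le\; \sqrt{2}\Bigl(\frac{4k}{e n^{0.2}}\Bigr)^{\!2k} \;\le\; c_k\, n^{-0.4k}.
\]
Passing back to $D$ via the $\delta$ closeness then yields $\Pr[|\summ D| \ge n^{0.6}] \le c_k n^{-0.4k} + c_k n^{-18k} \le c_k n^{-0.3k}$, and combining with the pointwise bound on the good event gives the claimed $c_k n^{-0.3k}$ bound.

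There is no real obstacle here: all the substantive work lives in \Cref{thm:shifted-sym-bias-special}, while the corollary is a standard bias-to-bounded-uniformity reduction followed by a moment tail bound. The only point to keep in mind is that the concentration threshold $n^{0.6}$ is dictated by \Cref{thm:shifted-sym-bias-special}, so enough wise-uniformity is needed for the tail to beat $n^{-0.3k}$; any $k' \ge 3k$ suffices, and choosing $k' = 4k$ keeps the computation comfortable.
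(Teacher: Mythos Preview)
Your proposal is correct and follows essentially the same route as the paper: condition on the shift $z\sim D$, use \Cref{thm:shifted-sym-bias-special} on the event $|\summ z|\le n^{0.6}$, and control the bad event via \Cref{lemma:OZ} plus \Cref{cor:tail-bound}. The only cosmetic difference is that the paper passes to $(10k)$-wise uniformity (obtaining a tail of $O_k(n^{-k})$) whereas you use $(4k)$-wise uniformity (obtaining $O_k(n^{-0.4k})$); both are swallowed by the $O_k(n^{-0.3k})$ term.
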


Also note that $D + D_\sym$ itself is not necessarily a symmetric distribution.

\begin{proof}[Proof of \Cref{cor:shifted-sym-bias-special}]
  By \Cref{lemma:OZ}, $D$ is $n^{-10k}$-close to $(10k)$-wise uniform.
  So by \Cref{cor:tail-bound},
  \[
    \Pr\biggl[ \abs[\Big]{\sum_{i=1}^n D_i} \ge n^{0.6} \biggr] 
    \le \left(\frac{10kn}{n^{1.2}}\right)^{5k} + n^{-10k}
    \le O_k(n^{-k}) .
  \]
  It follows from \Cref{thm:shifted-sym-bias-special} that the error is $O_k(n^{-k}) + O_k(n^{-0.3k}) = O_k(n^{-0.3k})$.
\end{proof}

\begin{proof}[Proof of \Cref{thm:shifted-sym-bias-special}]
  Let $\eps := n^{-20k}$ be the bias of $D_\sym$ and $t := n^{0.6}$.
  Define $G := \{ x \in \pmo^n: \abs{\sum_{i=1}^n x_i} \le t \}$.
  As $D_\sym$ is $\eps$-biased, by \Cref{lemma:OZ}, it is $\delta$-close to $(30k)$-wise uniform, where $\delta \le n^{-4k}$.
  Applying \Cref{cor:tail-bound} with our choice of $t = n^{0.6}$ in the definition of $G$, we have
  \begin{align}  \label{eq:shifted-sym-bias-sp-tail}
    \Pr[ D_\sym \not\in G ]
    \le \left( \frac{ 30kn }{n^{1.2}} \right)^{15k} + \delta
    \le O_k(n^{-3k}) .
  \end{align}
  Let $D_\sym'$ be the distribution of $D_\sym$ conditioned on $D_\sym \in G$.
  Note that $D_\sym'$ remains a symmetric distribution and by \Cref{fact:tvd-conditioning} is $\Pr[D_\sym \notin G]$-close to $D_\sym$, and thus is $\eps'$-biased, where $\eps' := \eps + \Pr[D_\sym \notin G] \le O_k(n^{-3k})$.
  We now write $f := f_{\middle} + f_{\ends}$, where $f_{\middle}(x) := \sum_{k <  \abs{S} < n-k} \hf(S) x^S$, and $f_\ends(x) := f(x) - f_{\middle}(x) = \sum_{\abs{S} \in [0,k] \cup [n-k,n]} \hf(S) x^S$.
  By the triangle inequality, we have 
  \begin{align}
    \abs[\big]{\E[f(z \cdot D_\sym)] - \E[f]}
    &\le \abs[\big]{\E[f(z \cdot D_\sym')] - \E[f]} + \Pr[D_\sym \notin G] \nonumber \\
    &\le \abs[\big]{\E[f_\ends(z \cdot D_\sym')] - \E[f]} + \abs[\big]{\E[f_\middle(z \cdot D_\sym')]} + O_k(n^{-3k}) . \label{eq:shifted-sym-bias-sp-cases}
  \end{align}
  We now bound each of the two terms on the right hand side.
  As $z \cdot D_\sym'$ is $\eps'$-biased, we have
  \begin{align} \label{eq:shifted-sym-bias-sp-ends}
    \abs[\big]{ \E[f_{\ends}(z \cdot D_\sym')] - \E[f] }
    \le \sum_{\abs{S} \in [1,k] \cup [n-k,n]} \abs{\hf(S)} \eps'
    \le 2 n^k \eps'
    \le O_k(n^{-2k}) .
  \end{align}
  
  To bound $\abs{\E[f_\middle(z \cdot D_\sym')]}$, let $S$ be any subset of size $\ell$.
  As $f$ is symmetric, we have $\hf(S) = \hf([\ell])$.
  As $D_\sym'$ is also symmetric, we have $\E[D_\sym'^S] = \eps_\ell$ for some $\eps_\ell$ which only depends on the size of $S$.
  Hence,
  \begin{align*}
  \abs[\Big]{\E[f_{\middle}(z \cdot D_\sym')}
  \le  \abs[\bigg]{ \sum_{\ell=k+1}^{n-k-1} \sum_{\abs{S}=\ell} \hf(S) \E[D_\sym'^S] z^S }
  =  \abs[\bigg]{\sum_{\ell=k+1}^{n-k-1} \hf([\ell]) \eps_\ell \sum_{\abs{S}=\ell} z^S } .
  \end{align*}
  As $\abs{\sum_{\abs{S}=\ell} z^S} = \abs{ z^{[n]} \sum_{\abs{S}=n-\ell} z^S} = \abs{\sum_{\abs{S}=n-\ell} z^S}$, by \Cref{fact:coeff-sym-fcn,claim:bias-unif-weight} we have
  \[
    \abs[\bigg]{ \sum_{\ell=k+1}^{n-k-1} \hf([\ell]) \eps_\ell \sum_{\abs{S}=\ell} z^S }
  \le \sum_{\ell=k+1}^{n-k+1} \biggl( \abs{\hf([\ell])} \cdot \abs{\eps_\ell} \cdot \abs[\bigg]{\sum_{\abs{S}=\ell} z^S} \biggr) \\
  \le 2 \sum_{\ell=k+1}^{\floor{n/2}} \frac{\sK(\ell,n^{0.6})^2}{\binom{n}{\ell}^{3/2}} .
  \]
  We first bound the partial sum over $\ell$ from $n/4$ to $n/2$.
  Note that the binary entropy function $H(x) := x \log_2(1/x) + (1-x) \log_2(1/(1-x))$ is increasing on $[0,1/2]$.
  In particular, we have $H(1/4) \ge 4/5$ and so $\frac{3}{2}H(1/4) \ge 6/5$.
  By Stirling's approximation, we have $\binom{n}{\ell} \ge \frac{1}{n^2} 2^{n H(\frac{\ell}{n})}$ (see \cite{CT06} for a proof).
  Applying \Cref{cor:Harm-bound} with these facts, we have
  \begin{align}
    \sum_{\ell=\max\{n/4,k\}+1}^{\floor{n/2}} \frac{\sK(\ell,n^{0.6})^2}{\binom{n}{\ell}^{3/2}}
    \le \frac{n}{4} \cdot n^2 \cdot 2^{-n \bigl( \frac{3}{2} H(\frac{1}{4})- 1 - n^{-0.8} \bigr) }
    \le 2^{-n/10} .
  \end{align}  
  We now bound the remaining sum (i.e., the partial sum from $\ell=k+1$ to $n/4$).
  Using \Cref{cor:bijlsv-ineq} and the inequality $\binom{n}{\ell} \le (\frac{en}{\ell})^\ell$, we have
  \begin{align*}
    \sum_{\ell=k+1}^{n/4} \frac{\sK(\ell,n^{0.6})^2}{\binom{n}{\ell}^{3/2}}
    \le \sum_{\ell=k+1}^{n/4} \binom{n}{\ell}^{1/2} \left( \frac{\ell}{n} + \frac{n^{1.2}}{n^2} \right)^{\ell}
    \le \sum_{\ell=k+1}^{n/4} \left(\frac{en}{\ell}\right)^{\ell/2} \left( \frac{\ell}{n} + \frac{1}{n^{0.8}} \right)^{\ell} .
  \end{align*}
  Observe that each term in the sum is at most $1/2$ of its previous term, and so this sum is bounded by twice the first term, which is at most $O_k(n^{-0.3k})$.
  Therefore, 
  \begin{align} \label{eq:shifted-sym-bias-sp-middle}
    \abs[\big]{ \E\bigl[f_\middle(z \cdot D_\sym')\bigr]}
    \le 2^{-n/10} + O_k(n^{-0.3k})
    \le O_k(n^{-0.3k}) .
  \end{align}
  Plugging \cref{eq:shifted-sym-bias-sp-ends,eq:shifted-sym-bias-sp-middle} in \cref{eq:shifted-sym-bias-sp-cases} completes the proof.
  \end{proof}

\subsection{General case}

\Cref{thm:shifted-sym-bias-special} is stated for a nearly-balanced shift.
We now prove a general bound that holds for any shifts.

\begin{theorem} \label{thm:shifted-sym-bias}
  There exists a constant $C$ such that the following holds.
  Let $D_\sym$ be a symmetric $\eps$-biased distribution on $\pmo^n$ and $z \in \pmo^n$ be any string.
  Let $s := \abs{\sum_{i=1}^n z_i}$.
  For every positive integer $k$ and every symmetric function $f\colon\pmo^n \to [-1,1]$, we have
  \[
    \abs[\bigg]{ \E[f(z \cdot D_\sym)] - \E[f] }
    \le C \left( \left( \frac{11\max\{s,\sqrt{kn}\}}{n} \right)^{k/2}  + \left(\frac{e^3n}{2k}\right)^{k/2} \eps \right) .
  \]
\end{theorem}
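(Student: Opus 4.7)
The proof adapts the approach of \Cref{thm:shifted-sym-bias-special} to arbitrary shift weight $s$. Without loss of generality assume $\max\{s,\sqrt{kn}\} \le n/11$, as otherwise the first term in the bound already exceeds a constant and the claim is trivial. Set $t := C\max\{s, \sqrt{kn}\}$ for a sufficiently large constant $C$. By \Cref{lemma:OZ}, $D_\sym$ is close in total variation to some $k$-wise uniform distribution with error $(e^3n/k)^{k/2}\eps$, so by \Cref{cor:tail-bound} one gets $\Pr[\abs{\summ D_\sym} > t] \le C^{-\Omega(k)} + (e^3n/(2k))^{k/2}\eps$, which will be absorbed into the bias term of the target. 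Condition $D_\sym$ on $G := \{\abs{\summ x} \le t\}$ to obtain a symmetric $D_\sym'$ with bias $\eps' \le \eps + \Pr[D_\sym \notin G]$.

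Next, decompose $f = f_\ends + f_\middle$, where $f_\ends$ collects the Fourier parts of degree in $[0,k] \cup [n-k,n]$. Bound $|\E[f_\ends(z \cdot D_\sym')] - \E[f]|$ using $\eps'$-bias and \Cref{fact:coeff-sym-fcn} by $O(\eps'\binom{n}{k}^{1/2}) = O((e^3n/(2k))^{k/2}\eps)$. For $f_\middle$, by symmetry of $D_\sym'$ we have $\E[(D_\sym')^S] = \eta_\ell$ depending only on $\ell = \abs{S}$, and \Cref{claim:bias-unif-weight} together with the support of $D_\sym'$ gives $|\eta_\ell| \le \max_{|w|\le t}|\sK(\ell, w)|/\binom{n}{\ell}$. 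Combining with $\abs{\sum_{|S|=\ell} z^S} = |\sK(\ell, s)|$ and $|\hat f([\ell])| \le \binom{n}{\ell}^{-1/2}$ from \Cref{fact:coeff-sym-fcn}, the middle contribution is at most $\sum_{\ell=k+1}^{n-k-1} |\sK(\ell,t)||\sK(\ell,s)|/\binom{n}{\ell}^{3/2}$, and by $|\sK(\ell,w)| = |\sK(n-\ell,w)|$ we restrict to $\ell \le n/2$.

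The main technical task is to bound this Krawtchouk sum by $O((11\max\{s,\sqrt{kn}\}/n)^{k/2})$. Split at a threshold $\ell_0 = \Theta(n)$: for $\ell \in [\ell_0, n/2]$, apply \Cref{cor:Harm-bound} to bound each summand by $n^{O(1)} \cdot 2^{-nH(\ell/n)/2 + (t^2+s^2)/(2n)}$, which is exponentially small in $n$ since $s,t \ll n$; for $\ell \in [k+1, \ell_0]$, apply \Cref{cor:bijlsv-ineq} to get each summand at most $\binom{n}{\ell}^{1/2}(\ell/n + t^2/n^2)^{\ell/2}(\ell/n + s^2/n^2)^{\ell/2}$. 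With our choice of $t$, both inner terms are $O((\ell+k)/n + s^2/n^2)$, and I would show that the resulting sequence is (sub-)geometric in $\ell$, so the sum is dominated by its first term after absorbing constants into the factor of $11$ in the target.

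\textbf{Main obstacle.} The trickiest part is verifying the geometric-decay claim in the last step when $s^2/n$ falls inside the range $[k+1, \ell_0]$: the summand $\binom{n}{\ell}^{1/2}(\ell/n + s^2/n^2)^\ell$ can grow before it decays near $\ell \sim s^2/n$, so a naive monotonicity argument fails. I expect to handle this by a further sub-split at $\ell \sim s^2/n$, using \Cref{cor:bijlsv-ineq} for $\ell$ up to $s^2/n$ and \Cref{cor:Harm-bound} beyond, and verifying in each sub-range that the partial sum is bounded by the target. The comparatively large constant $11$ in the statement is precisely what allows these multiplicative losses to be absorbed.
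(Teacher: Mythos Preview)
Your plan has a genuine gap in how $f_\ends$ is handled. Bounding $|\E[f_\ends(z\cdot D_\sym')]-\E[f]|$ via the conditioned bias $\eps'=\eps+\Pr[D_\sym\notin G]$ contributes a term of order $\Pr[D_\sym\notin G]\cdot\binom{n}{k}^{1/2}$, and no choice of $t$ makes this small enough while keeping the middle sum within the target. With your $t=C\max\{s,\sqrt{kn}\}$ the tail probability is only $(2/(eC^2))^k$, which for fixed $k$ does not decay with $n$ at all and cannot be ``absorbed into the bias term'' (take $k=1$, $\eps=0$). Even with a larger threshold---say $t=(kn^3)^{1/4}$ in the case $s\le\sqrt{kn}$---one has $\Pr[D_\sym\notin G]\approx(k/n)^{k/2}$ while $\binom{n}{k}^{1/2}\approx(n/k)^{k/2}$, and their product is $\Omega(1)$. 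The paper circumvents this by \emph{not} conditioning for the ends part: it bounds $|\E[f_\ends(Z)]-\E[f]|\le\delta$ directly using the unconditioned $\eps$-bias of $Z=z\cdot D_\sym$, and then controls $|\E[f_\ends(Z)\,\Id(D_\sym\notin G)]|$ via Cauchy--Schwarz together with the second-moment estimate $\E[f_\ends(Z)^2]\le 1+\delta$, picking up only $\Pr[D_\sym\notin G]^{1/2}$ rather than $\Pr[D_\sym\notin G]\cdot\binom{n}{k}^{1/2}$.

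With that fix in place, the paper chooses $t=(kn^3)^{1/4}$ when $s\le\sqrt{kn}$ and $t=(k^2n^4/s^2)^{1/4}$ when $s\ge\sqrt{kn}$, balancing the tail against the middle sum. These choices also dissolve your ``main obstacle'': in the second case $t^2/n^2=k/s\le s/n$ and the cross term $t^2s^2/(n^3\ell)=ks/(n\ell)\le s/n$ for $\ell\ge k$, so the bracket in the middle summand collapses to $e(\ell/n+3s/n)$, which is geometrically decreasing in $\ell$ from $\ell=k+1$ onward---no sub-split near $\ell\approx s^2/n$ is needed. The non-monotonicity you worried about is an artifact of taking $t$ proportional to $s$.
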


The following lower bound shows that the dependence on $s$ in \Cref{thm:shifted-sym-bias} is necessary.

\begin{claim} \label{claim:shifted-sym-bias-lb}
  There exists a constant $c > 0$ such that the following holds.
  For every integer $m \ge 3$, there is a symmetric $e^{-cn/m^2}$-biased distribution $D$ on $\zo^n$ such that for every string $z \in \zo^n$ of Hamming weight at most $\floor{m/2}-1$, there exists a symmetric function $f\colon \zo^n \to \zo$ such that $f(D)=0$ always and $\Pr[f(U) = 1] \ge 1/m - e^{-cn/m^2}$.
\end{claim}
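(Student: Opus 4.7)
My plan is to take $D$ to be the uniform distribution on $\{x \in \zo^n : |x| \equiv 0 \pmod m\}$; this is symmetric by construction.

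\textbf{Bias.} For a nonempty $S$ with $|S| = k$, I would use the character expansion on $\mathbb{Z}/m\mathbb{Z}$,
\[
  \Id(|U| \equiv 0 \pmod m) = \frac{1}{m}\sum_{j=0}^{m-1} \omega^{j|U|}, \qquad \omega := e^{2\pi i/m},
\]
and split $|U| = |U \cap S| + |U \setminus S|$ into independent binomials, obtaining
\[
  \E_U\bigl[(-1)^{|U \cap S|}\omega^{j|U|}\bigr] = \biggl(\frac{1-\omega^j}{2}\biggr)^k \biggl(\frac{1+\omega^j}{2}\biggr)^{n-k}
\]
of magnitude $|\sin(\pi j/m)|^k |\cos(\pi j/m)|^{n-k}$. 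The $j = 0$ term vanishes since $k \ge 1$, and for $j \in \{1, \ldots, m-1\} \setminus \{m/2\}$ a short case split on whether $k \le n/2$ or $k > n/2$ bounds each summand by $\cos^{n/2}(\pi/(2m)) \le e^{-cn/m^2}$. Dividing by $\Pr[|U| \equiv 0 \pmod m] = 1/m + O(e^{-cn/m^2})$ gives the required bias.

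\textbf{Test.} For a shift $z$ of weight $s := |z| \le \lfloor m/2 \rfloor - 1$, the identity $|z \oplus D| = s + |D| - 2|z \cap D|$ together with $|D| \equiv 0 \pmod m$ and $|z \cap D| \in \{0, 1, \ldots, s\}$ shows that $|z \oplus D| \bmod m$ lies in $R_z := \{s - 2j \bmod m : 0 \le j \le s\}$, a set of $s + 1 \le \lfloor m/2 \rfloor < m$ distinct residues (the pairwise differences $2(j_2 - j_1)$ lie in $(-m,m)$). Picking any $r \notin R_z$ and taking $f(x) := \Id(|x| \equiv r \pmod m)$ yields a symmetric Boolean function with $f(z \oplus D) = 0$ always; the same Fourier calculation shows $\Pr[f(U) = 1] = 1/m + O(e^{-cn/m^2}) \ge 1/m - e^{-c'n/m^2}$.

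\textbf{Main obstacle.} When $m$ is even, $|D|$ is forced to be even and so $\E[(-1)^{|D|}] = 1$, violating the bias requirement at $S = [n]$. I would circumvent this by XOR-ing $D$ with a random singleton $e_I$ (with probability $1/2$, where $I$ is uniform on $[n]$): this preserves symmetry, forces $\E[(-1)^{|D'|}] = 0$, and damps the other biases only by a factor of $1 - k/n$. The perturbation widens $R_z$ to $R_z \cup (R_z \pm 1)$, so the separating argument goes through directly for shifts of weight $s \le \lfloor m/2 \rfloor - 2$ in the even case; the single extreme shift $s = m/2 - 1$ for even $m$ requires a more careful construction (e.g., using a residue structure tailored so that one residue class is still missed after the parity-fixing perturbation).
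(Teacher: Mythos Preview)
Your approach is the same as the paper's: both take $D$ uniform on $\{x:\ |x|\equiv 0 \pmod m\}$ and distinguish with the indicator of a residue class modulo $m$. The paper's test argument is slightly slicker than yours: instead of computing the full set $R_z$ of attainable residues, it fixes the single target $r=\lfloor(m+1)/2\rfloor$ and uses the Hamming triangle inequality $\bigl||D|-|z|\bigr|\le |D\oplus z|\le |D|+|z|$ to conclude that $|D\oplus z|\bmod m$ lies in $\{-|z|,\dots,|z|\}$, which misses $r$ once $|z|\le\lfloor m/2\rfloor-1$. For the bias bound and for $\Pr[|U|\equiv r\pmod m]\ge 1/m-e^{-cn/m^2}$ the paper just cites \cite{BIJLSV21}, where you instead write out the standard character computation.

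Your even-$m$ observation is correct and is a real wrinkle: when $m$ is even the support of $D$ sits inside even weights, so $\chi_{[n]}$ has bias $1$ and the construction is \emph{not} $e^{-cn/m^2}$-biased as stated. The paper's proof does not address this and simply defers to the cited reference, so on this point you are being more careful than the paper. That said, your random-singleton fix is heavier than the claim warrants and, as you note, still leaves the boundary weight $|z|=m/2-1$ open. A cleaner way out is to handle odd $m$ directly (where both your argument and the paper's go through verbatim) and, for even $m$, run the construction with modulus $m+1$: the weight hypothesis $\lfloor m/2\rfloor-1=\lfloor(m+1)/2\rfloor-1$ is identical, the bias only sharpens, and one accepts $1/(m+1)$ in place of $1/m$ in the conclusion (adjusting the constant in the statement).
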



\begin{proof}[Proof of \Cref{claim:shifted-sym-bias-lb}]
  Let $D$ be the uniform distribution on $\{ x \in \zo^n: \sum_i x_i \equiv 0 \bmod m\}$.
  It is known that $D$ is $2^{-cn/m^2}$-biased (see Claim~18 and Lemma 19 in \cite{BIJLSV21} for a proof). Let $z$ be any string of weight $|z| \le \floor{m/2}-1$.
  Consider the symmetric function $f(x) := \Id(|x| \equiv \floor{(m+1)/2} \bmod m)$.
  By the triangle inequality, we have $|D|-|z| \le |D+z| \le |D|+|z|$, and so $|D+z| \not\equiv \floor{(m+1)/2} \pmod m$.

  On the other hand, it is known that $\Pr[\Bin(n,1/2) \equiv \floor{(m+1)/2}] \ge 1/m - e^{-cn/m^2}$ (see, again, Claim~18 in \cite{BIJLSV21} for a proof).
\end{proof}

\begin{proof}[Proof of \Cref{thm:shifted-sym-bias}]
  We may assume $k \le n/16$ and $s \le n/120$, as otherwise the bound given in the theorem is at least $1$.
  Define $G := \{ x \in \pmo^n: \abs{\sum_{i=1}^n x_i} \le t \}$, where $t := t(n,k,s)$ is a parameter to be chosen.
  As $D_\sym$ is $\eps$-biased, by \Cref{lemma:OZ}, it is $\delta$-close to $(2k)$-wise uniform, where $\delta := (\frac{e^3n}{2k})^k \eps$.
  Applying \Cref{cor:tail-bound}, we have
  \begin{align} \label{eq:shifted-sym-bias-tail}
    \Pr[ D_\sym \not\in G ]
    \le \sqrt{2} \cdot \left( \frac{2nk}{et^2} \right)^k + \delta .
  \end{align}
  We write $f := f_{\middle} + f_{\ends}$, where $f_{\middle}(x) := \sum_{k+1 < \abs{S} < n-k} \hf(S) x^S$, and $f_\ends(x) := f(x) - f_{\middle}(x) = \sum_{\abs{S} \in [0,k] \cup [n-k,n]} \hf(S) x^S$.
  For convenience, let $Z:= z \cdot D_{\sym}$.
  As $Z$ is $\eps$-biased, we have
  \[
    \abs[\big]{ \E[f_{\ends}(Z)] - \E[f] }
    \le \sum_{\abs{S} \in [1,k] \cup [n-k,n]} \abs{\hf(S)} \eps
    \le 2 \biggl(\frac{e^3n}{k}\biggr)^{k/2} \eps
    \le \delta .
  \]
  By the triangle inequality, we have
  \begin{align}
    \MoveEqLeft
    \abs[\big]{ \E\bigl[f_{\ends}(Z) \Id(D_\sym \in G)\bigr] - \E[f]  }  \nonumber \\
    &\le \abs[\big]{ \E\bigl[f_{\ends}(Z) \Id(D_\sym \in G)\bigr] - \E\bigl[f_{\ends}(Z)\bigr]  } + \abs[\big]{ \E[f_{\ends}(Z)] - \E[f] } \nonumber \\
    &\le \abs[\big]{ \E\bigl[f_{\ends}(Z) \Id(D_\sym \in G)\bigr] - \E[f_{\ends}(Z)]  } + \delta \nonumber \\
    &= \abs[\big]{ \E\bigl[f_{\ends}(Z) \Id(D_\sym \notin G)\bigr] } + \delta . \label{eq:shifted-sym-bias-1}
  \end{align}
  As $Z$ is $\eps$-biased,
  \begin{align*}
    \E\bigl[f_{\ends}(Z)^2\bigr]
    &= \sum_{\abs{S},\abs{T} \in [0,k] \cup [n-k,n]} \hf(S) \hf(T) \E\bigl[Z^{S \symdiff T}\bigr] \\
    &\le \sum_{\abs{S} \in [0,k] \cup [n-k,n]} \hf(S)^2 + \sum_{\abs{S} \ne \abs{T} \in [0,k] \cup [n-k,n]} \abs{\hf(S)} \abs{\hf(T)} \eps  \\
    &\le 1 + 2\binom{n}{k} \eps \le 1 + \delta .
  \end{align*}
  By Cauchy--Schwarz,
  \begin{align} \label{eq:shifted-sym-bias-2}
    \abs[\big]{ \E\bigl[f_{\ends}(Z) \Id (D_\sym \notin G) \bigr]  }
    \le \E[f_{\ends}(Z)^2 ]^{1/2} \Pr[D_\sym \notin G]^{1/2}
    \le 2 \Pr[D_\sym \notin G]^{1/2} .
  \end{align}
  We now use \cref{eq:shifted-sym-bias-1,eq:shifted-sym-bias-2} to bound the error as follows:
  \begin{align}
    \abs[\big]{ \E[f(Z)] - \E[f] }
    &=  \abs[\big]{ \E\bigl[f(Z) \Id(D_\sym \in G) \bigr] +\E\bigl[f(Z) \Id( D_\sym \notin G)\bigr] - \E[f]  } \nonumber  \\
    &\le \abs[\big]{ \E\bigl[f_{\ends}(Z) \Id(D_\sym \in G)\bigr] - \E[f]  } +\abs[\big]{\E\bigl[f_{\middle}(Z) \Id(D_\sym \in G)\bigr]} + \Pr [D_\sym \notin G] \nonumber  \\
    &\le \abs[\big]{ \E\bigl[f_{\ends}(Z) \Id(D_\sym \notin G)\bigr]  }  + \abs[\big]{\E\bigl[f_{\middle}(Z) \Id(D_\sym \in G)\bigr]} + \Pr[D_\sym \not\in G] + \delta \nonumber \\ 
    &\le \abs[\big]{ \E\bigl[f_{\middle}(Z) \Id(D_\sym \in G)\bigr] } + 3 \Pr[D_\sym \not\in G]^{1/2} + 2\delta . \label{eq:shifted-sym-bias-main}
  \end{align}
  We will bound the first term in \cref{eq:shifted-sym-bias-main} by
  \begin{align} \label{eq:shifted-sym-bias-mid}
    \abs[\big]{\E[f_{\middle}(z \cdot D_{\sym})]}
    \le \begin{cases} O(1) \left( \frac{120k}{n} \right)^{k/4} & \text{if $s \le \sqrt{kn}$ and $t = (k n^3)^{1/4}$} \\
    O(1) \left(\frac{120s^2}{n^2}\right)^{k/4} & \text{if $s \ge \sqrt{kn}$ and $t = \left(\frac{k^2n^4}{s^2}\right)^{1/4}$.} \end{cases}
  \end{align}
  Plugging \cref{eq:shifted-sym-bias-tail} and \cref{eq:shifted-sym-bias-mid} into \cref{eq:shifted-sym-bias-main} gives us an error of 
  \[
    O(1) \left( \left( \frac{120 \max\{s, \sqrt{kn}\}}{n} \right)^{k/2} + \delta \right) 
  \]
  as desired.

  \medskip

  It remains to prove \cref{eq:shifted-sym-bias-mid}.
  Let $S$ be any subset of size $\ell$.
  As $f$ is symmetric, we have $\hf(S) = \hf([\ell])$.
  Let $D_\sym'$ be the distribution of $D_\sym$ conditioned on $D_\sym \in G$.
  Note that $D_\sym'$ is also symmetric, and so we have $\E[D_\sym'^S] = \eps_\ell$ for some $\eps_\ell$ which only depends on the size of $S$.
  Hence,
  \begin{align*}
  \abs[\Big]{\E[f_{\middle}(z \cdot D_{\sym}) \Id(D_\sym \in G)]}
  \le  \abs[\bigg]{ \sum_{\ell=k+1}^{n-k-1} \sum_{\abs{S}=\ell} \hf(S) \E[D_\sym'^S] z^S }
  =  \abs[\bigg]{\sum_{\ell=k+1}^{n-k-1} \hf([\ell]) \eps_\ell \sum_{\abs{S}=\ell} z^S } .
  \end{align*}
  As $\abs{\sum_{\abs{S}=\ell} z^S} = \abs{ z^{[n]} \sum_{\abs{S}=n-\ell} z^S} = \abs{\sum_{\abs{S}=n-\ell} z^S}$, by \Cref{fact:coeff-sym-fcn,claim:bias-unif-weight} we have
  \[
    \abs[\Big]{ \sum_{\ell=k+1}^{n-k-1} \hf([\ell]) \eps_\ell \sum_{\abs{S}=\ell} z^S }
    \le \sum_{\ell=k+1}^{n-k+1} \biggl( \abs{\hf([\ell])} \cdot \abs{\eps_\ell} \cdot \abs[\bigg]{\sum_{\abs{S}=\ell} z^S} \biggr) \\
    \le 2 \sum_{\ell=k+1}^{\floor{n/2}} \frac{\sK(\ell,t) \sK(\ell, s)    }{\binom{n}{\ell}^{3/2}}  .
  \]
  We first bound the sum over $\ell$ from $n/4$ to $n/2$.
  Note that the binary entropy function $H(x) := x \log_2(1/x) + (1-x) \log_2(1/(1-x))$ is increasing on $[0,1/2]$.
  In particular, we have $H(1/4) \ge 4/5$ and so $\frac{3}{2}H(1/4) \ge 6/5$.
  By Stirling's approximation, we have $\binom{n}{\ell} \ge \frac{1}{n^2} 2^{n H(\frac{\ell}{n})}$ (see \cite{CT06} for a proof).
  Applying \Cref{cor:Harm-bound} with these facts along with $s \le n/120$ and $t \le (kn^3)^{1/4} \le n/2$, we have
  \begin{align} \label{eq:shifted-sym-bias-Harm}
    \sum_{\ell=\max\{n/4,k\}+1}^{\floor{n/2}} \frac{\sK(\ell,t) \sK(\ell, s)    }{\binom{n}{\ell}^{3/2}}
    &\le \sum_{\ell=\max\{n/4,k\}+1}^{\floor{n/2}} \frac{1}{\binom{n}{\ell}^{3/2}} \cdot 2^{\frac{n}{2}(2H(\frac{1}{4})+\frac{s^2}{n^2}+\frac{t^2}{n^2})} \\
    &\le \frac{n}{4} \cdot n^2 \cdot 2^{-n \bigl( \frac{3}{2} H(\frac{1}{4}) - 1 - \frac{s^2+t^2}{2n^2} \bigr) }
    \le 2^{-n/15} .
  \end{align} 
  We now bound the remaining sum (i.e. from $\ell=k+1$ to $n/4$).
  Using \Cref{cor:bijlsv-ineq,claim:bias-unif-weight}, and the inequality $\binom{n}{\ell} \le (\frac{en}{\ell})^\ell$, we have
  \begin{align} 
    \sum_{\ell=k+1}^{\floor{n/2}} \frac{\sK(\ell,t) \sK(\ell, s)    }{\binom{n}{\ell}^{3/2}}
    &\le \sum_{\ell=k+1}^{n/4} \binom{n}{\ell}^{1/2} \cdot \left( \frac{\ell}{n} + \frac{t^2}{n^2} \right)^{\ell/2} \left( \frac{\ell}{n} + \frac{s^2}{n^2} \right)^{\ell/2} \nonumber \\
    &= \sum_{\ell=k+1}^{n/4} \left( e \left( \frac{\ell}{n} + \frac{t^2}{n^2} + \frac{s^2}{n^2} + \frac{t^2s^2}{n^3\ell} \right) \right)^{\ell/2} .\label{eq:shifted-sym-bias-cases}
  \end{align}
  We now consider the two cases in \cref{eq:shifted-sym-bias-mid}.
  \paragraph{Case 1: $s \le \sqrt{nk}$ and $t = (n^3k)^{1/4}$.}
  In this case \cref{eq:shifted-sym-bias-cases} is at most
  \begin{align*}
    \sum_{\ell=k+1}^{n/4} \left( e \left( \frac{\ell}{n} + \sqrt{\frac{k}{n}} + \frac{k}{n} + \sqrt{\frac{k}{n}}  \right) \right)^{\ell/2}
    \le \sum_{\ell=k+1}^{n/4} \left( e \left( \frac{\ell}{n} + 3\sqrt{\frac{k}{n}}\right) \right)^{\ell/2}
    \le O(1) \left( \frac{120k}{n} \right)^{k/4} ,
  \end{align*}
  where the last inequality follows because each term in the sum is at most $9/10$ of its previous term, and so the sum is bounded by $10$ times the first term.
  Combining this with \cref{eq:shifted-sym-bias-Harm} proves the first case in \cref{eq:shifted-sym-bias-mid}.

  \paragraph{Case 2: $s \ge \sqrt{nk}$ and $t = (k^2n^4/s^2)^{1/4}$.}
  In this case \cref{eq:shifted-sym-bias-cases} is at most
   \begin{align*}
    \sum_{\ell=k+1}^{n/4} \left( e \left( \frac{\ell}{n} + \frac{k}{s} + \frac{s^2}{n^2} + \frac{s}{n} \right) \right)^{\ell/2}
    \le \sum_{\ell=k+1}^{n/4} \left( e \left( \frac{\ell}{n} + \frac{3s}{n} \right) \right)^{\ell/2} 
    \le O(1) \left( \frac{120s^2}{n^2} \right)^{k/4} ,
  \end{align*}
  where again the last inequality follows because each term in the sum is at most $9/10$ of its previous term, and so the sum is bounded by $10$ times the first term.
  Combining this with \cref{eq:shifted-sym-bias-Harm} proves the second case in \cref{eq:shifted-sym-bias-mid}.
  \end{proof}

\section{Proof of \Cref{thm:improve-Bazzi}} \label{sec:Bazzi}
In this section, we prove \Cref{thm:improve-Bazzi}, which is based on the same idea that was used in the previous sections.  The difference is that here we use that a typical shift is nearly balanced, and so $\sK(\ell,\summ x)$ is small.

\begin{proof}[Proof of \Cref{thm:improve-Bazzi}]
  Applying Cauchy--Schwarz, Parseval's identity (to the function $g(u) := f(u \cdot D)$), and the assumption that $\E[D^S] = 0$ for $\abs{S} \in [1,k] \cup [n-k,n]$, we have
  \[
    \E_{\bu}\Bigl[ \abs[\big]{ \E[f(\bu \cdot D)] - \E[f] } \Bigr]^2
    \le \E_{\bu}\Bigl[ \bigl( \E[f(\bu \cdot D)] - \E[f] \bigr)^2 \Bigr]
    = \sum_{S: \abs{S} \in (k,n-k)} \hf(S)^2 \E[\chi_S(D^2)] ,
  \]
  where $D^2$ is the sum of two independent copies of $D$, which is also $k$-wise uniform.
  Let $G := \{x \in \pmo^n : \abs{\sum_{i=1}^n x_i} \le (\frac{kn^3}{2e})^{1/4} \}$, and $D_G$ be the conditional distribution of $D^2$ supported on $G$.
  By \Cref{fact:tvd-conditioning}, the distribution $D_G$ is $\Pr[D\notin G]$-close to $D^2$.
  As $\sum_{S \subseteq [n]} \hf(S)^2 \le 1$, we have
  \[
    \abs[\bigg]{ \sum_{S: \abs{S} \in (k,n-k)} \hf(S)^2 \E[(D^2)^S] }
    \le \abs[\bigg]{ \sum_{S: \abs{S} \in (k,n-k)} \hf(S)^2 \E[D_G^S] } + \Pr[D \not\in G] .
  \]
  Applying \Cref{cor:tail-bound} (to the even integer $k-1$ or $k$), we have
  \begin{align} \label{eq:Bazzi-tail}
    \Pr[ D \notin G ]
    \le \left( \frac{2k}{en} \right)^{\frac{k-1}{4}} .
  \end{align}
  We now bound the first term on the right hand side as follows.
  Fix a string $z \in G$.
  As $\abs{\sum_{\abs{S}=\ell} z^S} = \abs{ z^{[n]} \sum_{\abs{S}=n-\ell} z^S} = \abs{\sum_{\abs{S}=n-\ell} z^S}$, by \Cref{fact:coeff-sym-fcn}, 
  \[
    \abs[\bigg]{ \sum_{S: \abs{S} \in (k,n-k)} \hf(S)^2 z^S }
    = \sum_{\ell=k+1}^{n-k-1} \hf([\ell])^2 \abs[\bigg]{\sum_{\abs{S}=\ell} z^S}
    \le 2 \sum_{\ell=k+1}^{n/2} \frac{1}{\binom{n}{\ell}} \abs[\bigg]{\sum_{\abs{S}=\ell} z^S} .
  \]
  We separate the sum into two parts depending on $\ell \le n/5$ and bound each of them individually.
  First, using \Cref{cor:bijlsv-ineq}, we have
  \begin{align} \label{eq:Bazzi-1}
    \sum_{\ell=k+1}^{n/2} \frac{1}{\binom{n}{\ell}} \abs[\bigg]{\sum_{\abs{S}=\ell} z^S}
    \le \sum_{\ell=k+1}^{n/5} \left( \frac{\ell}{n} + \sqrt{\frac{k}{2en}} \right)^{\ell/2}
    \le 2 \left( 2 \sqrt{\frac{k}{2en}} \right)^{k/2} 
    \le 2 \left( \frac{2k}{en} \right)^{k/4} ,
  \end{align}
  because each term in the sum is at most half its previous term, and so the sum can be bounded by twice the first term.
  For the remaining sum (from $\ell=\max\{k,n/5\}+1$ to $n/2$), note that the binary entropy function $H(x) := x \log_2(1/x) + (1-x) \log_2(1/(1-x))$ is increasing on $[0,1/2]$.
  In particular, we have $H(1/5) - \frac{1}{\sqrt{2e}} \ge 1/4$.
  By Stirling's approximation, we have $\binom{n}{\ell} \ge \frac{1}{n^2} 2^{n H(\frac{\ell}{n})}$ (see \cite{CT06} for a proof).
  Applying \Cref{cor:Harm-bound} with these facts, we have
  \begin{align*}
    \sum_{\ell=\max\{n/5,k\}+1}^{n/2} \frac{1}{\binom{n}{\ell}} \abs[\bigg]{\sum_{\abs{S}=\ell} z^S}
    \le \sum_{\ell=\max\{n/5,k\}+1}^{n/2} n^2 \cdot 2^{-\frac{n}{2}(H(\frac{\ell}{n}) - \frac{1}{\sqrt{2e}})}
    \le 2^{-n/10} .
  \end{align*}
  Combining this with \cref{eq:Bazzi-tail,eq:Bazzi-1} gives an error of $(\frac{2k}{en})^{\frac{k-1}{4}} + 2(2(\frac{2k}{en})^{\frac{k}{4}} + 2^{-n/10}) \le 6 (\frac{2k}{en})^{\frac{k-1}{4}}$. \qedhere
\end{proof}

\section{Bounds on Krawtchouk polynomials} \label{sec:Kraw-bounds}

In this section, we prove our upper and lower bounds on Krawtchouk polynomials (\Cref{cor:bijlsv-ineq,prop:Harm-bound-H,claim:sK-lb}).  \Cref{cor:bijlsv-ineq} follows directly from \Cref{lemma:elem-sym-ineq}, which is a general upper bound on the elementary symmetric polynomials $\sum_{\abs{S}=\ell} y^S$ that holds for arbitrary real tuples $y \in \R^n$, not only for $y \in \pmo^n$.

\begin{lemma} \label{lemma:elem-sym-ineq}
  Let $y = (y_1, \ldots, y_n) \in \R^n$.
  For every $1 \le \ell \le n$, we have
  \[
  \abs[\Bigg]{ \sum_{S \subseteq [n]: \abs{S}=\ell} y^S }
    \le \binom{n}{\ell}  \biggl( \frac{\ell-1}{n-1} \cdot \frac{\sum_{i=1}^n y_i^2}{n} + \left(1 - \frac{\ell-1}{n-1} \right) \cdot \frac{\bigl(\sum_{i=1}^n y_i\bigr)^2}{n^2} \biggr)^{\frac{\ell}{2}} .
  \]
  with equality if and only if $y_1 = \cdots = y_n$ or $\ell=1$.
\end{lemma}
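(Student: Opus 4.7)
The plan is to recast the inequality in normalized form and then invoke (or adapt) a Maclaurin-type inequality for real tuples due to Tao~\cite{Tao23-Maclaurin}.

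Let $\tilde e_k := e_k(y)/\binom{n}{k}$, $\mu := \tilde e_1 = \tfrac{1}{n}\sum_i y_i$, and $\sigma^2 := \tfrac{1}{n}\sum_i y_i^2 - \mu^2$. A short algebraic computation, using the identity $\tilde e_1^2 - \tilde e_2 = \sigma^2/(n-1)$, shows that the parenthesized expression on the right-hand side of the claim equals
\[
  \mu^2 + \frac{\ell-1}{n-1}\sigma^2 \;=\; \ell\,\tilde e_1^2 - (\ell-1)\,\tilde e_2,
\]
so the target inequality is equivalent to $\tilde e_\ell^2 \le \bigl(\ell\,\tilde e_1^2 - (\ell-1)\,\tilde e_2\bigr)^\ell$; the right-hand side is non-negative by Newton's first inequality $\tilde e_1^2 \ge \tilde e_2$, a restatement of Cauchy--Schwarz on $y$.

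I would first dispatch the low-degree cases. The case $\ell=1$ is trivial equality. For $\ell=2$, if $\tilde e_2 \ge 0$ the inequality becomes $2\tilde e_2 \le 2\tilde e_1^2$, again Newton's first inequality; and if $\tilde e_2<0$ it is immediate since $2\tilde e_1^2-\tilde e_2 \ge |\tilde e_2|$. For larger $\ell$, the plan is to invoke Tao's Maclaurin-type inequality for real tuples, which bounds $|\tilde e_\ell|^{2/\ell}$ in terms of $\tilde e_1$ and $\tilde e_2$ with exactly the correction factor $\frac{\ell-1}{n-1}$ on top; the latter factor reflects the without-replacement structure of sampling size-$\ell$ subsets from $[n]$ (for i.i.d.\ sampling one would just get $\mu^\ell$). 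Tao's proof is via an explicit sum-of-squares identity that extracts the correction in a closed form.

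If we wish to keep the proof self-contained, the main obstacle is re-deriving the sharp constant $\frac{\ell-1}{n-1}$. A clean route is a variational argument: fixing $\sum_i y_i$ and $\sum_i y_i^2$, the Lagrange conditions force the extremum of $|e_\ell(y)|$ to be attained when the $y_i$ take at most two distinct values, $r$ coordinates equal to $a$ and $n-r$ equal to $b$. This reduces the claim to a one-parameter combinatorial identity in $r$ that can be verified by expanding the generating function $(1+at)^r(1+bt)^{n-r}$ and tracking the coefficient of $t^\ell$; the extreme case $r=1$ already saturates the bound for $\ell=2$ but becomes slack for $\ell \ge 3$, so it suffices to check that no interior $r$ does better. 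The equality case ``all $y_i$ equal, or $\ell = 1$'' then falls out by tracking when the extremizer collapses to a constant vector, i.e., when $a = b$ is forced by the constraints.
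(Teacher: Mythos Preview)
Your normalization is correct and matches the paper exactly: setting $s_k := \tilde e_k$, the target becomes $|s_\ell|^{2/\ell} \le (\ell-1)(s_1^2 - s_2) + s_1^2$, which is precisely the paper's Lemma~\ref{lemma:tao}. So you have reduced to the right sub-lemma.

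The gap is in how you establish that sub-lemma. You describe Tao's argument as ``an explicit sum-of-squares identity that extracts the correction in a closed form''; this is not what happens, and you don't supply a proof. The paper's (and Tao's) mechanism is different and quite short: one views $(s_0,\dots,s_n)$ as the normalized coefficients of the real-rooted polynomial $\prod_i (z-y_i)$, observes that differentiation preserves real-rootedness (Rolle), and hence the truncated tuple $(s_0,\dots,s_\ell)$ is itself attainable by some $\ell$-tuple $(y_1',\dots,y_\ell')$. This reduces the claim to the case $\ell = n$, where $|s_n|^{2/n} = (\prod_i y_i^2)^{1/n} \le \frac{1}{n}\sum_i y_i^2$ by AM--GM, and Newton's identity converts the right-hand side into $n s_1^2 - (n-1)s_2$. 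You never mention real-rootedness, derivatives, or the truncation step, which is the entire content of the argument; without it you are simply citing the conclusion.

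Your fallback variational route is plausible in outline but is not a proof as written. The Lagrange step fixes $\sum y_i$ and $\sum y_i^2$ and concludes ``at most two distinct values,'' but you would need to justify that the maximum exists and is attained at an interior critical point (or handle boundary behavior), and then you still owe the verification over the one-parameter family $(r,a,b)$. Saying ``it suffices to check that no interior $r$ does better'' is the whole difficulty; the generating-function expansion of $(1+at)^r(1+bt)^{n-r}$ does not by itself yield the sharp constant $\frac{\ell-1}{n-1}$ without further work. By contrast, the truncation-plus-AM--GM argument is three lines and gives the sharp constant immediately, with the equality case falling out of the AM--GM equality condition (all $y_i'^2$ equal, which for a real-rooted truncation forces all $y_i$ equal when $\ell \ge 2$).
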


Specializing to $y \in \pmo^n$, the elementary symmetric polynomial $\sum_{\abs{S}=\ell} y^S$ is simply the degree-$\ell$ (shifted) Krawtchouk polynomial $\sK(\ell,\abs{y})$.
In this case, we always have $\sum_{i=1}^n y_i^2 = n$, and hence we obtain \Cref{cor:bijlsv-ineq}.

\Cref{cor:bijlsv-ineq} appeared in \cite{BIJLSV21}
with an extra factor of $c^\ell$.  \Cref{lemma:elem-sym-ineq} shows that the same inequality holds even when $y_1, \ldots, y_n$ are arbitrary real numbers.
A similar-looking but incomparable inequality, first proved in \cite{GY20}, showed that
\begin{align} \label{eq:GY-ineq}
  \abs[\Bigg]{ \sum_{S \subseteq [n]: \abs{S}=\ell} y^S }
  \le O\left(\frac{k}{\ell}\right)^{\frac{\ell}{2}} \max_{k'\in\{k,k+1\}} \left(\abs[\Big]{\sum_{S \subseteq [n]: \abs{S}=k'} y^S } \right)^{\frac{\ell}{k'}} ,
\end{align}
Using a different approach, Tao~\cite{Tao23-Maclaurin} recently sharpened this inequality to
\begin{align} \label{eq:Tao-ineq}
  \abs[\Bigg]{ \sum_{S \subseteq [n]: \abs{S}=\ell} y^S }
\le O\left(\frac{1}{\ell}\right)^{\frac{\ell}{2}} \max_{k'\in\{k,k+1\}} \left(\abs[\Big]{\sum_{S \subseteq [n]: \abs{S}=k'} y^S } \right)^{\frac{\ell}{k'}} ,
\end{align}
confirming a conjecture made on MathOverflow, see \url{https://mathoverflow.net/q/446254}.
Note that specializing to the case $k=1$, and using the inequality $\abs{\sum_{1\le i < j \le n} y_i y_j} \le \frac{1}{2}\sum_{i=1}^n y_i^2$, both \Cref{eq:GY-ineq,eq:Tao-ineq} imply a weaker form of \Cref{lemma:elem-sym-ineq}.
In the other direction, Tao~\cite{Tao23-private} observed that one cannot replace the quantity $\sum_{i=1}^n y_i^2$ in \Cref{lemma:elem-sym-ineq} with $\abs{\sum_{1\le i < j \le n} y_i y_j}$, as otherwise when $n$ is the square of an even number, for $y \in \pmo^n$ such that $\sum_{i=1}^n y_i = \sqrt{n}$, we have $\sum_{1 \le i < j \le n} y_i y_j = 0$ and the inequality fails at $\ell = n$.

We note that \Cref{lemma:elem-sym-ineq} can be obtained by a slight modification of both proofs in \cite{GY20,Tao23-Maclaurin}.
Here we follow the approach taken in \cite{Tao23-Maclaurin}, as it gives a sharper constant and the argument is cleaner.

\subsection{Proof of \Cref{lemma:elem-sym-ineq}}
Our approach is based on \cite{Tao23-Maclaurin}, which relies on several basic properties of real-rooted polynomials.
We say an $(n+1)$-tuple of real numbers $(s_0, \ldots, s_n)$ is \emph{attainable} if the polynomial
\[
  \sum_{k=0}^n (-1)^k \binom{n}{k} s_k z^{n-k}
\]
is monic with all roots real.
By its real-rootedness, we can factor the polynomial as
\[
  \sum_{k=0}^n (-1)^k \binom{n}{k} s_k(y) z^{n-k}
  = \prod_{i=1}^n (z-y_i)
\]
for some real numbers $y_1, \ldots, y_n$, where
\[
  s_k(y)
  = \frac{1}{\binom{n}{k}} \sum_{\abs{S}=k} y^S
  = \frac{1}{\binom{n}{k}} \sum_{1 \le i_1 < \cdots < i_k \le n} y_{i_1} \cdots y_{i_k} .
\]
Conversely, given an $n$-tuple of real numbers $y = (y_1, \ldots, y_n)$, we can define $s_k(y)$ as above to obtain an attainable tuple.
We will use the following truncation property of attainable tuples.

\begin{fact}[Truncation] \label{fact:truncation}
  Let $(s_0, \ldots, s_n)$ be an attainable tuple.
  Then $(s_0, \ldots, s_\ell)$ is attainable for every $1 \le \ell \le n$.
\end{fact}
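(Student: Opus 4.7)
The plan is to prove Fact \ref{fact:truncation} by a standard differentiation argument: we show that the truncated polynomial associated to $(s_0,\ldots,s_\ell)$ is (up to a normalizing constant) exactly the $(n-\ell)$-fold derivative of the polynomial associated to $(s_0,\ldots,s_n)$, and then invoke Rolle's theorem to preserve real-rootedness.

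More concretely, let $P(z) := \sum_{k=0}^n (-1)^k \binom{n}{k} s_k z^{n-k}$, which by hypothesis is a monic polynomial of degree $n$ with all real roots. The first step is the direct computation
\[
  \frac{d^{n-\ell}}{dz^{n-\ell}} P(z)
  = \sum_{k=0}^{\ell} (-1)^k \binom{n}{k} s_k \frac{(n-k)!}{(\ell-k)!} z^{\ell-k}
  = \frac{n!}{\ell!} \sum_{k=0}^{\ell} (-1)^k \binom{\ell}{k} s_k z^{\ell-k},
\]
using only the identity $\binom{n}{k}\frac{(n-k)!}{(\ell-k)!} = \frac{n!}{\ell!}\binom{\ell}{k}$. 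Thus, up to the positive scalar $n!/\ell!$, the $(n-\ell)$-fold derivative of $P$ equals the polynomial $P_\ell(z) := \sum_{k=0}^{\ell} (-1)^k \binom{\ell}{k} s_k z^{\ell-k}$ associated to the truncated tuple.

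The second step is to conclude that $P_\ell$ is real-rooted. Since $P$ has all real roots, iterated application of Rolle's theorem shows that every derivative of $P$ has all real roots; in particular $\frac{d^{n-\ell}}{dz^{n-\ell}} P(z)$ has all real roots, and hence so does $P_\ell$. Finally, $P_\ell$ is monic because its leading coefficient is $s_0 = 1$, which in turn is forced by $P$ being monic. Therefore $(s_0,\ldots,s_\ell)$ is attainable in the sense of the definition, completing the proof.

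I do not expect a real obstacle here, as the argument is essentially a one-line consequence of Rolle's theorem once the derivative is recognized as the right normalization. The only care needed is to verify the combinatorial identity converting $\binom{n}{k}(n-k)!/(\ell-k)!$ into $(n!/\ell!)\binom{\ell}{k}$, and to note that real-rootedness of $P$ implies not just simple real roots but that all complex roots are real (with multiplicity), so that Rolle's theorem applies to multiplicities as well via the standard extension.
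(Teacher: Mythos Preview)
Your proof is correct and follows essentially the same approach as the paper: both recognize that the polynomial associated to the truncated tuple is (a scalar multiple of) a derivative of the original polynomial and then invoke Rolle's theorem to conclude real-rootedness. The only cosmetic difference is that the paper differentiates once to pass from $(s_0,\ldots,s_n)$ to $(s_0,\ldots,s_{n-1})$ and iterates, whereas you take the $(n-\ell)$-fold derivative directly and verify the identity $\binom{n}{k}\frac{(n-k)!}{(\ell-k)!}=\frac{n!}{\ell!}\binom{\ell}{k}$ in one shot.
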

\begin{proof}
  It suffices to show that $(s_0, \ldots, s_{n-1})$ is attainable.
  Write $s_k := s_k(y_1, \ldots, y_n)$ for some real numbers $y_1, \dots, y_n$.
  By Rolle's theorem, between every two consecutive real roots of a polynomial, there is a real root of its derivative.
  Thus the derivative of a real-rooted polynomial is also real-rooted.
  Therefore, the polynomial
  \begin{align*}
    \frac{1}{n} \cdot \frac{d}{dz} \sum_{k=0}^n (-1)^k \binom{n}{k} s_k(y) z^{n-k}
    &= \sum_{k=0}^{n-1} (-1)^k \frac{n-k}{n} \binom{n}{k} s_k(y) z^{n-1-k} \\
    &= \sum_{k=0}^{n-1} (-1)^k \binom{n-1}{k} s_k(y) z^{n-1-k}
  \end{align*}
  is monic and real-rooted, showing that $(s_0, \ldots, s_{n-1})$ is also attainable.
\end{proof}

\begin{remark}
  One should view $s_\ell$ as $s_\ell = \prod_{i=1}^\ell y'_i$ for some $y'_1, \ldots, y'_\ell \in \R$, instead of $s_\ell = \binom{n}{\ell}^{-1} \sum_{\abs{S}=\ell} y^S$ for some $y_1, \ldots, y_n \in \R$ such that $s_n = \prod_{i=1}^n y_i$.
\end{remark}

\Cref{lemma:elem-sym-ineq} relies on the following slight refinement in Tao's argument.

\begin{lemma} \label{lemma:tao}
  Let $(s_0, \ldots, s_n)$ be an attainable tuple.
  Then for every $1 \le \ell \le n$,
  \[
    \abs{s_\ell}^{\frac{2}{\ell}}
    \le (\ell-1) \cdot (s_1^2 - s_2) + s_1^2 .
  \]
\end{lemma}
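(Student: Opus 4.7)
The plan is to invoke the truncation property (Fact~\ref{fact:truncation}) to reduce to the case $n=\ell$, where the inequality collapses to AM--GM applied to the squares of the roots. The key observation is that the left-hand side depends only on $s_\ell$ and the right-hand side only on $s_1,s_2$, so truncating the tuple changes neither side.

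First I would apply Fact~\ref{fact:truncation} to conclude that $(s_0,s_1,\ldots,s_\ell)$ is itself attainable. Hence there exist real numbers $y_1,\ldots,y_\ell$ with
\[
\sum_{k=0}^\ell (-1)^k \binom{\ell}{k} s_k z^{\ell-k} = \prod_{i=1}^\ell (z - y_i).
\]
Comparing coefficients yields $s_\ell=\prod_i y_i$, $s_1=\frac{1}{\ell}\sum_i y_i$, and $s_2=\binom{\ell}{2}^{-1}\sum_{i<j}y_iy_j$. Crucially the numerical values $s_0,\ldots,s_\ell$ are unchanged from the original tuple; only the defining polynomial (and its roots) has been shortened.

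Next, using the identity $2\sum_{i<j}y_iy_j=\bigl(\sum_i y_i\bigr)^2-\sum_i y_i^2$, a short calculation with common denominator $\ell^2(\ell-1)$ gives
\[
(\ell-1)(s_1^2-s_2)+s_1^2 = \frac{1}{\ell}\sum_{i=1}^\ell y_i^2.
\]
Since $|s_\ell|^{2/\ell}=(y_1^2\cdots y_\ell^2)^{1/\ell}$, the claim is equivalent to
\[
(y_1^2\cdots y_\ell^2)^{1/\ell} \le \frac{1}{\ell}\sum_{i=1}^\ell y_i^2,
\]
which is AM--GM applied to the nonnegative reals $y_1^2,\ldots,y_\ell^2$.

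I do not anticipate a real obstacle. The only conceptual point is recognizing that the inequality is ``independent of $n$'' — it involves only $s_1$, $s_2$, and $s_\ell$ — so the truncation step loses no information. Once this is seen, the proof is a line of algebra plus one invocation of AM--GM. (As a sanity check: when $\ell=2$ the inequality reduces to $|s_2|\le 2s_1^2-s_2$, which is Newton's inequality $s_1^2\ge s_2$ on the $s_2\ge 0$ side and trivial on the $s_2<0$ side, matching the $\ell=2$ case of the AM--GM reduction.)
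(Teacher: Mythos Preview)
Your proposal is correct and follows essentially the same route as the paper: truncate to $n=\ell$ via Fact~\ref{fact:truncation}, apply AM--GM to the squared roots, and use the Newton identity $\sum_i y_i^2 = (\sum_i y_i)^2 - 2\sum_{i<j} y_i y_j$ to rewrite $\frac{1}{\ell}\sum_i y_i^2$ as $(\ell-1)(s_1^2-s_2)+s_1^2$. The only difference is cosmetic ordering---the paper applies AM--GM first and then identifies the right-hand side, whereas you compute the right-hand side first.
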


\begin{proof}
  By the truncation property (\Cref{fact:truncation}), it suffices to consider the case $\ell=n$.
  Write $s_k := s_k(y_1, \ldots, y_n)$ for some $y = (y_1, \ldots y_n) \in \R^n$.
  By the AM-GM inequality, we have
  \[
    \abs{s_n(y)}^{\frac{2}{n}}
    = (y_1^2 \cdots y_n^2)^{\frac{1}{n}}
    \le \frac{1}{n} \sum_{i=1}^n y_i^2 .
  \]
  By the Newton identity we have
  \begin{align*}
    \sum_{i=1}^n y_i^2
    = \Bigl(\sum_{i=1}^n y_i \Bigr)^2 - 2 \sum_{1 \le i<j \le n} y_i y_j
    = n^2 s_1(y_1, \ldots, y_n)^2 - 2 \binom{n}{2} s_2(y_1, \ldots, y_n) .
  \end{align*}
  Therefore,
  \[
    \abs{s_n(y)}^{\frac{2}{n}}
    \le n s_1(y)^2 - (n-1)s_2(y)
    = (n-1) \bigl( s_1(y)^2 - s_2(y) \bigr) + s_1(y)^2 . \qedhere
  \]
\end{proof}

\Cref{lemma:elem-sym-ineq} immediately follows from \Cref{lemma:tao} by un-normalizing $s_\ell, s_1$ and $s_2$.

\begin{proof}[Proof of \Cref{lemma:elem-sym-ineq}] 
  Let $S_k(y) := \binom{n}{k} s_k(y) = \sum_{\abs{S}=k} y^S$.
  Applying \Cref{lemma:tao}, we have
  \begin{align*}
      \abs{S_\ell}^{\frac{2}{\ell}}
      &\le \binom{n}{\ell}^{\frac{2}{\ell}} \Bigl( (\ell-1) (s_1^2 - s_2) + s_1^2 \Bigr) \\
      &= \binom{n}{\ell}^{\frac{2}{\ell}} \biggl( (\ell-1) \Bigl( \frac{S_1^2}{n^2} - \frac{2S_2}{n(n-1)} \Bigr) + \frac{S_1^2}{n^2} \biggr) \\
      &= \binom{n}{\ell}^{\frac{2}{\ell}} \biggl( (\ell-1) \Bigl( \frac{S_1^2 - 2S_2}{n(n-1)} - \frac{S_1^2}{n^2(n-1)} \Bigr) + \frac{S_1^2}{n^2} \biggr) \\
      &= \binom{n}{\ell}^{\frac{2}{\ell}} \biggl( \frac{\ell-1}{n-1} \Bigl( \frac{S_1^2 - 2S_2}{n} - \frac{S_1^2}{n^2} \Bigr) + \frac{S_1^2}{n^2} \biggr) \\
      &= \binom{n}{\ell}^{\frac{2}{\ell}} \biggl( \frac{\ell-1}{n-1} \Bigl( \frac{S_1^2 - 2S_2}{n} \Bigr) + \left(1 - \frac{\ell-1}{n-1} \right) \frac{S_1^2}{n^2} \biggr) .
  \end{align*}
  Applying Newton's identity, i.e., $S_1^2 - 2S_2 = \sum_{i=1}^n y_i^2$, completes the proof.
\end{proof}

We now prove \Cref{prop:Harm-bound-H}.
We note that a similar argument also appears in \cite[Lemma 2.1]{Pol19}.
For completeness we provide a self-contained proof here.
\begin{proof}[Proof of \Cref{prop:Harm-bound-H}]
First note
$$
(1+z)^{(n+t)/2}(1-z)^{(n-t)/2}=\sum_{\ell=0}^n
\sK(\ell,t)z^\ell.
$$
Let $r=|z|$. The logarithmic function is known to be concave: 
\[
  \alpha\log_2(u)+(1-\alpha)\log_2(v)\leq \log_2(\alpha u+(1-\alpha) v).
\]
for any positive $u$ and $v$.
Using concavity and the observation   $|1+z|^2+|1-z|^2=2+2|z|^2=1+z+\overline{z}+|z|^2+1-z-\overline{z}+|z|^2=2+2|z|^2=2+2r^2$ gives 
\begin{align*}
  H(\alpha)+\log_2\big(|1+z|^{2\alpha}|1-z|^{2(1-\alpha)}\big)
  &= \alpha\log_2\Big(\frac{|1+z|^2}{\alpha}\Big) + (1-\alpha)\log_2\Big(\frac{|1-z|^2}{1-\alpha}\Big) \\
  &\le \log_2(|1+z|^2+|1-z|^2) \\
  &=\log_2(2+2r^2) .
\end{align*}
For an integer $\ell$ with $0\leq \ell\leq n$
consider the Laurent polynomial 
$$
p(z)=\frac{(1+z)^{(n+t)/2}(1-z)^{(n-t)/2}}{z^\ell}.
$$
If $r^2=\beta/(1-\beta)$, then we have
\begin{align*}
  \log|p(z)|
  &= \frac{n}{2}\Big(\log_2\big(|1+z|^{2\alpha}|1-z|^{2(1-\alpha)}\big)-\beta\log_2\big(|z|^2\big)\Big) \\
  &\leq \frac{n}{2}\Big(\log_2(2+2r^2)-\beta\log_2(r^2)-H(\alpha)\Big) \\
  &= \frac{n}{2}\Big(\log_2\tfrac{2}{1-\beta}-
\beta\log_2\tfrac{\beta}{1-\beta} - H(\alpha)\Big) \\
  &= \frac{n}{2}\Big(1+H(\beta)-H(\alpha)\Big) .
\end{align*}
The coefficient of $1=z^0$ in $p(z)$ is $\sK(\ell,t)$, so it follows that
$\sK(\ell,t)=\int_0^1p(re^{2\pi i \theta})\,d\theta$. We conclude that
\[
\log_2|\sK(\ell,t)|
\le \log_2\Big(\int_0^1 |p(re^{2\pi i\theta})|\,d\theta\Big)
\le \max_{|z|=r}\log_2 |p(z)|
\le\frac{n}{2}\Big(1-H(\alpha)+H(\beta)\Big). \qedhere
\]
\end{proof}

\subsection{Lower bound on Krawtchouk polynomials}

In this section, we prove \Cref{claim:sK-lb}.
It follows from an inequality on Krawtchouk polynomials which appears to be well known in the coding theory literature~\cite{MRRW77,KL95,Kra01,KS21}, and essentially follows from Newton's inequality.

For convenience we will work with the standard (non-shifted) definition of Krawtchouk polynomials $K(\ell,t) = \sK(\ell,n-2t)$.
Note that in the claim below, we intentionally swap $t$ and $\ell$.
\begin{claim} \label{claim:K-lb}
  $K(n/2-t,\ell) \ge \binom{n}{n/2-t} (t/n)^\ell$ for $t \ge \sqrt{\ell(n-\ell)}$.
\end{claim}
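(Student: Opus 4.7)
The plan is to prove \Cref{claim:K-lb} by exploiting the real-rootedness of the Krawtchouk generating polynomial together with Newton's inequality, following the standard approach in the coding-theory literature cited above.

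The core observation is that the polynomial $P_x(z) := (1+z)^{n-x}(1-z)^x = \sum_k K(k,x) z^k$ has all real roots: namely, $\pm 1$ with multiplicities $n-x$ and $x$. Newton's inequality thus yields log-concavity of the normalized coefficients $R_k := K(k,x)/\binom{n}{k}$ in $k$, i.e., $R_k^2 \ge R_{k-1} R_{k+1}$. Combined with the standard three-term recurrence $(n-k)R_{k+1} = (n-2x)R_k - k R_{k-1}$ and the base values $R_0 = 1$, $R_1 = (n-2x)/n$, I will establish by induction on $k$ that in the regime $(n-2x)^2 \ge 4k(n-k)$ one has $R_k \ge ((n-2x)/(2n))^k$. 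I will then translate this into the statement of the claim by applying the Krawtchouk reciprocity $\binom{n}{x} K(k, x) = \binom{n}{k} K(x, k)$ (valid for nonnegative integers $x, k \le n$) with $x = n/2 - t$ and $k = \ell$: this rewrites the claim $K(n/2 - t, \ell) \ge \binom{n}{n/2-t}(t/n)^\ell$ in the equivalent form $K(\ell, n/2 - t) \ge \binom{n}{\ell}(t/n)^\ell$, which is exactly the bound above after substituting $n - 2x = 2t$. The reciprocity itself follows by summing the combinatorial identity $\binom{n}{x}\binom{x}{j}\binom{n-x}{k-j} = \binom{n}{k}\binom{k}{j}\binom{n-k}{x-j}$ with alternating signs in $j$; both sides count pairs $(S, T)$ of subsets of $[n]$ with $|S| = k$, $|T| = x$, $|S \cap T| = j$.

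The main obstacle will be the inductive step. Because the recurrence subtracts $k R_{k-1}$ from the leading positive term $(n-2x) R_k$, one must use the hypothesis $(n-2x)^2 \ge 4k(n-k)$ — precisely the discriminant condition under which the saddle-point equation $(n-k)r^2 - (n-2x)r + k = 0$ admits real roots — to ensure that the leading term dominates. Newton's log-concavity plays a crucial role by providing an upper bound on $R_{k-1}$ relative to $R_k$, which prevents the subtracted correction from destroying the geometric lower bound with base $1/2$.
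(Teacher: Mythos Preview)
Your proposal is correct and follows essentially the same route as the paper. Both arguments iterate the ratio lower bound $K(i,\ell+1)/K(i,\ell)\ge(n-2i)/(2n)$ (valid when $(n-2i)^2\ge 4\ell(n-\ell)$) from the base value $K(i,0)=\binom{n}{i}$ with $i=n/2-t$. The paper simply quotes this ratio bound as Theorem~8 of Krasikov and iterates it $\ell$ times; your application of reciprocity just rewrites the same iteration as an induction on the degree of $R_k=K(k,i)/\binom{n}{k}=K(i,k)/\binom{n}{i}$, and you then propose to \emph{re-prove} Krasikov's bound from Newton's inequality and the three-term recurrence rather than cite it.

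One point you gloss over: combining Newton's inequality $q_{k+1}\le q_k$ (where $q_k=R_k/R_{k-1}$) with the recurrence only yields the quadratic constraint $(n-k)q_k^2-(n-2x)q_k+k\ge 0$, i.e.\ $q_k\le r_-^{(k)}$ or $q_k\ge r_+^{(k)}$. Carrying the induction requires staying on the upper branch $q_k\ge r_+^{(k)}$ throughout --- this is precisely the content of Krasikov's lemma and needs a slightly stronger inductive hypothesis than the bare bound $R_k\ge((n-2x)/(2n))^k$ you state (for instance, maintaining $q_k\ge r_+^{(k)}$ directly, using that $r_+^{(k)}$ is a fixed point of the increasing map $q\mapsto\frac{(n-2x)q-k}{(n-k)q}$). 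This step is standard but should be made explicit.
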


\Cref{claim:K-lb} follows from the fact that $K(t,0) = \binom{n}{t}$ and then applying the following lemma iteratively $\ell$ times.

\begin{lemma}[Theorem 8 in \cite{Kra01}]
  For $\ell,i$ such that $(n-2i)^2 \ge 4\ell(n-\ell)$ (so that $s = \sqrt{(n-2i)^2 - 4\ell(n-\ell)}$ is real and nonnegative),
  \[
    \frac{K(i,\ell+1)}{K(i,\ell)} > \frac{n-2i+s}{2(n-\ell)} \ge \frac{n-2i}{2n} .
  \]
\end{lemma}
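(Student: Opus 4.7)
The plan is to prove the Lemma via a three-term recurrence for $K(i,\ell)$ in the weight variable $\ell$ (with the degree $i$ held fixed), followed by a simple induction on $\ell$. Although the paper hints at Newton's inequality, I would work directly with the recurrence, since the quadratic $(n-\ell)y^2 - (n-2i)y + \ell = 0$ whose roots are precisely $y_\pm(\ell) := [(n-2i)\pm s]/(2(n-\ell))$ arises naturally as its fixed-point equation, and so cleanly explains the form of the claimed bound.

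First I would derive the recurrence. Starting from $K(i,\ell) = [w^i](1-w)^\ell(1+w)^{n-\ell}$, swapping the order of summation and invoking the binomial theorem on $((1+z)+w(1-z))^n$ gives the generating-function identity
\[
  \sum_{\ell=0}^n \binom{n}{\ell} K(i,\ell)\, z^\ell \;=\; \binom{n}{i}(1-z)^i (1+z)^{n-i} \;=:\; P(z).
\]
A direct computation yields $(1-z^2) P'(z) = (n - 2i - nz) P(z)$; matching coefficients of $z^\ell$ and simplifying via the standard identities $(\ell+1)\binom{n}{\ell+1} = (n-\ell)\binom{n}{\ell}$ and $(n-\ell+1)\binom{n}{\ell-1} = \ell\binom{n}{\ell}$ produces the three-term recurrence
\[
  (n-\ell)\, K(i,\ell+1) \;=\; (n-2i)\, K(i,\ell) \;-\; \ell\, K(i,\ell-1).
\]
Setting $y_\ell := K(i,\ell+1)/K(i,\ell)$, this reads $(n-\ell) y_\ell = (n-2i) - \ell/y_{\ell-1}$, so $y_\pm(\ell)$ are exactly the two fixed points of the update map $T_\ell(y) := [(n-2i) - \ell/y]/(n-\ell)$.

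The induction then runs as follows. The base case $\ell = 0$ is a direct computation: $K(i,1) = \binom{n-1}{i} - \binom{n-1}{i-1} = \binom{n}{i}(n-2i)/n$, so $y_0 = (n-2i)/n = y_+(0)$. For the inductive step, $T_\ell$ is increasing in $y > 0$, and I claim $y_+$ is non-increasing in $\ell$ on the valid range. Granted this monotonicity, the hypothesis $y_{\ell-1} \ge y_+(\ell-1) \ge y_+(\ell)$ gives $y_\ell = T_\ell(y_{\ell-1}) \ge T_\ell(y_+(\ell)) = y_+(\ell)$; strict inequality for $\ell \ge 1$ propagates from the strict monotonicities involved. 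The secondary inequality $(n-2i+s)/(2(n-\ell)) \ge (n-2i)/(2n)$ follows immediately upon cross-multiplying and using $s,\ell \ge 0$.

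The main obstacle is verifying $\partial y_+/\partial \ell \le 0$ on the valid range. A direct differentiation shows this derivative has the same sign as $(n-2i)\bigl((n-2i)+s\bigr) - 2n(n-\ell)$; expanding $s = \sqrt{(n-2i)^2 - 4\ell(n-\ell)}$, squaring, and simplifying, one reduces the desired inequality to $(n-2i)^2 \le n^2$, which is immediate. While elementary, this step is where the Lemma's hypothesis is genuinely used (beyond ensuring $s$ is real), and so cannot be avoided.
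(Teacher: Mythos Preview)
The paper does not prove this lemma; it cites it as Theorem~8 of \cite{Kra01} and remarks only that it ``essentially follows from Newton's inequality.'' So there is no argument in the paper to compare against directly.

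Your proof is correct in the parameter range that matters here (namely $n-2i>0$ and $\ell$ between $0$ and the smaller root of $(n-2i)^2 = 4\ell(n-\ell)$, which is the range the induction traverses and the only range the paper needs). The recurrence is right, and the monotonicity of $y_+(\ell)$ does reduce to $(n-2i)^2 \le n^2$ after squaring; the squaring is legitimate because on this range $\ell \le n/2$, so $2n(n-\ell)\ge n^2\ge (n-2i)^2$. One small mis-statement: the lemma's hypothesis is \emph{not} used in the monotonicity calculation itself (the reduction to $(n-2i)^2\le n^2$ holds unconditionally); the hypothesis only enters to keep $s(\ell')$ real at every step $\ell'\le\ell$, so that $y_+(\ell')$ is defined and the induction makes sense.

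As for the route the paper alludes to: Newton's inequality applied to the real-rooted polynomial $\sum_\ell \binom{n}{\ell} K(i,\ell)\, z^\ell = \binom{n}{i}(1-z)^i(1+z)^{n-i}$ gives log-concavity $y_\ell \le y_{\ell-1}$; feeding this into the same recurrence yields $(n-\ell)y_\ell^2 - (n-2i)y_\ell + \ell \ge 0$, hence $y_\ell \ge y_+(\ell)$ or $y_\ell \le y_-(\ell)$, and one then still needs a short side argument to stay on the upper branch. Your monotonicity-of-$y_+$ approach sidesteps that branch-selection step entirely, at the cost of the derivative computation --- a fair trade.
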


We can now prove \Cref{claim:sK-lb} using the following fact and translating the statement in terms of $\sK(\cdot,\cdot)$.
\begin{fact}
  $\binom{n}{t} K(\ell,t) = \binom{n}{\ell} K(t,\ell)$ .
\end{fact}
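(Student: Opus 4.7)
The plan is to prove the reciprocity identity $\binom{n}{t}K(\ell,t) = \binom{n}{\ell}K(t,\ell)$ using the explicit closed form of the Krawtchouk polynomials together with a symmetric rewrite of a trinomial product.

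First, I would recall the closed-form expansion
\[
  K(\ell, t) = \sum_{j=0}^{\ell} (-1)^j \binom{t}{j} \binom{n-t}{\ell-j},
\]
which one obtains directly from the definition $K(\ell,t) = \sK(\ell, n-2t) = \sum_{|S|=\ell} z^S$ for a string $z \in \pmo^n$ of Hamming weight $t$: expanding the product $\prod_{i=1}^n(1 + w z_i) = (1+w)^{n-t}(1-w)^t$ and extracting the coefficient of $w^\ell$. Alternatively, one can choose which of the $\ell$ coordinates lie in the support of $z$.

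Next, I would multiply through by $\binom{n}{t}$ and use the elementary identity
\[
  \binom{n}{t}\binom{t}{j}\binom{n-t}{\ell-j}
  = \frac{n!}{j!\,(t-j)!\,(\ell-j)!\,(n-t-\ell+j)!}.
\]
The right-hand side counts the number of ordered choices of three disjoint subsets of $[n]$ of sizes $j,\; t-j,\; \ell-j$ (with the remaining $n-t-\ell+j$ elements forming the fourth block), and is manifestly symmetric under the swap $t \leftrightarrow \ell$ (with $j$ fixed). Swapping $t$ and $\ell$ in the identity above yields
\[
  \binom{n}{t}\binom{t}{j}\binom{n-t}{\ell-j}
  = \binom{n}{\ell}\binom{\ell}{j}\binom{n-\ell}{t-j},
\]
so the two sums $\binom{n}{t} K(\ell,t)$ and $\binom{n}{\ell} K(t,\ell)$ agree term-by-term once we note that $\binom{t}{j}=0$ for $j>t$ and $\binom{\ell}{j}=0$ for $j>\ell$, so extending either summation to the range $0 \le j \le \min(t,\ell)$ (or $0 \le j \le \max(t,\ell)$) produces the same sum on both sides.

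There is no substantive obstacle here; the statement is a standard reciprocity of Krawtchouk polynomials and the proof reduces to a symmetric-multinomial identity. The only mild point of care is bookkeeping of the summation indices and ensuring that the correspondence $K(\ell,t) = \sK(\ell, n-2t)$ from the paper's convention is used consistently when deriving the closed form.
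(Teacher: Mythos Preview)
Your proof is correct. The paper actually states this reciprocity as a bare \emph{Fact} without proof, so there is no argument to compare against; your symmetric-multinomial rewrite of $\binom{n}{t}\binom{t}{j}\binom{n-t}{\ell-j}$ is a standard and clean way to establish it.
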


\begin{proof}[Proof of \Cref{claim:sK-lb}]
  We have
  \[
    \sK(\ell,t)
    = K\Bigl(\ell,\frac{n}{2}-\frac{t}{2}\Bigr)
    = \frac{\binom{n}{\ell}}{\binom{n}{\frac{n}{2}+\frac{t}{2}}} K\Bigl(\frac{n}{2}-\frac{t}{2},\ell\Bigr)
    \ge \binom{n}{\ell} \Bigl(\frac{t}{2n}\Bigr)^\ell . \qedhere
  \]
\end{proof}

\paragraph{Acknowledgments.} CHL thanks Salil Vadhan and Terence Tao for helpful discussion.

\newcommand{\etalchar}[1]{$^{#1}$}
\def\cprime{$'$}

\end{document}